\newcommand{\cmark}{\checkmark}
\definecolor{brightyellow}{RGB}{255,247,192}%{255,255,102}
\definecolor{lightchacki}{RGB}{245,237,205}%{235,220,187}
\newcommand{\cfg}[3]{\langle #1,\; #2,\; #3\rangle}
\newcommand{\update}[3]{#1[#2 \mapsto #3]}
\newcommand{\step}{\mathrel{\rightarrow}}
\newcommand{\pstep}{\mathrel{\Rightarrow}}
\newcommand\blfootnote[1]{%
	\begingroup
	\renewcommand\thefootnote{}\footnote{#1}%
	\addtocounter{footnote}{-1}%
	\endgroup
}
\definecolor{darkorange}{RGB}{255,140,0}
\newcommand{\xmark}{\textcolor{red}{\ding{55}}}
\definecolor{headerbg}{RGB}{220,230,241}
\newcommand{\Pre}{\mathsf{pre}}
\newcommand{\Post}{\mathsf{post}}
\newcommand{\toolname}{\textsc{Ser}}
\newcommand{\kw}[1]{\textbf{#1}}
\newcommand{\nondet}{\kw{?}}
\newcommand{\ifkw}{\kw{if}}
\newcommand{\elsekw}{\kw{else}}
\newcommand{\whilekw}{\kw{while}}
\newcommand{\yieldkw}{\kw{yield}}
\newcommand{\requestkw}{\kw{request}}
\newcommand{\sat}{\texttt{SAT}}
\newcommand{\unsat}{\texttt{UNSAT}}
\newcommand{\greencmark}{\textcolor{green}{\ding{51}}}
\lstdefinelanguage{CustomPseudoCode}{
	morekeywords={request, yield, return, if, else, while, and, or},
	morecomment=[l]{//},
	morestring=[b]",
	sensitive=true
}
	\title{Deciding Serializability in Network Systems}
	\author{
			Guy Amir\inst{1} \and
			Mark Barbone\inst{1} \and
			Nicolas Amat\inst{2} \and
			Jules Jacobs\inst{1,3}
		}
\institute{
	Cornell University, Ithaca, USA \\
	\email{ \{gda42, mlb494, jj758\}@cornell.edu} \\
	\and
	DTIS, ONERA, Universit\'e de Toulouse, Toulouse, France\\
	\email{ nicolas.amat@onera.fr}\\
	\and
	Jane Street Capital, New York City, USA \\
%	\email{ \{raz.yerushalmi, david.harel\}@weizmann.ac.il}\\
}
\let\oldmaketitle\maketitle
\renewcommand{\maketitle}{
  \oldmaketitle
  \pagestyle{plain}  % empty headers and footers on all pages
  \thispagestyle{plain}  % empty header and footer on the first page
}
\newlength{\subfigheight}
\begin{document}

\raggedbottom

%\begin{abstract}
%	\input{sections/0_abstract}
%\end{abstract}

\maketitle

\begin{abstract}
	\blfootnote{[*] This paper is an extended version of a paper with the same title presented at the \texttt{TACAS 2026} conference. See \url{https://etaps.org/2026/}.}
	We present the \toolname{} modeling language for automatically verifying \textit{serializability} of concurrent programs, i.e., whether every concurrent execution of the program is equivalent to some serial execution.
	\toolname{} programs are suitably restricted to make this problem decidable, while still allowing for an \textit{unbounded} number of concurrent threads of execution, each potentially running for an \textit{unbounded} number of steps.
	Building on prior theoretical results, we give the first automated end-to-end decision procedure that either proves serializability by producing a checkable certificate, or refutes it by producing a counterexample trace.
	We also present a network-system abstraction to which \toolname{} programs compile. Our decision procedure then reduces serializability in this setting to a Petri net reachability query.
	Furthermore, in order to scale, we curtail the search space via multiple optimizations, including Petri net slicing, semilinear-set compression, and Presburger-formula manipulation.
	We extensively evaluate our framework and show that, despite the theoretical hardness of the problem, it can successfully handle various models of real-world programs, including stateful firewalls, BGP routers, and more.
\end{abstract}

% \keywords{programming languages, static analysis, verification}

\section{Introduction}
\label{sec:introduction}

In the domain of concurrent systems, from databases to software-defined networks (SDNs)~\cite{KrRaVePaRoAzUh14,XiWeFoNiXi15}, a cornerstone correctness criterion is \emph{serializability}: every concurrent execution must produce outcomes equivalent to some serial ordering of requests. Violations of serializability can lead to subtle anomalies, such as lost updates in databases or routing cycles in SDNs.
While we can check serializability for a fixed number of requests with known execution traces (e.g., by enumerating all possible interleavings), the problem is undecidable for general programs, requiring techniques such as runtime verification or incomplete bounded model checking~\cite{WaSt06a,WaSt06b,FlFrYi08,FaMa08,SiMaWaGu11a,SiMaWaGu11b,Pa79,AlMcPe96,BiEn19}.

However, Bouajjani et al.~\cite{BoEmEnHa13} have shown (as a special case of bounded-barrier linearizability) that for programs with bounded-size state, this problem is decidable even for an \emph{unbounded} number of in-flight requests, each performing an \emph{unbounded} number of steps. The purpose of this paper is to make this theoretical decidability result a reality by designing the first decision procedure and putting forth practical algorithms that either prove serializability (with a proof certificate) or prove non-serializability (with a counterexample trace).
We illustrate the problem by example:

% examples in Listings~\ref{lst:MotivatingExample1Ser},~\ref{lst:MotivatingExample2NonSer}, and ~\ref{lst:MotivatingExample3Ser}, written in our modeling language called \toolname.

\noindent
\begin{minipage}[t]{0.55\textwidth}
	\begin{minipage}[t]{\textwidth}
		\begin{lstlisting}[caption={Without yielding (serializable)},
			label={lst:MotivatingExample1Ser}]
  // request handler           
  request main: 
      X := 1 // X is global
      y := X // y is local
      X := 0
      return y 
		\end{lstlisting}
	\end{minipage}
	\vspace{1em}
	\begin{minipage}[t]{\textwidth}
		\begin{lstlisting}[caption={With yielding (not serializable)},
			label={lst:MotivatingExample2NonSer}]
  request main: 
      X := 1 
      yield // another request
      y := X // can read 0!
      X := 0
      return y 	
		\end{lstlisting}
	\end{minipage}
\end{minipage}%
\hfill
\begin{minipage}[t]{0.35\textwidth}
	\begin{lstlisting}[caption={With yielding and a spin-lock (serializable)},
		label={lst:MotivatingExample3Ser}]
  request main: 
      // lock
      while (L == 1): 
          yield
      L := 1 

      X := 1
      yield
      y := X 
      X := 0

      // unlock    
      L := 0
      return y 
	\end{lstlisting}
\end{minipage}

These examples are written in our modeling language called \toolname.
A \toolname{} program has a set of named \textbf{request handlers} (one handler, \texttt{main}, in the examples) that are arbitrarily invoked concurrently by the external environment.
Each incoming request processes its request handler's body until it returns a value as its \textbf{response}. Concurrency is managed by the \(\yieldkw\)
 statement, which pauses the current request and gives other requests a chance to run. \toolname{} programs have uppercase \textbf{global shared variables} (\texttt{X} in the examples) and lowercase \textbf{request-local variables} (\texttt{y} in the examples).
The first program (Listing~\ref{lst:MotivatingExample1Ser}) is clearly serializable because there are no yields, and hence, no interleavings: each \texttt{main} request returns 1.
In the second program (Listing~\ref{lst:MotivatingExample2NonSer}), the \texttt{yield} allows interleavings that make the program \emph{non-serializable}. For instance, consider two concurrent requests to \texttt{main}:
%\begin{enumerate}
%\item 
Request A executes \texttt{[X := 1]} then yields to Request B; which then
%\item
executes \texttt{[X := 1]}, yields to itself, reads \texttt{X} (getting 1), sets \texttt{[X := 0]}, and returns 1.
Finally, 
%\item 
Request A resumes, reads \texttt{X} (now 0), and returns 0.
%\end{enumerate}
This produces the multiset \{(\texttt{main}, 0), (\texttt{main}, 1)\} of (request, response) pairs, which is impossible in any serial execution (where all \texttt{main} requests return 1 and never 0).
Of course, having \texttt{yield}s does not guarantee that an execution is necessarily not serializable, as observed in the third snippet (Listing~\ref{lst:MotivatingExample3Ser}). This program uses an additional lock variable (\texttt{L}), which guarantees that even if an interleaving occurs, the program is semantically equivalent to the first one.
These examples demonstrate that reasoning about serializability can be complex even for very simple programs with few requests running concurrently.
For a tour of additional examples, we refer the reader to Appendix~\ref{appendix:tour}.
%\vspace{-.5em}

\smallskip
\noindent
\textbf{Problem Definition.}
Formally, we define the \textbf{observable execution} of a \toolname{} program as a multiset of (request, response) pairs induced by a specific interleaving. The \textbf{observable behavior} of a \toolname{} program is the set of all possible observable executions that can occur such that the requests are executed concurrently to obtain their paired responses.
A \toolname{} program is \textbf{serializable} if every observable behavior is achievable serially (without interleavings). Differently put, removing all \texttt{yield} statements does not change the program's semantics.
\emph{This paper aims to present the \toolname{} language and decision procedure for this problem.} In particular, \toolname{} is the first toolchain to \textbf{automatically} prove serializability without requiring manual work by the user.

%\vspace{-.5em}
\smallskip
\noindent
\textbf{Challenges.}
To our knowledge, no prior implementation exists that can automatically generate proof certificates for this class of concurrent systems.
Why not?
Our decision procedure builds on Bouajjani et al.'s reduction from serializability to Petri net (PN) reachability~\cite{BoEmEnHa13}. However, since PN reachability is \texttt{Ackermann}-complete~\cite{CzWo22,Le22}, a naive implementation would fail on all but simple programs. 

%\vspace{-.5em}
\smallskip
\noindent
\textbf{Our Approach.}
To address this, we first introduce the abstraction of \textit{network systems} (NS) --- modeling concurrent programs where users send \textit{requests} that manipulate local and shared state before returning \textit{responses}. A \toolname{} program is compiled into a network system, on which our decision procedure operates via reduction to Petri net reachability and semilinear set analysis.
%
%As a backend solver, we use SMPT~\cite{AmDa23}, which is a state-of-the-art tool for Petri net reachability.
We note that while our approach is sound (never incorrectly claims serializability), the underlying reachability query may time out on complex instances, limiting completeness in practice (this is unavoidable, given the \texttt{Ackermann}-hardness of the problem).
Towards this end, we developed multiple optimizations to make the approach practical, including Petri net slicing, semilinear set compression, and additional manipulations with Presburger formulas.
As we demonstrate, these optimizations reduce the search space by \textit{orders of magnitude}, enabling us to scale to non-trivial programs.
Finally, 
we extensively evaluated our \toolname{} toolchain on various programs, covering a broad spectrum of features such as loops, branching, locks, and nondeterminism; as well as SDN-inspired examples such as stateful firewalls, BGP routers, and more.
To our knowledge, this leads to the first \emph{implemented} decision procedure that: (i) automatically \textit{proves} serializability for \textit{unbounded} executions; (ii) generates \textit{proof certificates}; and (iii) handles \textit{non-trivial programs}.

\smallskip
\noindent
\textbf{Contributions.}
%We present the following contributions:
%\begin{itemize}
%    \item 
%    Following a brief background in \Cref{sec:background}, 
We introduce in  \Cref{sec:problem-definition} our \toolname{} language and the Network System program abstraction that captures the essence of concurrent systems.
    In \Cref{sec:formal-results} we present the core decision procedure with proof certificates, and our various optimizations.
    %
%    presents techniques for semilinear set reductions and Petri-net reductions.
    The implementation of the \toolname{} toolchain is covered in \Cref{sec:implementation}, and its evaluation is presented in \Cref{sec:evaluation}. 
We discuss related work in \Cref{sec:related-work} and conclude in \Cref{sec:discussion}.
Our tool, benchmarks, and experiments are available as an accompanying artifact~\cite{ArtifactRepository}.
We also include an appendix with technical details and examples.
%\newpage
\vspace{-3.0pt}
\section{Problem Definition}
\label{sec:problem-definition}
%\guy{Can someone check Sections 3-4-5? Especially the formulations}

\subsection{Background}
\label{subsec:background}

\textbf{Petri nets.} 
A Petri net is $N=(P,T,\mathsf{pre},\mathsf{post},M_0)$ with a set of places $P$, a set of transitions $T$, flow functions $\mathsf{pre},\mathsf{post}:T\to\mathbb N^P$, and an initial marking (token distribution) $M_0\in\mathbb N^P$.  
A transition $t$ is \textit{enabled} at marking $M \in \mathbb N^P$ if $M\ge \mathsf{pre}(t)$ coordinate-wise, i.e., $M$ provides at least as many tokens as required by $\mathsf{pre}(t)$.
If an enabled transition \(t\) \textit{fires}, it produces the marking $M'$ (denoted $M\xrightarrow{t}M'$), with $M'= M - \mathsf{pre}(t) + \mathsf{post}(t)$ consuming input tokens and producing output tokens.
This can extend naturally to a sequence of firings $\sigma = t_1\cdots t_k$ (denoted $M \xrightarrow{\sigma} M'$), giving rise to a sequence of markings $M_0,\ldots,M_k$ with $M=M_0$, $M'=M_k$, and $M_i \xrightarrow{t_{i}} M_{i+1}$ for all $i$. 
We define the set $R(N)=\{M \mid \exists \sigma\in T^*. M_0 \xrightarrow{\sigma} M\}$ to include all markings reachable from the initial state $M_0$.
The \textit{reachability problem} asks, given a Petri net $N$ and a marking $M$, whether $M\in R(N)$. 
Specifically, we focus on reachability of a linear-constraint formula $\mathcal {F}$; it is \sat\ if some marking $M\in R(N)$ satisfies $\mathcal {F}$ (denoted $M \models \mathcal {F}$), and otherwise \unsat{} (see the toy example in 	Appendix~\ref{appendix:toyPN}).
Surprisingly, even the \textit{unbounded} case, where places hold arbitrarily many tokens, is decidable~\cite{Ma81,Ko82,La92}, although \texttt{Ackermann}-complete~\cite{CzWo22,Le22}.

\medskip
\noindent
\textbf{Verdict proofs.} 
If $\mathcal {F}$ is reachable, a witness sequence $\sigma\in T^*$ with $M_0\xrightarrow{\sigma}M$ and $M\models \mathcal {F}$ serves as a proof, and is verifiable by simulation of the Petri net.  
If $\mathcal {F}$ is unreachable, there exists~\cite{Le09} an inductive Presburger certificate $C$ proving non-reachability: 
(i) $M_0\models C$, (ii) $\forall t\in T \quad (M\models C \wedge M\xrightarrow{t}M')\Rightarrow M'\models C$, and (iii) $C\Rightarrow \neg \mathcal {F}$.

\medskip
\noindent
\textbf{Semilinear sets and Parikh’s theorem.}
A set $S\subseteq\mathbb N^k$ is \textit{semilinear} if 
$S=\bigcup_{i=1}^m \{\mathbf b_i+\sum_{j=1}^{r_i} n_j\mathbf p_{i,j}\mid n_j\in\mathbb N\}$ 
for some $\mathbf b_i,\mathbf p_{i,j}\in\mathbb N^k$. 
Such sets coincide with those definable by \textit{Presburger arithmetic}~\cite{Pr29}. 
By Parikh’s theorem~\cite{Parikh66}, the \textit{Parikh image} of any context-free language (CFL) is semilinear. 
For an alphabet $\Sigma=\{a_1,\dots,a_k\}$ and a word $w\in\Sigma^*$, the \emph{Parikh image} of $w$ is the vector
\(
\mathsf{Parikh}(w)\;=\;\bigl(|w|_{a_1},\dots,|w|_{a_k}\bigr)\in\mathbb{N}^k,
\)
where $|w|_{a_i}$ denotes the number of occurrences of the symbol $a_i$ in $w$.
For a language $L\subseteq\Sigma^*$, we write
\(
\mathsf{Parikh}(L)\;=\;\{\mathsf{Parikh}(w)\mid w\in L\}\subseteq\mathbb{N}^k
\).

%\medskip
%\noindent
%\textbf{Semilinear sets and Parikh’s theorem.} 
%A set $S\subseteq\mathbb N^k$ is \textit{semilinear} iff  
%$S=\bigcup_{i=1}^m \{\mathbf b_i+\sum_{j=1}^{r_i} n_j\mathbf p_{i,j}\mid n_j\in\mathbb N\}$,  
%for $\mathbf b_i,\mathbf p_{i,j}\in\mathbb N^k$.  
%Semilinear sets coincide with sets defined by \textit{Presburger arithmetic}~\cite{Pr29}.  
%By Parikh’s theorem~\cite{Parikh66}, the \textit{Parikh Image} of any context-free language is semilinear, with an effective construction.
%%
%For an alphabet $\Sigma=\{a_1,\dots,a_k\}$ and any word $w\in L\subseteq\Sigma^*$, the $\mathsf{Parikh}$ image is $\mathsf{Parikh}(w)=\{\,a_i^{|w|_{a_i}}\mid a_i\in\Sigma\,\}$, the multiset of symbols appearing in $w$ according to their number of occurrences.

\medskip
\noindent
\textbf{Deciding serializability in unbounded systems.} 
Bouajjani et al.~\cite{BoEmEnHa13} have proved that serializability in unbounded systems reduces to Petri net reachability, as a special case of \textit{bounded-barrier linearizability}.

\clearpage
\subsection{The SER Language}
%\subsection{Syntax}
Our \toolname{} syntax is defined as follows: 
%of Fig.~\ref{fig:syntax}.

\[
\begin{aligned}
	\mathbf{Expression}\quad e ::= &\ 0 \mid 1 \mid 2 \mid \dots            && \text{numeric constant}\\
	&\mid \nondet                            && \text{nondeterministic\ value (0/1)}\\
%	&\mid x := e
%	&&\text{write to local variable}\\
%	&\mid x
%	&&\text{read from local variable}\\
%	&\mid X := e
%	&&\text{write to global variable}\\
%	&\mid X
%	&&\text{read from global variable}\\
	&\mid x := e \mid x                      && \text{write/read local variable}\\
	&\mid X := e \mid X                      && \text{write/read global variable}\\
	&\mid e_1 == e_2                         && \text{equality test}\\
	&\mid e_1 ; e_2                          && \text{sequencing}\\
	&\mid \ifkw(e_1)\{e_2\}\elsekw\{e_3\}    && \text{conditional}\\
	&\mid \whilekw(e_1)\{e_2\}               && \text{while loop}\\
	&\mid \yieldkw                           && \text{yield to scheduler}
	\\[0.8em]
	\mathbf{Program}\quad P_0 ::= &\ \requestkw\ name_1\{e_1\}\;\dots\;\requestkw\ name_n\{e_n\}
	&& \text{set of handlers} 
\end{aligned}
\]

Given a program \(P_0\), we write \(name_i\{e_i\}\in P_0\) for each of the handlers \(name_i\{e_i\}\).
Our semantics is standard and fully formalized in Appendix~\ref{appendix:ser-semantics}. 
 In addition, arithmetic extensions are supported in the tool~\cite{ArtifactRepository} but omitted here for brevity.
%Our semantics is straightforward, and fully formalized in Appendix~\ref{appendix:ser-semantics}.
%
%Furthermore, our syntax and semantics can also be extended with \textit{arithmetic operations} (as implemented in our tool~\cite{ArtifactRepository}), but we omit them for simplicity.
%
%Furthermore, we note that our syntax and semantics can both be extended to include \textit{arithmetic operations} (as in our implementation~\cite{ArtifactRepository}); however, we omit these for simplicity. 

\subsection{Network System}    
We now present our abstract network system (NS) model, motivated by software-defined networks. In the networking domain, spawning a request corresponds to sending a \textit{packet}, with each local variable mapped to a unique \textit{packet header field}; global variables correspond to variables on \textit{programmable switches}, as they are shared among all requests visiting the switch. Throughout this paper, we use the term \emph{request} to refer to a concurrent computation unit. 
We define a network system $\mathcal{N}$ as a tuple $(G, L, \mathit{REQ},  \mathit{RESP}, g_0, \delta, \mathit{req}, \mathit{resp})$ where:
\begin{itemize}
\item $G$ is a set of \textit{global network states} (e.g., the values of variables on a switch)

\item $L$ is a set of \textit{local packet states} (e.g., packet header values)

\item $\mathit{REQ}$ is a finite set of \textit{request labels} (each marked {\color{ForestGreen}$\blacklozenge_\text{req}$})

\item $\mathit{RESP}$ is a finite set of \textit{response labels} (each marked {\color{red}$\blacklozenge_\text{resp}$})

\item $g_0 \in G$ is the \textit{initial global state} of the network system

\item $\mathit{req} \subseteq \mathit{REQ} \times  L$ maps each request to its corresponding (initial) local state --- this represents externally spawning a packet matching the request type

\item $\mathit{resp} \subseteq L \times \mathit{RESP}$ maps a (final) local state to its corresponding response (this represents a packet exiting the network and returning the computation)

\item $\delta \subseteq  (L \times G) \times ( L \times G)$ defines atomic execution steps that update both global and local state (this represents a packet doing a single hop in the network)
%(and the corresponding SER expressions); as based on SER's small-step semantics
\end{itemize}

%\todo{new start}
%We refer to our transition rules in Fig.~\ref{fig:code2ExampleNS}.

%\smallskip
\noindent
\textbf{Request and response.}
A \emph{request} label ${\color{ForestGreen}\blacklozenge_\text{req}}\in\mathit{REQ}$ spawns a request (i.e., a packet/thread) on which a concurrent computation is executed; a \emph{response} label ${\color{red}\blacklozenge_\text{resp}}\in\mathit{RESP}$ is its returned value. The pair $({\color{ForestGreen}\blacklozenge_\text{req}},{\color{red}\blacklozenge_\text{resp}})$ captures the input/output behavior of a single request from a single concurrent execution of the NS.

\smallskip
\noindent
\textbf{States.}
A \emph{network state} is a triple $(g,\mathcal{R},\mathcal{Z})$ where
$g \in G$ is the global network state,
$\mathcal{R} \in \mathrm{Multiset}(L \times \mathit{REQ})$ is a multiset of in-flight requests (i.e., local assignments of each thread in the current timestep, and the original request label that spawned it),
and $\mathcal{Z} \in \mathrm{Multiset}(\mathit{REQ} \times \mathit{RESP})$ is a multiset of completed request/response pairs.
We write $\uplus$ for multiset union.
The initial global state is $(g_0, \varnothing, \varnothing)$.

% Non-figure version of the state-transition rules (caption text inlined)
%\renewcommand{\arraystretch}{1.6}
%\medskip
\smallskip
\noindent
\textbf{Transition rules.}
A transition $\longrightarrow$ either (1) spawns a request; (2) advances one request via $\delta$; or (3) returns a response. When no further steps remain, $\mathcal{Z}$ is the final multiset of request/response pairs that arose during the NS run.

%\noindent\textbf{States and transitions.}
%A (global) \emph{network state} is a triple $(g,\mathcal{R},M)$ where
%$g \in G$ is the current global state,
%$\mathcal{R} \in \mathrm{Multiset}(L \times \mathit{REQ})$ is a multiset of in-flight requests (threads),
%and $M \in \mathrm{Multiset}(\mathit{REQ} \times \mathit{RESP})$ is a multiset of completed request/response pairs.
%%
%The initial state is $(g_0, \varnothing, \varnothing)$.

%\smallskip
%\noindent\textbf{Initial state.} $(g_0, \varnothing, \varnothing)$.

%\smallskip
%\noindent\textbf{Transition rules.}
\[
\text{(New Request)}\quad
\infer{({\color{ForestGreen}\blacklozenge_\text{req}},\ell)\in\mathit{req}}
{(g,\mathcal{R},\mathcal{Z}) \rightarrow (g,\; \mathcal{R}\uplus\{(\ell,{\color{ForestGreen}\blacklozenge_\text{req}})\},\; \mathcal{Z})}
\]
\[
\text{(Processing Step)}\quad
\infer{((\ell, g),(\ell', g'))\in\delta}
{(g,\; \mathcal{R}\uplus\{(\ell,{\color{ForestGreen}\blacklozenge_\text{req}})\},\; \mathcal{Z})
	\rightarrow
	(g',\; \mathcal{R}\uplus\{(\ell',{\color{ForestGreen}\blacklozenge_\text{req}})\},\; \mathcal{Z})}
\]
\[
\text{(Response)}\quad
\infer{(\ell,{\color{red}\blacklozenge_\text{resp}})\in\mathit{resp}}
{(g,\; \mathcal{R}\uplus\{(\ell,{\color{ForestGreen}\blacklozenge_\text{req}})\},\; \mathcal{Z})
	\rightarrow
	(g,\; \mathcal{R},\; \mathcal{Z} \uplus \{({\color{ForestGreen}\blacklozenge_\text{req}},{\color{red}\blacklozenge_\text{resp}})\})}
\]

\smallskip
\noindent
\textbf{Serializability.}
An \textit{interleaved run} is a complete execution 
\((g_0,\varnothing,\varnothing)\!\to^*\!(g_n,\varnothing,\mathcal{Z})\):
\[
(g_0,\varnothing,\varnothing) \;\to\; (g_1,\mathcal{R}_1,\mathcal{Z}_1) \;\to\; \cdots \;\to\; 
%(g_{n-1},\mathcal{R}_{n-1},\mathcal{Z}_{n-1}) \;\to\; 
(g_n,\mathcal{R}_{n},\mathcal{Z}_{n}),
\quad
\text{with } \mathcal{R}_{n}=\varnothing,\mathcal{Z}_n=\mathcal{Z}.
 %\varnothing,Z
\]
It is \textit{serial} if each $\mathcal{R}_i$ has \textit{at most} one request.
	Intuitively, serial runs have at most one in-flight request at a time.
%	,
%	whereas \emph{interleaved} runs may have multiple requests in flight at once.
	%
	%
	Given NS \(\mathcal{S}\), let \(\mathsf{Int}(\mathcal{S})\) and \(\mathsf{Ser}(\mathcal{S})\)  respectively denote the (infinite) sets of request/response multisets, for interleaved and serial runs of \(\mathcal{S}\):
\[
\mathsf{Int}(\mathcal{S})=\{\,\mathcal{Z}\mid \exists\text{ \textit{interleaved} run }(g_0,\varnothing,\varnothing)\rightarrow^{*}(g_n,\varnothing,\mathcal{Z})\,\},
\]
\[
\mathsf{Ser}(\mathcal{S})=\{\,\mathcal{Z}\mid \exists\text{ \textit{serial} run }(g_0,\varnothing,\varnothing)\rightarrow^{*}(g_n,\varnothing,\mathcal{Z})\,\}\subseteq \mathsf{Int}(\mathcal{S}).
\]
%\todo{old}
%\[
%\mathrm{Int}(\mathcal{N})
%= \bigl\{\, \mathcal{Z} \in \mathrm{Multiset}(\mathit{REQ}\times \mathit{RESP})
%\;\big|\; \exists\ \text{\textit{interleaved} run } (g_0,\varnothing,\varnothing)\rightarrow^{*}(g_n,\varnothing,\mathcal{Z}) \,\bigr\}
%\]
%%
%\[
%\mathrm{Ser}(\mathcal{N})
%= \bigl\{\, \mathcal{Z} \in \mathrm{Multiset}(\mathit{REQ}\times \mathit{RESP})
%\;\big|\; \exists\ \text{\textit{serial} run } (g_0,\varnothing,\varnothing)\rightarrow^{*}(g_n,\varnothing,\mathcal{Z}) \,\bigr\}.
%\]

An NS $\mathcal{S}$ is \emph{serializable} if \(\mathsf{Int}(\mathcal{S})=\mathsf{Ser}(\mathcal{S})\), i.e., every multiset of request/response pairs obtained by an interleaved execution can also be obtained serially.

\subsection{Translating SER Programs to Network Systems}
\label{subsec:SerToNsTranslation}
The NS abstraction not only captures concurrent behaviors in software-defined networks but also enables a natural translation from \toolname{} programs. 
Given a \toolname{} program \(P_0\) with local variables (\texttt{vars}), global variables (\texttt{VARS}), and mappings \(\rho,g\) from these to a finite value set \texttt{V}, we define the initial local/global states $\rho_0$ and $g_0$ assigning $0$ to all local and global variables, respectively. 
Using the small-step semantics ($\pstep$, defined in full in Appendix~\ref{appendix:ser-semantics}), we construct the NS $(G, L, \mathit{REQ}, \mathit{RESP}, g_0, \delta, \mathit{req}, \mathit{resp})$:

%\todo{old}
%The NS abstraction is useful beyond its capability to capture concurrent behaviors in software-defined networks.
%%
%Specifically, another advantage is the natural translation from  \toolname{} programs to their corresponding NS.
%%, as we demonstrate next.
%%
%%
%%\smallskip
%%\noindent
%Given a \toolname{} program \(P_0\) with: (i) local variables (\texttt{vars}); (ii) global variables (\texttt{VARS}); (iii) mappings \(\rho\) and (iv) \(g\) from local/global variables to (v) a finite set of values \texttt{V}.
%%
%We define the initial (local) state $\rho_0$ and the initial (global) state $g_0$ to assign $0$ to all variables, respectively.
%%
%Furthermore, given the small steps semantics ($\pstep$ defined in App.~\ref{appendix:ser-semantics}), we construct the following NS $(G, L, \mathit{REQ},  \mathit{RESP}, g_0, \delta, \mathit{req}, \mathit{resp})$:

\[
\begin{aligned}
	G \;&=\; \{\, g : \texttt{VARS}\!\to\! \texttt{V} \,\},\\[0.3ex]
	L \;&=\; \bigl\{\, (e,\rho)\ \bigm|\ \rho:\texttt{vars}\!\to\! \texttt{V},\ \exists\,name_i\{e_i\}\!\in\! P_0\ \text{s.t.} \\[-0.2ex]
	&\qquad\quad\qquad\quad e=e_i \text{ or } e \text{ is a suffix of } e_i \text{ starting after a } \yieldkw \text{ statement}\,\bigr\},\\[0.3ex]
	REQ \;&=\; \{\, name_i \mid name_i\{e_i\}\!\in\! P_0 \,\},\quad RESP \;=\; \texttt{V},\\[0.3ex]
	req \;&=\; \{\, (r,\ell) \mid r=name_i\!\in\!REQ,\ name_i\{e_i\}\!\in\! P_0,\ \ell=(e_i,\rho_0)\!\in\!L\},\\[0.3ex]
	resp \;&=\; \{\, (\ell',r') \mid \exists v\!\in\!\texttt{V}.\ \ell'=(v,\rho')\!\in\!L,\ r'=v\!\in\!RESP \,\},\\[0.3ex]
	\delta \;&=\; \bigl\{\, ((e,\rho),g)\!\to\!((e',\rho'),g') \ \bigm|\ (e,\rho),(e',\rho')\!\in\!L,\ g,g'\!\in\!G,\ 
	\cfg{e}{\rho}{g} \pstep \cfg{e'}{\rho'}{g'} \,\bigr\}.
\end{aligned}
\]

\begin{tcolorbox}[colback=black!5!white, colframe=black, boxrule=1pt]
\textbf{Example.}
We construct the NS for the non-serializable example in Listing~\ref{lst:MotivatingExample2NonSer}:
\begin{itemize}
\item 
The set $G$ is defined as $G=\{[\texttt{X=0}], [\texttt{X=1}]\}$.

\item 
The initial global state is defined as $g_0 = [\texttt{X=0}]$.

\item 
The set $L$ is defined as all reachable local states, i.e., pairs of assignments (such as $[\texttt{y=0}], [\texttt{y=1}]$) coupled with all reachable \toolname{} programs (continuations of a program at a point of execution).
%~\footnote{
%These are continuations of a \toolname{} program at a point of execution. Concretely, this can mean either (i) the suffix of the program when it is first initialized (the part that has not yet executed); or (ii) the continuation of the program after a \(\yieldkw\) statement, where execution may later resume.}.
For example, the reachable programs for Listing~\ref{lst:MotivatingExample2NonSer} are depicted as code snippets in Fig.~\ref{fig:code2ExampleNS}.

\item 
The set of requests is $REQ = \{{\color{ForestGreen}\blacklozenge_\text{main}}\}$.

\item 
The set of responses is $RESP = \{{\color{red}\blacklozenge_0},{\color{red}\blacklozenge_1}\}$.

\item
The function \(\delta\) is presented in Fig.~\ref{fig:code2ExampleNSSecondPart} (Appendix~\ref{appendix:MoreNsExamples}).

\end{itemize}

We depict in Fig.~\ref{fig:code2ExampleNS} the explicit network system that
serves as a mapping from requests ({\color{ForestGreen}$\blacklozenge_\text{main}$}) to responses ({\color{red}$\blacklozenge_0$}, {\color{red}$\blacklozenge_1$}).
We note that, for simplicity, we depict only \textit{reachable} states.
\end{tcolorbox}

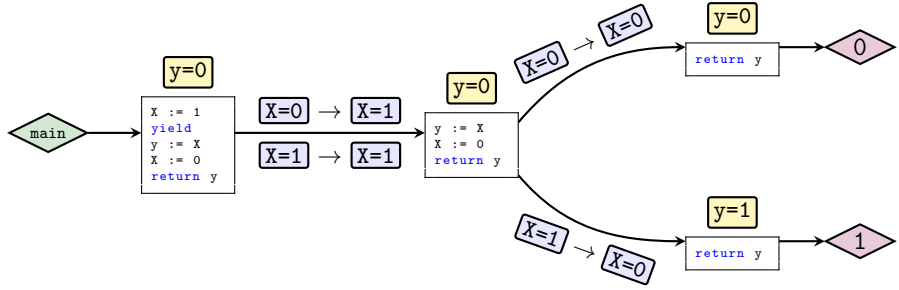
\begin{figure}[!htbp]
	\centering
	%–––– Network system diagram ––––
	% \includegraphics[width=\textwidth]{plots/code_2_NS.png}\\[1ex]

	\begin{tikzpicture}[
		node distance=1.5cm and 2.5cm,
		>=stealth,
		thick,
		every node/.style={font=\small}
	]
	  % [main request] -> [y=0][full program below that] ---[X=1 -> X=1][X=0 -> X=1 below that]--->[y=0][rest of program below that]
	  %   (first outgoing edge of last node on previous line)
      %   --[X=0 -> X=0]-->[y=0][empty program below that] ---> [0 response]
	  %   (second outgoing edge)
	  %   --[X=1 -> X=0]-->[y=1][empty program below that] ---> [1 response]
	  
	  % Main request node
	  \node[
		draw=black,
		line width=0.8pt,
		fill=ForestGreen!20,
		text=black,
		diamond,
		aspect=2,
		inner sep=2pt,
		scale=0.7
	  ] (main) {\texttt{main}};
	  
	  % First state node with full program
	  \node[right=0.7cm of main, align=center] (state1) {
		\begin{tikzpicture}[baseline=(ybox.base)]
			\node[
			draw=black,
			line width=0.8pt,
			fill=brightyellow,
			text=black,
			rectangle,
			rounded corners=1pt,
			inner sep=2pt
			] (ybox) {\texttt{y=0}};
		\end{tikzpicture}\\[-2.5pt]
		\begin{minipage}{1.0cm}
			\begin{lstlisting}[language=CustomPseudoCode,numbers=none,basicstyle=\tiny\ttfamily]
X := 1
yield
y := X
X := 0
return y
			\end{lstlisting}
		\end{minipage}
	  };
	  
	  % Second state node with rest of program
	  \node[right=of state1, align=center] (state2) {
		\begin{tikzpicture}[baseline=(ybox.base)]
			\node[
			draw=black,
			line width=0.8pt,
			fill=brightyellow,
			text=black,
			rectangle,
			rounded corners=1pt,
			inner sep=2pt
			] (ybox) {\texttt{y=0}};
		\end{tikzpicture}\\[-2.5pt]
		\begin{minipage}{1.0cm}
			\begin{lstlisting}[language=CustomPseudoCode,numbers=none,basicstyle=\tiny\ttfamily]
y := X
X := 0
return y
			\end{lstlisting}
		\end{minipage}
	  };
	  
	  % Final state y=0 with empty program
	  \node[above right=-0.5cm and 2.2cm of state2, align=center] (state3) {
		\begin{tikzpicture}[baseline=(ybox.base)]
			\node[
			draw=black,
			line width=0.8pt,
			fill=brightyellow,
			text=black,
			rectangle,
			rounded corners=1pt,
			inner sep=2pt
			] (ybox) {\texttt{y=0}};
		\end{tikzpicture}\\[-2.5pt]
		\begin{minipage}{1.0cm}
			\begin{lstlisting}[language=CustomPseudoCode,numbers=none,basicstyle=\tiny\ttfamily]
return y
			\end{lstlisting}
		\end{minipage}
	  };
	  
	  % Final state y=1 with empty program
	  \node[below right=-0.2cm and 2.2cm of state2, align=center] (state4) {
		\begin{tikzpicture}[baseline=(ybox.base)]
			\node[
			draw=black,
			line width=0.8pt,
			fill=brightyellow,
			text=black,
			rectangle,
			rounded corners=1pt,
			inner sep=2pt
			] (ybox) {\texttt{y=1}};
		\end{tikzpicture}\\[-2.5pt]
		\begin{minipage}{1.0cm}
			\begin{lstlisting}[language=CustomPseudoCode,numbers=none,basicstyle=\tiny\ttfamily]
return y
			\end{lstlisting}
		\end{minipage}
	  };
	  
	  % Response 0
	  \node[
		right=0.6cm of state3,
		draw=black,
		line width=0.8pt,
		fill=RedViolet!20,
		text=black,
		diamond,
		aspect=2,
		inner sep=2pt,
		scale=0.7,
		font=\Large
	  ] (resp0) {\texttt{0}};
	  
	  % Response 1
	  \node[
		right=0.6cm of state4,
		draw=black,
		line width=0.8pt,
		fill=RedViolet!20,
		text=black,
		diamond,
		aspect=2,
		inner sep=2pt,
		scale=0.7,
		font=\Large
	  ] (resp1) {\texttt{1}};
	  
	  % Arrows
	  \draw[->] (main) -- (state1);
	  
	  % Transition labels for state1 to state2
	  \draw[->] (state1) -- node[above] {\begin{tikzpicture}[baseline=(a.base)]\node[draw=black,line width=0.8pt,fill=blue!10,rectangle,rounded corners=1pt,inner sep=2pt] (a) {\texttt{X=0}};\end{tikzpicture} $\to$ \begin{tikzpicture}[baseline=(b.base)]\node[draw=black,line width=0.8pt,fill=blue!10,rectangle,rounded corners=1pt,inner sep=2pt] (b) {\texttt{X=1}};\end{tikzpicture}} node[below] {\begin{tikzpicture}[baseline=(c.base)]\node[draw=black,line width=0.8pt,fill=blue!10,rectangle,rounded corners=1pt,inner sep=2pt] (c) {\texttt{X=1}};\end{tikzpicture} $\to$ \begin{tikzpicture}[baseline=(d.base)]\node[draw=black,line width=0.8pt,fill=blue!10,rectangle,rounded corners=1pt,inner sep=2pt] (d) {\texttt{X=1}};\end{tikzpicture}} (state2);
	  
	  % From state2 to final states
	  \draw[->] ([yshift=4pt]state2.east) to[out=50,in=180] node[above, sloped] {\begin{tikzpicture}[baseline=(a.base)]\node[draw=black,line width=0.8pt,fill=blue!10,rectangle,rounded corners=1pt,inner sep=2pt] (a) {\texttt{X=0}};\end{tikzpicture} $\to$ \begin{tikzpicture}[baseline=(b.base)]\node[draw=black,line width=0.8pt,fill=blue!10,rectangle,rounded corners=1pt,inner sep=2pt] (b) {\texttt{X=0}};\end{tikzpicture}} (state3.west);
	  \draw[->] ([yshift=-16pt]state2.east) to[out=-50,in=180] node[below, sloped] {\begin{tikzpicture}[baseline=(a.base)]\node[draw=black,line width=0.8pt,fill=blue!10,rectangle,rounded corners=1pt,inner sep=2pt] (a) {\texttt{X=1}};\end{tikzpicture} $\to$ \begin{tikzpicture}[baseline=(b.base)]\node[draw=black,line width=0.8pt,fill=blue!10,rectangle,rounded corners=1pt,inner sep=2pt] (b) {\texttt{X=0}};\end{tikzpicture}} (state4.west);
	  
	  % To responses
	  \draw[->] (state3) -- (resp0);
	  \draw[->] (state4) -- (resp1);
	  
	\end{tikzpicture}
	\caption{
	The network system for Listing~\ref{lst:MotivatingExample2NonSer}. 
		Local states show the variable assignments (yellow rectangles \fcolorbox{black}{brightyellow}{\rule{0pt}{7pt}\rule{7pt}{0pt}}) and the remaining code; edges indicate transitions of global states  (blue rectangles \fcolorbox{black}{blue!10}{\rule{0pt}{7pt}\rule{7pt}{0pt}}). 
		Requests and responses appear as
		{\color{ForestGreen!20}$\blacklozenge$} (green) and
		{\color{RedViolet!20}$\blacklozenge$} (red) diamonds, respectively.
		From left to right:  ${\color{ForestGreen}\blacklozenge_{\text{main}}}$ spawns a request with \texttt{[y=0]} and the full program; after yielding, $\delta$ steps with global state \texttt{[X=1]} and local state \texttt{[y=0]}, then updates \texttt{y} based on the global value, returning it as the final response (either {\color{red}$\blacklozenge_0$} or {\color{red}$\blacklozenge_1$}).
		}
		%
%	\caption{The Network system for Listing~\ref{lst:MotivatingExample2NonSer}. 
%		Local states include (\textcolor{orange}{yellow}) local variable assignments and the remaining program; edges are labeled with (\textcolor{blue}{blue}) global state transitions. 
%		Requests and responses are shown as \textcolor{ForestGreen}{green} and \textcolor{red}{red} diamonds, respectively. 
%		From left to right: a \texttt{main} request spawns a thread with \texttt{[y=0]} and the full program; after yielding, $\delta$ advances to a step with global \texttt{[X=1]} and local \texttt{[y=0]} plus the remaining three expressions,; \texttt{y} is updated based on the global value, and \texttt{y}'s final assignment is the returned response.}

%		Network system for the program in Listing~\ref{lst:MotivatingExample2NonSer}. Local states consist of local variable assignments and a remaining program. Edges in the NS are labeled with their corresponding global state transition(s). Requests and responses are diamonds. 
	%
%	From left to right: a \texttt{main} request spawns a thread with \texttt{[y=0]} and the full program; after yielding, the $\delta$ transitions to a step in which  (globally) \texttt{[X=1]} and (locally) \texttt{[y=0]} with the remaining three expressions, etc.
	%	
%	For the explicit schemes of \(\delta\), \(req\), and \(resp\) see Appendix~\ref{appendix:delta-req-resp-examples}.
%}
\label{fig:code2ExampleNS}
\end{figure}

\section{Formal Results}
\label{sec:formal-results}

%\begin{itemize}
%	\item Results that we rely on (petri nets, semilinear sets)
%	\item The base algorithm described in math (without any optimizations)
%	\item Time complexity
%	\item proof for correctness of bidirectional pruning
%	\item mathematical description of optimizations
%\end{itemize}

\subsection{The Algorithm (without Optimizations)}

Given a network system \(\mathcal S= (G, L, \mathit{REQ},  \mathit{RESP}, g_0, \delta, \mathit{req}, \mathit{resp})\) we run the following steps:  

%\begin{enumerate}
%	\item  
\medskip
	\noindent
	\textbf{Step 1: Serializability automaton.} 
	We define an NFA 
	\(
	\mathcal A_{\mathrm{ser}}(\mathcal S)=(Q,\Sigma,\delta^A,q_0,F)
	\), with
	\( Q=G,\;F=G,\;q_0=g_0,
	\)
	over an alphabet
	\(
	\Sigma=\{({\color{ForestGreen}\blacklozenge_{\mathit{req}}}/{\color{red}\blacklozenge_{\mathit{resp}}})
	\mid {\color{ForestGreen}\blacklozenge_{\mathit{req}}}\in\mathit{REQ},\;
	{\color{red}\blacklozenge_{\mathit{resp}}}\in\mathit{RESP}\}.
	\)
%	
%	\todo{old}
%	We define an NFA
%	\[
%	\mathcal A_{\mathrm{ser}}(\mathcal S)
%	= \bigl(Q,\Sigma,\delta,q_0,F\bigr),
%	\quad
%	Q = G,\quad F = G \;\;(\text{global states}),\quad
%	q_0 = g_0 \;\;(\text{initial state}),
%	\]
%	\[
%	\Sigma
%	= \Bigl\{
%	({\color{ForestGreen}\blacklozenge_{\mathit{req}}}\,/\,%
%	{\color{red}\blacklozenge_{\mathit{resp}}})
%	\;\Big|\;
%	{\color{ForestGreen}\blacklozenge_{\mathit{req}}}\in\mathit{REQ},\;
%	{\color{red}\blacklozenge_{\mathit{resp}}}\in\mathit{RESP}
%	\Bigr\}.
%	\]
%
We let each transition correspond to a request/response pair:
\(
\delta^A \subseteq Q \times \Sigma \times Q,
\quad q \xrightarrow{{\color{ForestGreen}\blacklozenge_{\mathit{req}}}/{\color{red}\blacklozenge_{\mathit{resp}}}} q'
\),
%old
%\(
%\delta \;\subseteq\; Q \times \Sigma_{	{\color{ForestGreen}\blacklozenge_{\mathit{req}}}\,/\,%
%	{\color{red}\blacklozenge_{\mathit{resp}}}} \times Q,
%\quad
%\bigl(q \xrightarrow{%
%	{\color{ForestGreen}\blacklozenge_{\mathit{req}}}\,/\,%
%	{\color{red}\blacklozenge_{\mathit{resp}}}%
%} q'\bigr),
%\)
%
%\;\Longleftrightarrow\;
%\begin{array}{l}
iff $\mathcal S$ is in global state $q$ and issues a request
$	{\color{ForestGreen}\blacklozenge_{\mathit{req}}}$, then upon some \textit{full serial execution} it eventually transitions to global state  $q'$ and returns response $	{\color{red}\blacklozenge_{\mathit{resp}}}$.
%\end{array}
%\]
%
%
%
%
%
%
%\begin{array}{l}
%	\mathcal S \text{ at global state } q
%	\;\text{issues}\;
%	{\color{ForestGreen}\blacklozenge_{\mathit{req}}},\\[0.5ex]
%	\text{after full completion of the request, it receives}\;
%	{\color{red}\blacklozenge_{\mathit{resp}}},\\[0.5ex]
%	\text{arriving at global state } q'
%\end{array}
%\]
%
%	
	Its language
	\(L(\mathcal A_{\mathrm{ser}}(\mathcal S))\subseteq\Sigma^*\) is exactly the set of serial
	request/response traces.
	Hence, by definition, it holds that applying the Parikh image gives the set of all multisets of request/response pairs obtained by serial executions:
	\(
	\mathsf{Ser}(\mathcal S)
	\;=\;
	\mathsf{Parikh}\bigl(L(\mathcal A_{\mathrm{ser}}(\mathcal S))\bigr)
	\;\subseteq\;\mathbb N^{\Sigma}.
	\)
%	Finally,
%	\(
%	q_0 = g_0
%	\quad\text{(the initial global state of the NS)}
%	\).

%	\smallskip
\begin{tcolorbox}[colback=black!5!white, colframe=black, boxrule=1pt]
	\textbf{Example.} For Listing~\ref{lst:MotivatingExample2NonSer}, the NS in Fig.~\ref{fig:code2ExampleNS} gives rise to the Serial NFA in Fig.~\ref{fig:code2ExampleNFA}.
	A trace of request/response pairs is accepted by the NFA iff some serial execution of the program induces it.
Here, serial runs produce only ({\color{ForestGreen}$\blacklozenge_\text{main}$}/{\color{red}$\blacklozenge_1$}), and the only reachable global state is [\texttt{X=0}].
%	Due to the aforementioned construction, a string of request/response pairs is accepted by the NFA iff there exists a serial execution of the NS that induces it.
%
%	As the states encode the global variable values, and the edges encode request/response pairs for all serial executions, the  
%Specifically, in this case --- serial executions can only produce request/response pairs of the type ({\color{ForestGreen}$\blacklozenge_\text{main}$/{\color{red}$\blacklozenge_1$}}). Furthermore, the only global state reachable in serial executions, is [\texttt{X=0}]. 
	%
%	Intuitively, this corresponds to [\texttt{X=0}] being the initial global state, as well as the system's global state after every single packet exits the network in a serial execution.
	\end{tcolorbox}
	\begin{figure}[!htbp]
		\centering

		\begin{tikzpicture}[
			->,>=stealth,
			thick,
			node distance=2.5cm,
			state/.style={
				draw=black,
				line width=0.8pt,
				fill=blue!10,
				rectangle,
				rounded corners=1pt,
				inner sep=2pt,
				font=\small
			},
			every node/.style={font=\small}
			]
			% States using the same notation as section 3
			\node[state] (X1) {\texttt{X=1}};
			\node[state, right of=X1] (X0) {\texttt{X=0}};
			
			% Initial state arrow
			\draw[->] ([yshift=-0.4cm]X0.south) -- (X0.south);
			
			% Transitions with proper colored notation from the paper
			\draw[->] (X1) -- node[above] {${\color{ForestGreen}\blacklozenge_{\mathrm{main}}}/{\color{red}\blacklozenge_1}$} (X0);
			\draw[->] (X0) edge[loop right] node[right] {${\color{ForestGreen}\blacklozenge_{\mathrm{main}}}/{\color{red}\blacklozenge_1}$} (X0);
		\end{tikzpicture}
		
		\caption{Serial NFA of Listing~\ref{lst:MotivatingExample2NonSer}.}
		\label{fig:code2ExampleNFA}
	\end{figure}
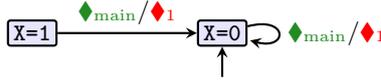

%	\begin{wrapfigure}{r}{0.50\textwidth}  % “r” = right, width = 0.5\textwidth
%		\centering
%		% \includegraphics[width=0.48\textwidth,trim=0 0 0 0,clip]{plots/code_2_NFA_v3.pdf}
%		
%		\begin{tikzpicture}[
%			->,>=stealth,
%			thick,
%			node distance=2.5cm,
%			state/.style={
%				draw=black,
%				line width=0.8pt,
%				fill=blue!10,
%				rectangle,
%				rounded corners=1pt,
%				inner sep=2pt,
%				font=\small
%			},
%			every node/.style={font=\small}
%			]
%			% States using the same notation as section 3
%			\node[state] (X1) {\texttt{X=1}};
%			\node[state, right of=X1] (X0) {\texttt{X=0}};
%			
%			% Initial state arrow
%			\draw[->] ([yshift=-0.4cm]X0.south) -- (X0.south);
%			
%			% Transitions with proper colored notation from the paper
%			\draw[->] (X1) -- node[above] {${\color{ForestGreen}\blacklozenge_{\mathrm{main}}}/{\color{red}\blacklozenge_1}$} (X0);
%			\draw[->] (X0) edge[loop right] node[right] {${\color{ForestGreen}\blacklozenge_{\mathrm{main}}}/{\color{red}\blacklozenge_1}$} (X0);
%		\end{tikzpicture}
%		
%		\caption{Serial NFA of Listing~\ref{lst:MotivatingExample2NonSer}.}
%		\label{fig:code2ExampleNFA}
%	\end{wrapfigure}
	%
	%
	%Intuitively, this corresponds to the Listing~\ref{lst:MotivatingExample2NonSer} program updating [X:=1] as an intermediate assignment before yielding.

%\medskip
\noindent
%	\item 
	\textbf{Step 2: Interleaving Petri net.}
Next, we translate the NS into a Petri net \(N_{\mathrm{int}}(\mathcal S)\). The \textit{non-sink places} of the PN represent either (i) global state assignments, or (ii) local states of in-flight packets. The \textit{sink places} represent request/response pairs of terminated packets.
We define the \textit{transitions} between states to correspond to the \(\delta\),\(req\), and \(resp\) mappings of the NS (the \(req\) transitions can fire without any input tokens in order to correspond to initializing arbitrarily many requests externally).
Finally, we define the initial marking \(M_0\) to be a single token in the place corresponding to the initial global state \(g_0\).
This construction (which is fully formalized in Appendix~\ref{appendix:NS-to-PN-formulation}) guarantees that any reachable marking \(M\) (i.e., \(M_0 \xrightarrow{}^{*} M\)), when projected (\(\pi\)) to the sink places, corresponds to a multiset of  (${{\color{ForestGreen}\blacklozenge_{\mathit{req}}}/{\color{red}\blacklozenge_{\mathit{resp}}}}$) pairs of the NS, as obtained by an interleaving (and vice versa), i.e., \(\mathsf{Int}(\mathcal S)=\{\pi(M) \mid M \in R(N_{\mathrm{int}}(\mathcal S))\}\).

\begin{tcolorbox}[colback=black!5!white, colframe=black, boxrule=1pt]
	\textbf{Example.}
%	\textit{Notation.}
In our running example, the NS gives rise to the PN in Fig.~\ref{fig:code2ExamplePN}, encoding all possible interleavings. The places \textcolor{blue}{ $P_2$} and \textcolor{blue}{$P_3$} represent the global states \textcolor{blue}{[X=1]} and \textcolor{blue}{[X=0]}, respectively, while the places $P_1$, $P_4$, $P_5$, and $P_6$ capture the local states of in-flight requests, i.e., the remaining program code coupled with the assignments to each request’s local variables. Similarly, places \textcolor{red}{$P_7$} and \textcolor{red}{$P_8$} respectively correspond to responses {\color{red}$\blacklozenge_1$} and {\color{red}$\blacklozenge_0$}. Each token either models an active request, a completed request/response pair, or --- when residing in a global-state place --- the current global state of the NS. Finally, transitions implement the network system’s mappings ($\delta/req/resp$): they either spawn a new request (e.g., transition $\textcolor{ForestGreen}{t_1}$, producing {\color{ForestGreen}$\blacklozenge_{\mathrm{main}}$} based on $req$), advance the program by one step (e.g., \(t_2,t_3,t_4,\) and \(t_5\), based on $\delta$),
or return a response (e.g., transitions $\textcolor{red}{t_6}$ and $\textcolor{red}{t_7}$, based on $resp$).
\end{tcolorbox}

%Finally, the NS gives rise to the PN in Fig.~\ref{fig:code2ExamplePN}, encoding all possible interleavings.
%%
%The places represent either the global state
%(e.g., places \textcolor{blue}{$P_2$} and \textcolor{blue}{$P_3$} correspond to the global states \textcolor{blue}{[X=1]} and \textcolor{blue}{[X=0]}), while others ($P_1,P_4,P_5,P_6$) correspond to the local state of an in-flight request, i.e., combinations of ``remaining'' programs  assignments to local variables of in-flight requests.
%%
%Furthermore, some requests correspond to responses, e.g., places \textcolor{red}{$P_7$} and \textcolor{red}{$P_8$} corresponds to {\color{red}$\blacklozenge_1$} and {\color{red}$\blacklozenge_0$}.
%%
%Each token corresponds either to a single in-flight request, a single terminated pair of request/response, or (in the case of the global-variable-encoding places), to the current global state.
%%
%Finally, transitions represent the network system's $\delta$ mapping --- encoding either a ``step'' of our program, or spawning a request ($t_1$, which  corresponds to spawning {\color{ForestGreen}$\blacklozenge_\text{main}$}), or returning a response (e.g. transitions $t_6,t_7$).
%

\begin{figure}[!htbp]
	\centering
	\includegraphics[width=0.7\textwidth]{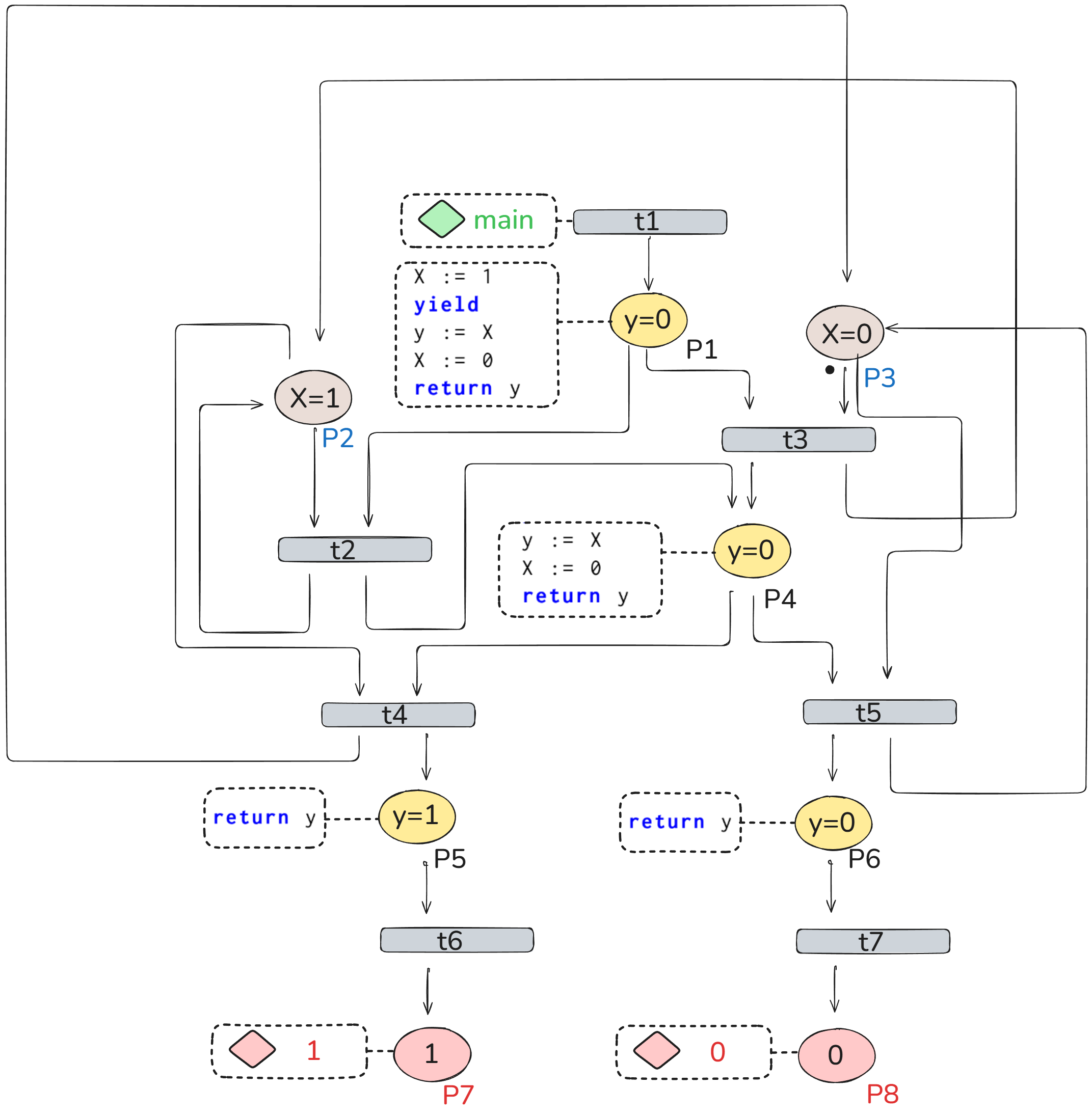}
	\caption{The PN encoding interleaved executions of the program in Listing~\ref{lst:MotivatingExample2NonSer}.}
	\label{fig:code2ExamplePN}
\end{figure}

	\pagebreak
%	\medskip
	\noindent
%	\item 
	\textbf{Step 3: Non-serializable set.}  
	Let
	\(\;\mathsf{NonSer}(\mathcal S)=\mathbb N^{|\Sigma|}\setminus \mathsf{Ser}(\mathcal S)\), i.e., all multisets of (${{\color{ForestGreen}\blacklozenge_{\mathit{req}}}/{\color{red}\blacklozenge_{\mathit{resp}}}}$) pairs that \textit{cannot} be obtained via a serial execution of NS $\mathcal S$.
	
	\begin{tcolorbox}[colback=black!5!white, colframe=black, boxrule=1pt]
	\textbf{Example.}
	Regarding the aforementioned program, we automatically generate the following reachability query\footnote{If not for the equality constraints, the problem would have been considered a Petri net \textit{coverability} query, which is easier~\cite{Ra78}.} for the Petri net in Fig.~\ref{fig:code2ExamplePN}, encoding a target semilinear set by imposing the following constraints on the token distribution:
	%
	%Regarding the aforementioned program, we automatically generate the following reachability query~\footnote{If not the equality constraints, the problem would have been considered a \textit{coverability} query which is easier~\cite{Ra78}.} for the Petri net in Figure~\ref{fig:code2ExamplePN} we encode a target semilinear set with the following constraints on the token distribution:
	%
	\[
	\mathcal {F}: \quad P_1 = 0 \wedge 
	\textcolor{blue}{P_2} \ge 0 \wedge \textcolor{blue}{P_3} \ge 0  \wedge P_4 = 0
	\wedge P_5 = 0 \wedge P_6 = 0 \wedge \textcolor{red}{P_7} \ge 0 \wedge \textcolor{red}{P_8} \ge 1.
	\]
	
	This set requires no tokens on $P_1,P_4,P_5,P_6$, at least one token on $\textcolor{red}{P_8}$ (i.e., a response {\color{red}$\blacklozenge_0$}), and any number of tokens on $\textcolor{blue}{P_2},\textcolor{blue}{P_3},\textcolor{red}{P_7}$. 
	%
%	Note that \(\textcolor{red}{P_8} \ge 1\) requires at least one {\color{red}$\blacklozenge_0$}, as expected (see \Cref{sec:introduction}).
	\end{tcolorbox}

%	\item \textbf{Reachability encoding.}  
	
%	\medskip
	\noindent
%	\item 
	\textbf{Step 4: Decision \& validation.}
	We ask whether there \textit{exists} a reachable marking \(M\) of \(N_{\mathrm{int}}(\mathcal S)\) such that \(M \models \mathcal {F}\). As \(\mathcal {F}\) encodes \(\mathsf{NonSer}(\mathcal S)\), this is equivalent to a marking \(M\) such that \(	M_0 \xrightarrow{}^{*} M\) and
	\[
	\pi(M) \in \mathsf{Int}(\mathcal S)
	\quad\wedge\quad
	\pi(M)\in \mathsf{NonSer}(\mathcal S).
	\]
	
	\begin{itemize}
		\item [\sat]: yields a counterexample interleaving \(M\) with
		\(\pi(M)\notin \mathsf{Ser}(\mathcal S)\), validated by simulation of the network system \(\mathcal S\).
%		We validate a reachable trace and embed it into the NS semantics, yielding a valid interleaving that produces request/response pairs unattainable under any serial execution.
		
		\item [\unsat]: yields an inductive invariant of
		\(N_{\mathrm{int}}(\mathcal S)\), back-translated to an NS-level proof of serializability 
%		(Appendix~\ref{appendix:ns-serializable}).
%		
%		which back-translates to an NS-level proof of
%		serializability 
%		for \(\mathcal S\)
		%
%		back-translated to a validated by synthesizing an inductive invariant over the interleaving PN, thereby proving that the corresponding semilinear set cannot be realized by any interleaving 
(see an example in Appendix~\ref{appendix:ns-serializable}).
	\end{itemize}

\begin{tcolorbox}[colback=black!5!white, colframe=black, boxrule=1pt]
	\textbf{Example.}
	In our running example, the target semilinear set \(\mathcal {F}\) is, in fact, reachable. For instance, it includes the following marking:
	
	\[
	M^* = \{\textcolor{blue}{P_3}(1),\;\textcolor{red}{P_7}(1),\;\textcolor{red}{P_8}(1)\}
	\]
	
	which is reachable by the PN in Fig.~\ref{fig:code2ExamplePN}. The full firing sequence leading to marking $M^*$ is given in Table~\ref{tab:PetriNetFiringCounterexample} (in Appendix~\ref{appendix:non-serializable-execution-example}).
	%
%	As the target set encodes request/response pairs that are \textit{unattainable} via serial executions, and as the PN encodes all possible interleavings, this firing sequence also corresponds as a counterexample demonstrating the program is non-serializable. 
	%
	Specifically, this reachable marking encodes the outputs $\{{\color{ForestGreen}\blacklozenge_\text{main}}/{\color{red}\blacklozenge_0},{\color{ForestGreen}\blacklozenge_\text{main}}/{\color{red}\blacklozenge_1}\}$ which, indeed, can only be induced by a non-serial execution of Listing~\ref{lst:MotivatingExample2NonSer}.
\end{tcolorbox} 
%\item \textbf{Validation.}  
%
%	\begin{itemize}
%		
%		\item[\sat]: 
%		See the example in subsec.~\ref{subsec:ns-not-serializable}.
%		
%%		\item[\unsat]: 
%		
%		
%	\item \sat: we validate the reachable trace and project it to the NS semantics to represent a valid interleaving that results into request/response pairs which cannot be attained in serial executions.
%	See example in subsec.~\ref{subsec:ns-not-serializable}.
	
%	\item \unsat:
%	we generate an inductive invariant over the interleaving PN, proving that the semilinear set cannot be attained via an interleaving.
%	See example in subsec.~\ref{subsec:ns-serializable}.
%\end{itemize}

%\end{enumerate}

\medskip
\noindent
\textbf{Complexity analysis.}
The core algorithm reduces serializability checking to Petri net reachability with target semilinear sets. 
Since the serial executions form a regular language (step 1), their Parikh image is effectively semilinear by Parikh's theorem, with size exponential in the NFA.
The interleaving Petri net (step 2) has $O(|G| + (|\mathit{REQ}| \times |L|) + (|\mathit{REQ}| \times |\mathit{RESP}|))$ places and $O(|\mathit{REQ}|\times (1+ |\delta| + |\mathit{RESP}|))$ transitions.
The reachability query (step 3) asks whether the Petri net can reach the complement of a semilinear set, which is decidable but \texttt{Ackermann}-complete~\cite{CzWo22,Le22}.
Without optimizations, even simple examples can generate Petri nets with hundreds of places and exponentially-sized semilinear constraints, making the approach impractical.
Our optimizations (see subsec.~\ref{sec:optimizations}) \textit{drastically} reduce both the Petri net size and the semilinear set complexity, as we elaborate next.

%\guy{should we add/prove the following?}
%
%\begin{proposition}
%	Let $N_{\mathrm{int}}(\mathcal S)=(P,T,\mathsf{pre},\mathsf{post},M_0)$ be the interleaving Petri net constructed above, and let
%	\[
%	\pi\colon\mathbb N^P\to\mathbb N^{P_R}
%	\]
%	be the projection onto the request/response places $P_R$.  Then
%	\[
%	\mathsf{Int}(\mathcal S)
%	\;=\;
%	\{\;\pi(M)\;\mid\;M_0\xrightarrow{*}M\}.
%	\]
%\end{proposition}

\subsection{Optimizations}
\label{sec:optimizations}

%\todo{check}

We apply four optimizations to the base algorithm to control intermediate blow‐up in the size of both the PN and the constructed semilinear set. 
An extensive empirical evaluation of these optimizations appears in Appendix~\ref{appendix:full_results}.

\medskip
\noindent
\textbf{(1) Bidirectional slicing.}  
When solving Petri net reachability, many places and transitions might be irrelevant to the specific target set~\cite{Ra12}.  
	We slice them before symbolic reasoning by combining forward and backward passes:  
	the forward pass over-approximates the places reachable from $M_0$; and symmetrically,   
	the backward pass traverses in reverse from any place that can influence a target constraint (hence over-approximating the places that can contribute to it).
	We iteratively remove non-forward-reachable and
	non-backward-relevant places and transitions, to a fixed point.  
	Appendix~\ref{appendix:BidirectionalProof} illustrates this (Fig.~\ref{fig:bidirectional_pruning}) and proves soundness (Theorem~\ref{thm:bidirectional-pruning}):

%\guy{check with mark (M0 projected on P or P'?)}

\begin{theorem}[Bidirectional Slicing Soundness]
	\label{thm:bidirectional-pruning}
	Let $N = (P, T, \Pre, \Post, M_0)$ be a Petri net and $S$ a target set.  
	Let $N' = (P',T',\,\Pre|_{P'\times T'},\,\Post|_{P'\times T'},\,M_0|_{P'})$ be the sliced net.  
	Then $S$ is reachable from $N$ iff it is reachable from $N'$.
\end{theorem}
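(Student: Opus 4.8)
The plan is to reduce the two-pass, iterate-to-a-fixed-point construction to a sequence of \emph{elementary removals} and to prove that each one preserves the set of reachable markings projected onto the surviving places; soundness of the whole procedure then follows by induction on the number of removals. Concretely, I would first fix the convention that the target set $S$ constrains only a distinguished set of \emph{target places}, so that ``$S$ is reachable'' is a statement about the marking restricted to those places and slicing is performed relative to them. Each elementary step is of one of two types: a \emph{forward} step that deletes a place $p$ lying outside the forward over-approximation, together with the (dead) transitions incident to $p$; or a \emph{backward} step that deletes a place or transition deemed irrelevant to the target.

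For the forward step, the key invariant is that a place $p$ outside the over-approximation is never marked in any $M\in R(N)$. From this I would derive that every transition touching $p$ is \emph{dead}: a transition consuming from $p$ is never enabled, while a transition producing into $p$ can never fire either, since firing it would mark $p$ and contradict unmarkability. Hence the deleted material is disjoint from every firing sequence and from the support of every reachable marking, so $R(N)$ and $R(N')$ coincide on the surviving places, and $S$ is reachable in $N$ iff it is in $N'$.

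For the backward step, I would work with the standard cone-of-influence closure: a transition is \emph{relevant} if it net-changes some relevant place, and the input support of every relevant transition is added to the relevant places, iterating from the target places. This yields two structural facts that drive the proof: every surviving transition has all of its input places among the surviving (relevant) places, and every deleted transition has net-zero effect on every relevant place. The forward implication ($N$ reachable $\Rightarrow$ $N'$ reachable) is then a \emph{projection} argument: given a firing sequence $\sigma$ witnessing some $M\in R(N)\cap S$, I delete the irrelevant transitions from $\sigma$; since they net-change no relevant place and since the enabledness of a surviving transition depends only on its (relevant) inputs, the projected sequence is fireable in $N'$ and reaches the same marking on the relevant places, which still satisfies $S$. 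The converse implication is a \emph{lifting} argument: replaying a witnessing sequence of $N'$ inside $N$, each surviving transition reads only relevant places, so its enabledness is unchanged, while the deleted places can only accumulate tokens and never block a firing; the resulting run of $N$ reaches a marking that agrees with the $N'$-marking on the target places, hence lies in $S$.

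I expect the main obstacle to be the backward step, specifically establishing that irrelevant transitions can be excised from a witnessing run without breaking the fireability of the retained transitions. The delicate case is a deleted transition that merely \emph{reads} a relevant place (consuming and immediately re-producing the same tokens, for a net-zero effect): I must argue that, because firing is atomic, excising such a transition leaves the relevant-place marking unchanged at every point where a surviving transition fires, so no later surviving transition is disabled. A secondary subtlety is the composition of the two passes under iteration; I would phrase the invariant uniformly as ``reachable markings restricted to surviving places are preserved,'' so that an elementary step of either type re-establishes exactly the hypotheses needed by the next, closing the induction at the fixed point.
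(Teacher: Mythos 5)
Your forward step matches the paper's: places outside the forward fixed point are never marked in any run from $M_0$, so every transition incident to them is dead and their removal is harmless. The gap is in the backward step. You model it as a cone-of-influence closure --- a transition is kept iff it has a nonzero \emph{net effect} on some relevant place, and soundness is argued by \emph{excising} the removed transitions from a witnessing run while preserving the marking on the surviving places. But that is not the backward pass this theorem is about. In the paper's construction a transition $t$ survives only if \emph{every} output place of $t$ lies in the backward fixed point $Y^*_P$, where $Y^*_P$ is seeded with the places that $S$ does \emph{not} constrain to zero and grown by adding the input places of already-surviving transitions. This deletes, for instance, a transition that consumes a token from a place of $P'$ and deposits it on a place that $S$ constrains to zero and that no transition can drain. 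Such a transition has a strictly negative net effect on a surviving place, so your closure would classify it as relevant and keep it; and your excision argument cannot justify deleting it, because deleting it \emph{does} change the marking on surviving places. The paper may delete it for an entirely different reason: it never fires in any run from $M_0$ that ends in $S$, since the token it creates can never be removed.

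The lemma you actually need --- and the one the paper proves --- is a \emph{witnessability} invariant: if a place $p$ carries a token at any intermediate marking of some run $M_0 \xrightarrow{\sigma_1} M \xrightarrow{\sigma_2} M' \in S$, then $p \in X^*_P \cap Y^*_P = P'$. The forward half is your forward induction along $\sigma_1$; the backward half is a backward induction along $\sigma_2$: at $M'$ every marked place is allowed to be nonzero and hence lies in $Y_0$; inductively, every output place of the last fired transition is marked just after the firing and hence lies in $Y^*_P$, so that transition enters the fixed point, so its input places do too, and every place marked just before the firing is either still marked afterwards or is an input of that transition. Both directions of the iff then follow at once: every transition fired in a witnessing run has all its pre- and post-places marked immediately before, respectively after, its firing, hence in $P'$, so the whole run already lives in $N'$; conversely, every run of $N'$ is literally a run of $N$, since all pre/post places of $T'$-transitions lie in $P'$ and $M_0$ vanishes outside $P'$ whenever the sliced net is defined. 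As written, your proposal proves soundness of a different (cone-of-influence) reduction, not of the slicing defined here.
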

%
%We depict this in Fig.~\ref{fig:bidirectional_pruning}, and prove Theorem~\ref{thm:bidirectional-pruning} in Appendix~\ref{appendix:BidirectionalProof}.
% (see the proof in Appendix~\ref{appendix:BidirectionalProof}).
%

% \medskip
% \noindent\textit{\textbf{Intuition.}}
% Before any heavy symbolic reasoning takes place, we apply bidirectional pruning on the underlying PN.  In the forward pass, we traverse from the initial marking to identify an over-approximation of all places and transitions that could ever fire; in the backward pass, we traverse backward from any place that can influence a target constraint, and identify over-approximations on transitions and places that cannot contribute to reaching it.  By iteratively repeating forward passes and backward passes until convergence, we remove every component of the net that cannot both originate and contribute to the reachable target set.  This dramatically shrinks the net in practice, often converting an intractably large model into one small enough for exhaustive analysis.

\medskip
\noindent
\textbf{(2) Semilinear set pruning.}  
A semilinear set $S=\bigcup_{i=1}^m L_i$ with 
	$L_i=\{\,\mathbf{b}_i+\sum_{\mathbf{p}\in P_i}n_p \mathbf{p} \mid n_p\in\mathbb N\}$ may contain redundant 
	period vectors or components.  
	Thus, during construction, we:
	(1) remove any period vector $\mathbf{p}\in P_i$ expressible as a nonnegative combination of $P_i\setminus\{\mathbf{p}\}$; 
	and (2) drop $L_i$ when $L_i\subseteq L_j$ (for \(i \neq j\)).  
	This pruning keeps formulas compact and solver calls tractable.

% it is common for some inequalities or disjuncts to add no new coverage beyond what other constraints already guarantee.  The redundant‐constraint elimination pass inspects each linear inequality and each disjunct in a disjunctive normal form, testing whether it is implied by the rest.  Any constraint or disjunct found redundant is dropped, ensuring that subsequent intersection, union, and projection operations work on the smallest necessary formula.  This streamlines the logic formula and prevents unnecessary size blow‐ups during solver invocations.

% %
% We replace each period‐basis \(P_i\) by
% \[
% P_i \;:=\;\{\,p\in P_i \mid p\notin\mathsf{Span}(P_i\setminus\{p\})\}
% \],
% Dropping any ``redundant'' periods, and removing any \(L_j\subseteq L_i\) for \(i\neq j\), iterating to a fixed point so no two components subsume one another.

\medskip
\noindent
\textbf{(3) Generating fewer constraints.}  
When computing the Parikh image of a regular expression as a semilinear set,
	most regex operations can be implemented without an exponential blow-up.
	However, the Kleene star is a notable exception. Given $S=\bigcup_{i=1}^m L_i$,  the Kleene closure $S^\ast$ can be expressed as a semilinear set by: 
	\[
	S^\ast=\bigcup_{I \subseteq \{1,...,m\}} 
	\Big\{\sum_{i \in I} \mathbf{b}_i + \sum_{\mathbf{p} \in \bigcup_{i \in I} (P_i\cup \{\mathbf{b}_i\})} n_p \mathbf{p}\Big\},
	\]
	yielding $2^m$ components. To mitigate this:
	(i) if $L_i=\{\mathbf{b}_i\}$ (period-less component), factor it out, star the rest, then add $\mathbf{b}_i$ as a period;  
	(ii) if $L_i=\{\sum_{\mathbf{p}\in P_i}n_p\mathbf{p}\}$ (zero base), likewise star the rest and add each $\mathbf{p}\in P_i$ as a period vector to the resulting set.  
	Each such case halves the component count and circumvents exponential blow-ups.

\medskip
\noindent
\textbf{(4) Strategic Kleene elimination order.} 
We use \textit{Kleene's algorithm}~\cite{Kl56} to translate the serializability NFA into a regex.
The size of the generated semilinear set is not only impacted by how the
	semilinear set operations are implemented, but also by what \textit{specific} regular
	expression is given as input: a single regular language may be represented by a
	number of equivalent regexes, each of different complexity.
	In particular, as Kleene star can cause a large blow-up in the semilinear set size,
	we are especially sensitive to the \emph{star height} of the generated regex.
	Naive Kleene elimination may introduce many nested stars.  
	We reduce this by strategically choosing to eliminate lower-degree states first:
	\[
	q^*=\arg\min_{q\in Q}\bigl(|\delta^A_{\mathrm{in}}(q)|+|\delta^A_{\mathrm{out}}(q)|\bigr).
	\]

	\smallskip
	\noindent
	As we demonstrate in Appendix~\ref{appendix:full_results}, our optimizations expedite the search procedure and make the representations \textit{significantly} more compact. This, in turn, enables deciding serializability for instances that are otherwise intractable.

\section{Implementation}
\label{sec:implementation}

\subsection{Code Architecture}

We implemented our approach in \toolname{}~\cite{ArtifactRepository}, a publicly available toolchain 
%consisting of over $18{,}200$ LoC 
written mostly in \texttt{Rust}. 
\toolname{} implements an end-to-end serializability checker for a given input program. If the program is serializable, we return a proof thereof; otherwise, if it is not serializable, a counterexample is given to the user for an interleaving that can result in request/response pairs that are unattainable in any serial execution.
Our workflow translates the decidability problem to an equivalent Petri net reachability question (for an unbounded number of tokens), in which (i) the Petri net represents all possible interleavings of the program; and (ii) the reachability query represents a semilinear set (equivalently, a Presburger arithmetic encoding) of all request/response pairs that \textit{cannot} be obtained by any serial execution.
As Petri net reachability is \texttt{Ackermann}-complete~\cite{CzWo22,Le22}, we added various optimizations to expedite the search process, both at the PN level and the property-encoding level.
The pipeline of \toolname{} is depicted in Fig.~\ref{fig:full_program_flow}, and includes:

\begin{enumerate}
	\item \textbf{Input \& parsing.} 
	Our framework receives either a \toolname{} program with the syntax described in~\Cref{sec:problem-definition}, or a \texttt{JSON} file directly encoding a network system. In the case of the former, an additional step takes place, parsing the input %(using the \texttt{parser.rs} module) 
	to an expression tree that is translated to the equivalent NS.
%	, represented as a struct in the \texttt{ns.rs} module. 
	
	\item \textbf{Petri net conversion.} The NS is then translated into a Petri net 
%	(see the \texttt{petri.rs} module) 
which represents all possible interleavings. 
%	.as depicted, for example, in subsec.~\ref{subsec:ns-not-serializable}.
%	Each place represents a state (either global or local) or a response. Each token represents a single in-flight request or a terminated response. 
The PN is encoded in the de facto standard \texttt{NET} format, to support off-the-shelf PN model checkers. 
	
	\item \textbf{Semilinear conversion.} 
%	This includes a set of modules (\texttt{kleene.rs}, \texttt{semilinear.rs}, and \texttt{presburger.rs}) which 
We generate a semilinear set encoding all non-serializable outputs, via translation of the serialized NFA (e.g., Fig.~\ref{fig:code2ExampleNFA}) to a regex, which is then projected (via the Parikh image) and complemented. 
%	This conversion is done by the following pipeline: (1) the NS is translated to an NFA encoding all possible request/response pairs of the input program; (2) This NFA is translated to a regex (via Kleene's Theorem) and then projected (via the Parikh Image) to a semilinear set. The (finite) encoding of the semilinear set symbolically represents all outputs attained via serializable executions of the program, running for an unbounded number of steps; (3) finally, we complement the semilinear set to encode all outputs unattainable via serial executions. 
	At the end of the pipeline, an \texttt{XML}-formatted output encodes a reachability query that encapsulates constraints over the PN token count.

	\item \textbf{Reachability engine.} The PN and the reachability query are fed to a  PN model checker, which combines \textit{bounded model checking} (BMC)~\cite{BiCiClZh99} in search of a counterexample; and \textit{state equation reasoning}~\cite{Mu77} in order to prove non-reachability. 
%	This engine is implemented in the \texttt{reachability.rs} and \texttt{reachability\_with\_proofs.rs}  modules.
	%
	In order to expedite the search, ``large'' (PN, query) pairs are replaced with multiple sliced PNs (generated by the reachability engine), each coupled with a sub-query encoding a separate disjunct. 
%	of the original query.
%	We iteratively solve the 
The disjuncts are solved on the fly, until reaching \texttt{SAT}, in which case, we have a counterexample; otherwise, if all disjuncts are \texttt{UNSAT}, we render the original program as serializable.

	\item \textbf{Proof \& certification.} 
	If \texttt{SAT}, we reconstruct and validate an NS-level counterexample. Otherwise, if all disjuncts are \texttt{UNSAT}, we extract per-disjunct proofs and ``stitch'' these to a single inductive serializability certificate, which we then project to the NS and validate (i) initiation, (ii) inductiveness, and (iii) query refutation.
	%
%	If the query is \texttt{SAT}, we reconstruct a non-serializable NS-level execution and validate its correctness. Otherwise, if all disjuncts are \texttt{UNSAT}, our proof module extracts a separate proof for each disjunct. We subsequently ``stitch'' these proofs to a single inductive invariant, and project it to the NS-level, representing a full inductive proof certificate for serializability. Our checker also validates that the inductive invariant is correct, i.e., (i) includes the initial state; (ii) is inductive with regard to the transitions; and (iii) implies that the reachability query is \texttt{UNSAT}.
	%, and hence, implying serializability. 

	\item \textbf{Instrumentation \& logging.} Throughout the pipeline, we record various intermediate representations and performance metrics.  
%	input size and performance metrics (number of places, transitions, constraint complexity, timings) 
%
%The logger outputs 
%We generate \texttt{CSV} and \texttt{JSON} logs along with
% analysis. 
	%
%	We also assemble 
%an \texttt{HTML} report embedding the original net, symbolic constraints, reachability results, proofs, etc.
%	\item \textbf{Debug Report Generation.} Finally, a human-readable \texttt{HTML} report is assembled: it embeds the original net, symbolic constraints, reachability results, proof outlines, and profiling graphs. This interactive report allows users to drill down into each transition firing, constraint check, and proof obligation.
	
%	\item \textbf{Output Delivery.} The crate exposes a simple CLI and library API. Users obtain either a Boolean reachability verdict (with optional certificate), raw log files, or a full HTML debug report, depending on invocation flags.
	
%	\item \textbf{Optimizations.} As formalized in the previous section, we implement multiple optimizations throughout the process.
%	, both for representing the semilinear sets succinctly and for pruning the Petri Nets. These optimizations significantly reduce the search space and expedite the model-checking process.
\end{enumerate}

%\noindent
%This linear pipeline - \emph{parse} -> \emph{model} -> \emph{normalize} -> \emph{analyze} -> \emph{prove} -> \emph{report} --- ensures a clear separation of concerns, easy extensibility (e.g., swapping backends), and comprehensive traceability from input to certified result.```

\begin{figure}[!htbp]
	\centering
	\includegraphics[width=1.0\textwidth]{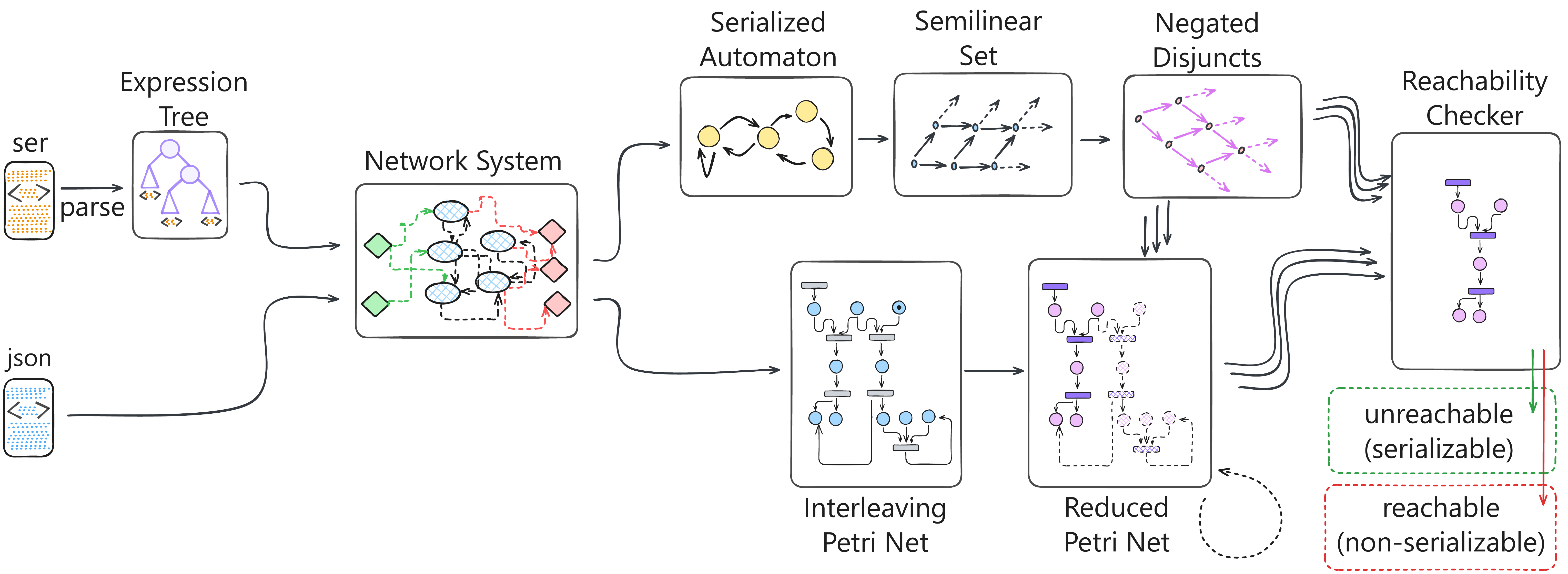}
	\caption{Full program flow 
%		If the target set is unreachable --- a serializability proof is produced; otherwise, if it is reachable --- a counterexample trace is generated.
	(simplified, without backward arrows 
%	translating invariants (if serializable) or counterexample traces (if non-serializable) back 
	to the NS level).}
	\label{fig:full_program_flow}
\end{figure}

\subsection{Benchmark Overview} 
\label{subsec:benchmarks}
To the best of our knowledge, ours is the first and only tool to: (i) statically check serializability on \textit{unbounded} programs; and (ii) \textit{prove serializability holds}.
Thus, due to a lack of standard benchmarks for evaluating serializability, we 
assembled a suite of dozens of benchmarks (as part of our accompanying artifact~\cite{ArtifactRepository}).
% additional part (from rebuttal to R4) - END
%Typically, automatically generated $\toolname$ programs do not represent real-world applications or recognizable concurrency patterns. Thus, 
We believe this is the first benchmark suite for serializability in this setting, aiming to connect \toolname{} programs to practical, real-world analogues.
% additional part (from rebuttal to R4) - END
These include both serializable and non-serializable instances encoded in both \texttt{SER} and \texttt{JSON} formats, and covering a broad range of features, including arithmetic, locks, loops, non-determinism, and more (see an overview of all our benchmarks in Table~\ref{tab:benchmarks-all} of Appendix~\ref{appendix:full_results}).
We note that although the benchmarks themselves are not the main part of the paper, we believe that they have merit on their own, due to their relevance to various real-world systems of interest.
Specifically, we wish to note our suite of benchmarks encoding \textit{network \& system protocols} (see Table~\ref{tab:networking-benchmarks}), which include models of stateful firewalls, BGP routing programs, network monitors, and more --- 
%highlighting the serializability challenges inherent in real-world distributed programs.
%
%Here, we were motivated by 
as motivated by real-world concurrency problems in this domain. One such example is our \textit{routing-cycle benchmark} in software-defined networks (motivated by~\cite{NaGhSa24}). Another example is our \textit{snapshot isolation benchmark} (see Appendix~\ref{appendix:tour}), which was motivated by a real database bug, namely, duplicate-key errors~\cite{cockroach-issue-14099} in the popular \texttt{CockroachDB} system~\cite{cockroachdb-si-docs}.
In both cases and others, the non-serializable behavior was automatically identified by our toolchain.
\section{Evaluation}
\label{sec:evaluation}

%\begin{enumerate}
%    \item Benchmarks (describe our benchmarks)
%    \item Results (total time, split out SMPT time from our rust code time)
%    \item Analysis of optimizations (how much time they save, petri net sizes, semilinear set sizes)
%    \item Limitations (examples we cannot solve, future work that would help)
%\end{enumerate}

\noindent
\textbf{Experimental setup.}
All experiments were run on a Lenovo ThinkPad P16s, with 16 AMD CPU cores and 64 GB of RAM, running Ubuntu 24.04.2.
%
%We plan on making all our code, benchmarks raw results publicly available with the final version of this paper.
%
We use \texttt{SMPT}~\cite{AmDa23} (built upon \texttt{Z3}~\cite{DeBj08}) as our backend Petri net model checker.
Our code and benchmarks are publicly available~\cite{ArtifactRepository}.

\begin{wraptable}{r}{0.52\textwidth}
	\vspace{-1.2\baselineskip} % pulls table upward (adjust if needed)
	\centering
	\small
	%\begin{table}[H]
%	\centering
	\begin{tabular}{l r r r r r}
		\toprule
		& \multicolumn{2}{c}{average (ms)} 
		& \multicolumn{2}{c}{median (ms)} \\
		\cmidrule(lr){2-3} \cmidrule(lr){4-5}
		Category
		& \shortstack{\texttt{cert.}}
		& \texttt{total}
		& \shortstack{\texttt{cert.}}
		& \texttt{total} \\
		\midrule
		\textcolor{ForestGreen}{Serializable}      &   2{,}273 &  25{,}531 &  1{,}178 &  2{,}239 \\
		\textcolor{red}{Not serializable}  &  42{,}076 &  42{,}980 &   773 &   830 \\
           &    &    &     &    \\
%       [-1pt] % small upward adjustment
%       \cdashline{2-5}[2pt/2pt] % thicker (2pt) and tighter dashes
		All               &  19{,}079 &  32{,}898 &   773 &  1{,}909 \\

		\bottomrule
	\end{tabular}
%\end{table}

	\caption{Runtime for generating certificates (\texttt{cert.}) and the overall runtime (\texttt{total}), including for validation.}
	\label{tab:stats-summary}
	\vspace{-0.6\baselineskip} % adjusts spacing below
\end{wraptable}
\medskip
%\subsection{Results}
\noindent
\textbf{Results.}
We ran \toolname{} on all 47 benchmarks, out of which 27 are serializable, and the remaining 20 are non-serializable. 
For each benchmark, we measured the time for deciding the reachability query, as well as the overall time, including validation of the invariant proof (if serializable) or of the counterexample (if not serializable). These experiments ran in parallel on $16$ cores with all four optimizations and a \texttt{TIMEOUT} threshold of $500$ seconds.
Within this time limit, \toolname{} solved 26 of the 27 serializable benchmarks and 19 of the 20 non-serializable benchmarks (see summary in Table~\ref{tab:stats-summary} and the full results in Table~\ref{tab:benchmarks-all} of the appendix).
%
% As can be seen by our results (summarized in Table~\ref{tab:benchmarks-all}), withing this time limit, our tool fully solved 26/27 serializable benchmarks 
%and 19/20 non serializable benchmarks. 
%
The \textit{median} total runtime was $1{,}909$ ms across all benchmarks, and $2{,}238.5$ ms ($830$ ms) when  solely focusing on serializable (non-serializable) benchmarks.
%, as reported in Table~\ref{tab:stats-summary}.
%
The \textit{average} total runtime was $32{,}898.38$ ms across all benchmarks, and $25{,}530.69$ ms ($42{,}980.47$ ms) when solely focusing on serializable (non-serializable) benchmarks.
We also observe a clear runtime split based on serializability: 
%For serializable benchmarks, certificate validation takes much longer than proof generation. 
among non-serializable benchmarks, counterexample generation takes much longer than validation, and dominates the overall runtime; whereas among serializable benchmarks the validation time dominates the overall runtime. This is not surprising, as validating a given counterexample only requires a polynomial-time simulation of the network system to confirm its feasibility.

\begin{table}[htbp]
	\centering
	% Load the tabular from the external file:
	\begin{table}[H] 
	\centering
	\small
	% increase horizontal padding between columns
	\setlength{\tabcolsep}{5pt}
	\renewcommand{\arraystretch}{0.9}
	\begin{tabular*}{\textwidth}{@{\extracolsep{\fill}}%
			p{2.5cm}     % Benchmark
			c          % Serializable
			c c c c c c % Features
			c       % Cert, Total
		}
		\toprule
		\textbf{Benchmark}
		& \textbf{Serializable}
		& \multicolumn{6}{c}{\textbf{Features}}
		& \textbf{Runtime} \\
		\cmidrule(lr){1-1} \cmidrule(lr){2-2} \cmidrule(lr){3-8} \cmidrule(lr){9-9}
		& 
		& If & While & \texttt{?} & Arith & Yield & Multi-req & (ms)
		\\
		\midrule
		\texttt{banking (g1)}          & \xmark      & \cmark & \cmark &        & \cmark & \cmark & \cmark &  74{,}539 \\
		\texttt{banking (g2)}          & \greencmark & \cmark & \cmark &        & \cmark & \cmark & \cmark & \texttt{TIMEOUT} \\
		\texttt{routing (g3)}      & \xmark      & \cmark & \cmark & \cmark & \cmark & \cmark & \cmark &  20{,}954 \\
		\texttt{monitor (g4)}       & \xmark      & \cmark & \cmark & \cmark & \cmark & \cmark & \cmark &  7{,}047 \\
		\texttt{monitor (g5)}       & \greencmark & \cmark & \cmark & \cmark & \cmark & \cmark & \cmark &  12{,}324 \\
		\texttt{firewall (g6)}& \xmark      & \cmark &        & \cmark & \cmark & \cmark &       &  8{,}285 \\
		\texttt{firewall (g7)}& \greencmark & \cmark &        & \cmark & \cmark &       &       & 252{,}752 \\
		\midrule
		\bottomrule
	\end{tabular*}
\end{table}

	\caption{Overview of benchmarks from the \textit{network \(\&\) system protocols} category. 
%	For our full benchmarks see Appendix~\ref{appendix:full_results}.
}
\label{tab:networking-benchmarks}
\end{table}

%\begin{table}[!htbp]
%	\centering
%	\input{tables/average_and_mean_values_of_big_table_simplified_compact.tex}
%	\caption{Runtime for generated certificates (\texttt{total} also includes validation).}
%	\label{tab:stats-summary}
%\end{table}

%\begin{table}[!htbp]
%	\centering
%	% Load the tabular from the external file:
%	\input{tables/semilinear_size_reduction.tex}
%	\caption{Semilinear set size reduction via optimizations.}
%	\label{tab:semilinear-size-reduction}
%\end{table}

%\todo{Limitations?}
%Examples we cannot solve, future work that would help

%\newpage

%\vspace{-5pt}

%\newpage

\section{Related Work}
\label{sec:related-work}

\textbf{Theoretical results.}
Serializability (or \emph{atomicity}) was
first introduced by Eswaran et al.~\cite{EsGrKoTr76},
%, which can be viewed as a relaxtion of \emph{conflict serializability} 
%and \emph{Strict serializability} (SSR)~\cite{Pa79}. 
later motivating Herlihy and Wing's~\cite{HeWe87,HeWi90} similar notion of \emph{linearizability} for concurrent data structures. 
%,
%giving rise to additional consistency models~\cite{ZhChWa13, RaMeBrKoSi93, La78,AhNeBuKoHu95}, 
%inspiring extensive research on model-checking and complexity analyses~\cite{BoEnGuHa17,ZeBiBoEnEr19,LaBo20}.
%
The \emph{membership problem} --- deciding if a specific interleaving is serializable --- is \texttt{NP}-complete~\cite{Pa79}, a result that was later extended to linearizability~\cite{GiKo97}, as well as to other consistency models~\cite{BiEn19}. 
%Likewise, membership for 
% 
%(and  
%is also \texttt{NP}-complete in general
% --- a result that was later 
%extended to \textit{collections}~\cite{EmEn18}).
The \emph{correctness problem} --- whether \emph{all} executions satisfy this criterion --- is \texttt{EXPSPACE} when threads are bounded~\cite{AlMcPe96} and undecidable otherwise~\cite{BoEmEnHa13}, though decidable for bounded-barrier programs (and hence, for serializability in our setting). Bouajjani et al.~\cite{BoEmEnHa18} further show that unbounded-thread linearizability for certain ADTs reduces to VASS coverability in \texttt{EXPSPACE}~\cite{Ra78}. The \toolname{} toolchain is, to our knowledge, the first to implement Bouajjani et al.’s serializability algorithm~\cite{BoEmEnHa13}, adapting it to distributed transactions, extending it with a proof certificate mechanism, and scaling it with various optimizations to models of real-world programs. 

\medskip
\noindent
\textbf{Model checking and runtime verification.}
Runtime checks for serializability and conflict/view‐serializability were proposed by Wang and Stoller~\cite{WaSt06a,WaSt06b}.  
TLA logic~\cite{La94} can express various serializability forms~\cite{CoOlPnTuZu07}, however, such approaches~\cite{SoVaVi20,Ho24} remain restricted to bounded systems due to finite‐state tools (\texttt{TLC}, \texttt{Apalache}~\cite{YuMaLa99,KoKuTr19}).  
Heuristic or enumeration-based model checkers include \texttt{Line-up}~\cite{BuDeMuTa10} (built upon \texttt{CHESS}~\cite{MuQaBaBaNaNe08}), \texttt{LinTSO}~\cite{BuGoMuYa12}, \texttt{Violat}~\cite{EmEn19} and its schema precursor~\cite{EmEn18}, bridge‐predicate methods~\cite{BuNeSe11,BuSe09}, and PAT‐based refinement checking~\cite{LiChLiSuZhDo12,SuLuDoPa09,LiChLiSu09,Zh11}.  
Recent work includes \texttt{RELINCHE} for bounded linearizability~\cite{GoKoVa25}, \texttt{CDSSpec}~\cite{OuDe17} (for \texttt{C/C++11}), \texttt{Lincheck}~\cite{KoDeSoTsAl23}, and \texttt{SAT}‐based~\cite{BiHeCa09,ViWeMa15} approaches~\cite{BuAlMa07}.  
Symbolic testing~\cite{EmEnHa15} can expose violations of observational refinement~\cite{FiOhRiYa10,BoEmCoHa15}.  
Other checkers, e.g. \texttt{SPIN}/\texttt{PARGLIDER}~\cite{Fl04,VeYaYo09,Ho97,VeYa08}, depend on explicit linearization points, which are difficult to determine~\cite{VeYaYo09}.  
Overall, existing methods are typically incomplete, bounded, or assume prior knowledge.  
By contrast, our method covers unbounded threads and uniquely produces serializability certificates.

\medskip
\noindent
\textbf{Static analysis.}
Static methods prove linearizability for bounded~\cite{AmRiReSaYa07,MaLeSaRaBe08} and unbounded systems~\cite{BeLeMaRaSa08,Va09,Va10}, but usually rely on heuristics or annotated linearization points (e.g.~\cite{DrPe14}).  
	Lian and Feng~\cite{LiFe13} propose a logic for non‐fixed points.  
%	Other approaches also depend on linearization points~\cite{OhRiVeYaYo10,ZhPeHa15,AbJoTr16}.  
	However, annotation‐based analyses~\cite{OhRiVeYaYo10,ZhPeHa15,AbJoTr16} may be inconclusive, as failures can stem from incorrect annotations rather than from true violations~\cite{BoEmCoHa15}.

\medskip
\noindent
\textbf{Manual proofs and additional approaches.}
Tasiran~\cite{Ta08} proves serializability for \texttt{Bartok-STM}, while Colvin et al.~\cite{CoGrLuMo06} use I/O automata for list‐set linearizability.  
	Simplifications exist for specific data structures~\cite{BoEmEnMu17,FeEnMoRiSh18}.
%	, and verification has been pursued with~\cite{CoGrLuMo06} or without~\cite{DoGrLuMo04} proof assistants.  
	Other notable linearizability results include Wing and Gong's~\cite{WiGo93} on unbounded FIFO/priority queues, Chakraborty et al.~\cite{ChHeSeVa15} on queues, and Cerný et al.’s \texttt{CoLT} for linked heaps (which is complete only under bounded threads~\cite{CeRaZuChAl10}).  
	Bouajjani et al.~\cite{BoEnWa17} introduce a recursive priority‐queue violation detector, akin to their stack/queue methods~\cite{BoEmEnHa18}.
%
%\medskip
%\noindent
%\textbf{Further approaches.}
Other strategies include testing~\cite{WiGo93,PrGr12,PrGr13,EmEn17,Lo17}, theorem proving~\cite{CoDoGr05,DeScWe11}, and the use of additional verification frameworks~\cite{BoEmEnMu17,FeEnMoRiSh18,EnKo24}.

\section{Discussion}
\label{sec:discussion}

%\subsection{Conclusion}

%

\subsection{Limitations}
%\smallskip
%\noindent
%\textbf{Limitations.}
While our approach advances the state of the art in verifying unbounded serializability, several limitations remain.
First, the underlying Petri net reachability problem has \texttt{Ackermann}-complete complexity~\cite{CzWo22,Le22}, causing our tool to time out on some complex benchmarks. %despite optimizations.
Second, our current implementation relies on \texttt{SMPT}~\cite{AmDa23}, which may fail to find proofs even when they exist, limiting completeness.
Third, our network system model assumes a simple request/response pattern and cannot model more complex interactions, such as streaming, callbacks, or partial responses.
%Fourth, while we can verify programs with nondeterministic choice operators, we cannot handle programs with unbounded data domains or complex data structures beyond integer variables.
%
Finally, \toolname{} targets finite-state programs: each request must have finite local state and the program must induce finite global state (with an unbounded number of requests). Thus, applying \toolname{} to real systems requires that executions generate only a finite \textit{reachable} state space, in order for the NS construction to terminate.
This setting is akin to model checkers such as \texttt{PRISM}~\cite{KwNoPa02} and \texttt{STORM}~\cite{DeJuKaVo17}.
%
%
%These limitations suggest important directions for future research.

\subsection{Future Work}
%\smallskip
%\noindent
\textbf{Additional optimizations.}
To improve scalability, we are adapting \textit{polyhedral reductions}~\cite{AmBeDa21,amat2022polyhedral}, a form of structural reduction~\cite{Be87,BeLeDa20} $(N_1, m_1) \vartriangleright_E (N_2, m_2)$ where $N_2$ is a simpler Petri net and $E$ allows reconstruction of $N_1$’s state space. This would allow verification on the reduced net, with proofs lifted back to the original one.
Moreover, we believe a further avenue for optimization lies in using \textit{approximations} to decide serializability. Our approach already leverages this idea via the underlying model checker, which employs the \textit{state equation} abstraction~\cite{Mu77} to over-approximate the reachable state space. We expect that additional approximation-based techniques could yield further scalability gains.
Finally, other potential optimizations involve short-circuiting steps in our algorithmic pipeline. For instance, we currently generate the reachability query $\mathcal{F}$ in three stages: (i) translating the serial NFA into a regular expression via Kleene’s theorem; (ii) translating the regular expression into a semilinear set using Parikh’s construction; and (iii) complementing the resulting semilinear set.
However, there are techniques (e.g., Verma et al.~\cite{VeSeSc05}) that \textit{directly} compute the Parikh image of an automaton. We did not adopt this approach because, although its construction is linear in size, it relies heavily on Boolean logic, which we found \texttt{ISL} (the standard integer set library) handles poorly in practice.

\medskip
\noindent
\textbf{Proof assistants.}
Another natural next step is to formalize certificate checking in a proof assistant (e.g., \texttt{Rocq}~\cite{Rocq}). This would entail (i) developing a verifier for the PN invariants we use; and (ii) proving theorems that connect invariant validity to serializability.
Specifically, this would likely require extending existing tactics such as \texttt{LIA} (Linear Integer Arithmetic), which currently does not support full Presburger arithmetic, as required by our logic.

\subsection{Applicability to Real-World Programs}
Real-world SDN programs typically satisfy our finite-reachable-state requirement due to bounded end-host buffers and limited switch memory. Moreover, we anticipate that \texttt{P4} programs~\cite{BoDaGiIzMcReScTaVaVaWa14} can be translated to \toolname{} based on the following high-level mappings:
(i) packets to requests, (ii) switch registers to global variables, (iii) packet header fields to local variables, and (iv) packet forwarding to yielding. 
%--- we can use \toolname{} to express high-level \texttt{P4} programs encoding SDN policies. 
%
This translation motivated us to evaluate our toolchain on programs modeling \textit{stateful firewalls}~\cite{KiLiZhKiLeSeSe20,HoLaArReWa22}.
Furthermore, we believe our framework is applicable beyond SDNs. Specifically, our NS model abstracts distributed state with message-passing/RPC-style concurrency, which aligns naturally with database transactions. One such example is our \textit{snapshot-monitoring benchmark} (see Appendix~\ref{appendix:tour}). 
%that is motivated by a real bug~\cite{cockroach-issue-14099} in \texttt{CockroachDB}~\cite{cockroachdb-si-docs}.

\subsection{Conclusion}
%\smallskip
%\noindent
%\textbf{Conclusion.}
We present the first end-to-end framework that automatically verifies serializability for unbounded concurrent systems and generates proof certificates thereof.
Our approach bridges theory and practice, with the following key contributions:
% by implementing Bouajjani et al.'s decision procedure with crucial optimizations that make it practical.
%
%Key contributions include: 
(1) formalizing serializability for network systems, (2) implementing the decision procedure with proof generation, (3) developing optimizations that reduce complexity by orders of magnitude, and (4) demonstrating feasibility on various benchmarks inspired by real-world systems.

\section*{Data and Software Availability}
The data and software necessary to reproduce the experiments in this paper are available as part of the accompanying artifact~\cite{ArtifactRepository}.

\newpage
%	\medskip
%\noindent
%\textbf{Acknowledgements.}  
%
\section*{Acknowledgements}
The work of Amir was
partially supported by a Rothschild Fellowship from Yad Hanadiv (The Rothschild Foundation).
We thank Nate Foster, Fred B. Schneider, Lorin Hochstein, Petr Jancar, and Wolfgang Reisig for their contributions to this project.

%\newpage

{
	\bibliographystyle{splncs04}
	\bibliography{references}

@book{Re12,
	title={{Petri Nets: An Introduction}},
	author={Reisig, Wolfgang},
	volume={4},
	year={2012},
	publisher={Springer Science \& Business Media}
}

@article{ChLa85,
	title={{Distributed Snapshots: Determining Global States of Distributed 
	Systems}},
	author={Chandy, K. and Lamport, L.},
	journal={ACM Transactions on Computer Systems (TOCS)},
	volume={3},
	number={1},
	pages={63--75},
	year={1985},
	publisher={ACM New York, NY, USA}
}

@article{Parikh66,
	title={{On Context-Free Languages}},
	author={Parikh, Rohit J},
	journal={Journal of the ACM (JACM)},
	volume={13},
	number={4},
	pages={570--581},
	year={1966}
}

@article{Mu89,
	title={{Petri Nets: Properties, Analysis and Applications}},
	author={Murata, Tadao},
	journal={Proc. of the IEEE},
	volume={77},
	number={4},
	pages={541--580},
	year={1989},
	publisher={IEEE}
}

@inproceedings{Pr29,
	title={{Uber die vollstandigkeiteines gewissen systems der arithmetik 
	ganzer 
	zahlen, in welchen die addition als einzige operation hervortritt}},
	author={Presburger, Mojzesz},
	booktitle={Comptes-rendus du ler congres des mathematiciens des pays slavs},
	year={1929}
}

@inproceedings{SoVaVi20,
	title={{Automated Validation of State-Based Client-Centric Isolation with TLA+}},
	author={Soethout, Tim and van der Storm, Tijs and Vinju, Jurgen J},
	booktitle={Proc. 18th Int. Conf. Software Engineering and Formal Methods (SEFM)},
	pages={43--57},
	year={2020}
}

@article{La94,
	title={{The Temporal Logic of Actions}},
	author={Lamport, Leslie},
	journal={ACM Transactions on Programming Languages and Systems (TOPLAS)},
	number={3},
	pages={872--923},
	year={1994}
}

@misc{Ho24,
	author = {Hochstein, Lorin},
	title = {{Serializability and {TLA+}}},
	url = {https://surfingcomplexity.blog/2024/10/28/serializability-and-tla/},
	journal = {Surfing Complexity},
	year = {2024},
	month = {Oct},
	day = {28}
}

@inproceedings{YuMaLa99,
	title={{Model Checking {TLA+} Specifications}},
	author={Yu, Yuan and Manolios, Panagiotis and Lamport, Leslie},
	booktitle={Proc. Int. Conf. Correct Hardware Design and Verification Methods (CHARME)},
	pages={54--66},
	year={1999}
}

@inproceedings{KoKuTr19,
	title={{{TLA+} Model Checking Made Symbolic}},
	author={Konnov, Igor and Kukovec, Jure and Tran, Thanh-Hai},
	booktitle={Proc. ACM Int. Conf. on Object-Oriented Programming, Systems, Languages, and Applications (OOPSLA)},
	pages={1--30},
	year={2019}
}

@inproceedings{AlMcPe96,
	title={{Model-Checking of Correctness Conditions for Concurrent Objects}},
	author={Alur, Rajeev and McMillan, Ken and Peled, Doron},
	booktitle={Proc. 11th ACM/IEEE Symposium on Logic in Computer Science (LICS)},
	pages={219--228},
	year={1996}
}

@inproceedings{BoEmEnHa13,
	title={{Verifying Concurrent Programs Against Sequential Specifications}},
	author={Bouajjani, Ahmed and Emmi, Michael and Enea, Constantin and Hamza, Jad},
	booktitle={Proc. 22nd European Symposium on Programming (ESOP)},
	pages={290--309},
	year={2013}
}

@article{EsGrKoTr76,
	title={{The Notions of Consistency and Predicate Locks in a Database System}},
	author={Eswaran, Kapali and Gray, Jim and Lorie, Raymond and Traiger, Irving},
	journal={Communications of the ACM},
	volume={19},
	number={11},
	pages={624--633},
	year={1976},
	publisher={ACM New York, NY, USA}
}

@article{Pa79,
	title={{The Serializability of Concurrent Database Updates}},
	author={Papadimitriou, Christos},
	journal={Journal of the ACM (JACM)},
	volume={26},
	number={4},
	pages={631--653},
	year={1979},
	publisher={ACM New York, NY, USA}
}

@Misc{EsNi24,
	Title                    = {{Decidability Issues for Petri Nets --- A Survey}},
	Author                   = {Esparza, Javier and Nielsen, Mogens},
	Note                     = {{T}echnical Report. 
	\url{http://arxiv.org/abs/2411.01592}},
	Year                     = {2024}
}

@inproceedings{Es96,
	author    = {Javier Esparza},
	title     = {{Decidability and Complexity of {Petri} Net Problems --- An Introduction}},
	booktitle = {Advanced Course on {Petri} Nets},
	year      = {1996},
	pages     = {374--428},
	publisher = {Springer},
	series    = {Lecture Notes in Computer Science},
	volume    = {1491}
}

@inproceedings{Ma81,
	title={{An Algorithm for the General Petri Net Reachability Problem}},
	author={Mayr, Ernst},
	booktitle={Proc. 13th Annual ACM Symposium on Theory of Computing (STOC)},
	pages={238--246},
	year={1981}
}

@inproceedings{Ko82,
	title={{Decidability of Reachability in Vector Addition Systems}},
	author={Kosaraju, Sambasiva},
	booktitle={Proc. 14th Annual ACM Symposium on Theory of Computing (STOC)},
	pages={267--281},
	year={1982}
}

@article{La92,
	title={{A Structure to Decide Reachability in Petri Nets}},
	author={Lambert, Jean-Luc},
	journal={Theoretical Computer Science},
	volume={99},
	number={1},
	pages={79--104},
	year={1992},
	publisher={Elsevier}
}

@inproceedings{CzWo22,
	title={{Reachability in Vector Addition Systems is Ackermann-Complete}},
	author={Czerwi{\'n}ski, Wojciech and Orlikowski, {\L}ukasz},
	booktitle={Proc. 62nd Annual Symposium on Foundations of Computer Science (FOCS)},
	pages={1229--1240},
	year={2022}
}

@inproceedings{DiLa20,
	title={{KReach: A Tool for Reachability in Petri Nets}},
	author={Dixon, Alex and Lazi{\'c}, Ranko},
	booktitle={Proc. 26th Int. Conf. on Tools and Algorithms for the Construction and Analysis of Systems (TACAS)},
	pages={405--412},
	year={2020}
}

@article{XiZhLi21,
	title={{DICER 2.0: A New Model Checker for Data-Flow Errors of Concurrent Software Systems}},
	author={Xiang, Dongming and Zhao, Fang and Liu, Yaping},
	journal={Mathematics},
	volume={9},
	number={9},
	pages={966},
	year={2021},
	publisher={MDPI}
}

@inproceedings{HeRoSc13,
	title={{MARCIE --- Model Checking and Reachability Analysis Done Efficiently}},
	author={Heiner, Monika and Rohr, Christian and Schwarick, Martin},
	booktitle={Proc. 34th Int. Conf. on Applications and Theory of Petri Nets and Concurrency (PETRI NETS)},
	pages={389--399},
	year={2013}
}

@inproceedings{AmDa23,
	title={{SMPT: A Testbed for Reachability Methods in Generalized Petri Nets}},
	author={Amat, Nicolas and Dal Zilio, Silvano},
	booktitle={Proc. 25th Int. Symposium on Formal Methods (FM)},
	pages={445--453},
	year={2023}
}

@inproceedings{AmBeDa21,
	title={{On the Combination of Polyhedral Abstraction and SMT-Based Model Checking for Petri Nets}},
	author={Amat, Nicolas and Berthomieu, Bernard and Dal Zilio, Silvano},
	booktitle={Proc. 42nd Int. Conf. on Applications and Theory of Petri Nets and Concurrency (PETRI NETS)},
	pages={164--185},
	year={2021}
}

@article{amat2022polyhedral,
  title={{A Polyhedral Abstraction for Petri Nets and its Application to SMT-Based Model Checking}},
  author={Amat, Nicolas and Dal Zilio, Silvano and Berthomieu, Bernard},
  journal={Fundamenta Informaticae},
  volume={187},
  year={2022},
}

@inproceedings{AmDaHu22,
	title={{Property Directed Reachability for Generalized Petri Nets}},
	author={Amat, Nicolas and Dal Zilio, Silvano and Hujsa, Thomas},
	booktitle={Proc. 28th Int. Conf. on Tools and Algorithms for the Construction and Analysis of Systems (TACAS)},
	pages={505--523},
	year={2022}
}

@inproceedings{DeBj08,
	title={{Z3: An Efficient SMT Solver}},
	author={De Moura, Leonardo and Bj{\o}rner, Nikolaj},
	booktitle={Proc. 14th Int. Conf. on Tools and Algorithms for the Construction and Analysis of Systems (TACAS)},
	pages={337--340},
	year={2008}
}

@article{Lo17,
	title={{Testing for Linearizability}},
	author={Lowe, Gavin},
	journal={Concurrency and Computation: Practice and Experience},
	volume={29},
	number={4},
	pages={e3928},
	year={2017},
	publisher={Wiley Online Library}
}

@inproceedings{PrGr12,
	title={{Fully Automatic and Precise Detection of Thread Safety Violations}},
	author={Pradel, Michael and Gross, Thomas},
	booktitle={Proc. 33rd ACM SIGPLAN Conf. on Programming Language Design and Implementation (PLDI)},
	pages={521--530},
	year={2012}
}

@inproceedings{PrGr13,
	title={{Automatic Testing of Sequential and Concurrent Substitutability}},
	author={Pradel, Michael and Gross, Thomas},
	booktitle={Proc. 35th Int. Conf. on Software Engineering (ICSE)},
	pages={282--291},
	year={2013}
}

@inproceedings{BuDeMuTa10,
	title={{Line-up: A Complete and Automatic Linearizability Checker}},
	author={Burckhardt, Sebastian and Dern, Chris and Musuvathi, Madanlal and Tan, Roy},
	booktitle={Proc. 31st ACM SIGPLAN Conf. on Programming Language Design and Implementation (PLDI)},
	pages={330--340},
	year={2010}
}

@inproceedings{MuQaBaBaNaNe08,
	title={{Finding and Reproducing Heisenbugs in Concurrent Programs}},
	author={Musuvathi, Madanlal and Qadeer, Shaz and Ball, Thomas and Basler, Gerard and Nainar, Piramanayagam Arumuga and Neamtiu, Iulian},
	booktitle={Proc. 8th USENIX Symposium on Operating Systems Design and Implementations (OSDI)},
	year={2008}
}

@inproceedings{VeYaYo09,
	title={{Experience with Model Checking Linearizability}},
	author={Vechev, Martin and Yahav, Eran and Yorsh, Greta},
	booktitle={Proc. 16th Int. Symposium on Model Checking Software (SPIN)},
	pages={261--278},
	year={2009}
}

@article{Ho97,
	title={{The Model Checker SPIN}},
	author={Holzmann, Gerard},
	journal={IEEE Transactions on Software Engineering},
	volume={23},
	number={5},
	pages={279--295},
	year={1997},
	publisher={IEEE}
}

@article{LiChLiSuZhDo12,
	title={{Verifying Linearizability via Optimized Refinement Checking}},
	author={Liu, Yang and Chen, Wei and Liu, Yanhong and Sun, Jun and Zhang, Shao Jie and Dong, Jin Song},
	journal={IEEE Transactions on Software Engineering},
	volume={39},
	number={7},
	pages={1018--1039},
	year={2012},
	publisher={IEEE}
}

@inproceedings{SuLuDoPa09,
	title={{PAT: Towards Flexible Verification Under Fairness}},
	author={Sun, Jun and Liu, Yang and Dong, Jin Song and Pang, Jun},
	booktitle={Proc. 21st Int. Conf. on Computer Aided Verification (CAV)},
	pages={709--714},
	year={2009}
}

@article{BoEmEnHa18,
	title={{On Reducing Linearizability to State Reachability}},
	author={Bouajjani, Ahmed and Emmi, Michael and Enea, Constantin and Hamza, Jad},
	journal={Information and Computation},
	volume={261},
	pages={383--400},
	year={2018},
	publisher={Elsevier}
}

@article{Ra78,
	title={{The Covering and Boundedness Problems for Vector Addition Systems}},
	author={Rackoff, Charles},
	journal={Theoretical Computer Science},
	volume={6},
	number={2},
	pages={223--231},
	year={1978},
	publisher={Elsevier}
}

@article{CoDoGr05,
	title={{Verifying Concurrent Data Structures by Simulation}},
	author={Colvin, Robert and Doherty, Simon and Groves, Lindsay},
	journal={Electronic Notes in Theoretical Computer Science},
	volume={137},
	number={2},
	pages={93--110},
	year={2005},
	publisher={Elsevier}
}

@article{DeScWe11,
	title={{Mechanically Verified Proof Obligations for Linearizability}},
	author={Derrick, John and Schellhorn, Gerhard and Wehrheim, Heike},
	journal={ACM Transactions on Programming Languages and Systems (TOPLAS)},
	number={1},
	pages={1--43},
	year={2011}
}

@inproceedings{Va10,
	title={{Automatically Proving Linearizability}},
	author={Vafeiadis, Viktor},
	booktitle={Proc. 22nd Int. Conf. on Computer Aided Verification (CAV)},
	pages={450--464},
	year={2010}
}

@article{HeWi90,
	title={{Linearizability: A Correctness Condition for Concurrent Objects}},
	author={Herlihy, Maurice and Wing, Jeannette},
	journal={ACM Transactions on Programming Languages and Systems (TOPLAS)},
	number={3},
	pages={463--492},
	year={1990}
}

@article{WaSt06a,
	title={{Runtime Analysis of Atomicity for Multithreaded Programs}},
	author={Wang, Liqiang and Stoller, Scott},
	journal={IEEE Transactions on Software Engineering},
	volume={32},
	number={2},
	pages={93--110},
	year={2006},
	publisher={IEEE}
}

@inproceedings{WaSt06b,
	title={{Accurate and Efficient Runtime Detection of Atomicity Errors in Concurrent Programs}},
	author={Wang, Liqiang and Stoller, Scott},
	booktitle={Proc. 11th ACM SIGPLAN Symposium on Principles and Practice of Parallel Programming (PPoPP)},
	pages={137--146},
	year={2006}
}

@inproceedings{GoKoVa25,
	title={{Relinche: Automatically Checking Linearizability under Relaxed Memory Consistency}},
	author={Golovin, Pavel and Kokologiannakis, Michalis and Vafeiadis, Viktor},
	booktitle={Proc. 52nd ACM SIGPLAN-SIGACT Symposium on Principles of 
	Programming Languages (POPL)},
	pages={2090--2117},
	year={2025}
}

@inproceedings{OuDe17,
	title={{Checking Concurrent Data Structures Under the C/C++ 11 Memory Model}},
	author={Ou, Peizhao and Demsky, Brian},
	booktitle={Proc. 22nd ACM SIGPLAN Symposium on Principles and Practice of Parallel Programming (PPoPP)},
	pages={45--59},
	year={2017}
}

@inproceedings{KoDeSoTsAl23,
	title={{Lincheck: A Practical Framework for Testing Concurrent Data Structures on JVM}},
	author={Koval, Nikita and Fedorov, Alexander and Sokolova, Maria and Tsitelov, Dmitry and Alistarh, Dan},
	booktitle={Proc. 35th Int. Conf. on Computer Aided Verification (CAV)},
	pages={156--169},
	year={2023}
}

@inproceedings{AmRiReSaYa07,
	title={{Comparison Under Abstraction for Verifying Linearizability}},
	author={Amit, Daphna and Rinetzky, Noam and Reps, Thomas and Sagiv, Mooly and Yahav, Eran},
	booktitle={Proc. 19th Int. Conf. on Computer Aided Verification (CAV)},
	pages={477--490},
	year={2007}
}

@inproceedings{BuAlMa07,
	title={{Checkfence: Checking Consistency of Concurrent Data Types on Relaxed Memory Models}},
	author={Burckhardt, Sebastian and Alur, Rajeev and Martin, Milo},
	booktitle={Proc. 28th ACM SIGPLAN Conf. on Programming Language Design and Implementation (PLDI)},
	pages={12--21},
	year={2007}
}

@inproceedings{Va09,
	title={{Shape-Value Abstraction for Verifying Linearizability}},
	author={Vafeiadis, Viktor},
	booktitle={Proc. 9th Int. Workshop on Verification, Model Checking, and Abstract Interpretation (VMCAI)},
	pages={335--348},
	year={2009}
}

@inproceedings{BeLeMaRaSa08,
	title={{Thread Quantification for Concurrent Shape Analysis}},
	author={Berdine, Josh and Lev-Ami, Tal and Manevich, Roman and Ramalingam, Ganesan and Sagiv, Mooly},
	booktitle={Proc. 20th Int. Conf. on Computer Aided Verification (CAV)},
	pages={399--413},
	year={2008}
}

@inproceedings{FaMa08,
	title={{Monitoring Atomicity in Concurrent Programs}},
	author={Farzan, Azadeh and Madhusudan, Parthasarathy},
	booktitle={Proc. 20th Int. Conf. on Computer Aided Verification (CAV)},
	pages={52--65},
	year={2008}
}

@inproceedings{CoOlPnTuZu07,
	title={{Verifying Correctness of Transactional Memories}},
	author={Cohen, Ariel and O'Leary, John and Pnueli, Amir and Tuttle, Mark and Zuck, Lenore},
	booktitle={Proc. 7th Int. Conf. on Formal Methods in Computer-Aided Design (FMCAD)},
	pages={37--44},
	year={2007}
}

@inproceedings{FlFrYi08,
	title={{Velodrome: A Sound and Complete Dynamic Atomicity Checker for Multithreaded Programs}},
	author={Flanagan, Cormac and Freund, Stephen and Yi, Jaeheon},
	booktitle={Proc. 29th ACM SIGPLAN Conf. on Programming Language Design and Implementation (PLDI)},
	pages={293--303},
	year={2008}
}

@article{Ta08,
	title={{A Compositional Method for Verifying Software Transactional Memory Implementations}},
	author={Tasiran, Serdar},
	journal={Microsoft Research, Technical Report MSR-TR-2008--56},
	year={2008}
}

@inproceedings{BuGoMuYa12,
	title={{Concurrent Library Correctness on the TSO Memory Model}},
	author={Burckhardt, Sebastian and Gotsman, Alexey and Musuvathi, Madanlal and Yang, Hongseok},
	booktitle={Proc. 21st European Symposium on Programming (ESOP)},
	pages={87--107},
	year={2012}
}

@inproceedings{HeWe87,
	title={{Axioms for Concurrent Objects}},
	author={Herlihy, Maurice and Wing, Jeannette},
	booktitle={Proc. 14th ACM SIGPLAN-SIGACT Symposium on Principles of 
	Programming Languages (POPL)},
	pages={13--26},
	year={1987}
}

@article{WiGo93,
	title={{Testing and Verifying Concurrent Objects}},
	author={Wing, Jeannette and Gong, Chun},
	journal={Journal of Parallel and Distributed Computing},
	volume={17},
	number={1-2},
	pages={164--182},
	year={1993},
	publisher={Elsevier}
}

@inproceedings{EmEnHa15,
	title={{Monitoring Refinement via Symbolic Reasoning}},
	author={Emmi, Michael and Enea, Constantin and Hamza, Jad},
	booktitle={Proc. 36th ACM SIGPLAN Conf. on Programming Language Design and Implementation (PLDI)},
	pages={260--269},
	year={2015}
}

@article{ChHeSeVa15,
	title={{Aspect-Oriented Linearizability Proofs}},
	author={Chakraborty, Soham and Henzinger, Thomas and Sezgin, Ali and Vafeiadis, Viktor},
	journal={Logical Methods in Computer Science},
	volume={11},
	year={2015}
}

@article{GiKo97,
	title={{Testing Shared Memories}},
	author={Gibbons, Phillip and Korach, Ephraim},
	journal={SIAM Journal on Computing},
	volume={26},
	number={4},
	pages={1208--1244},
	year={1997},
	publisher={SIAM}
}

@inproceedings{SiMaWaGu11a,
	title={{Predicting Serializability Violations: SMT-Based Search vs. DPOR-Based Search}},
	author={Sinha, Arnab and Malik, Sharad and Wang, Chao and Gupta, Aarti},
	booktitle={Proc. 7th Haifa Verification Conference (HVC)},
	pages={95--114},
	year={2011}
}

@inproceedings{BiEn19,
	title={{On the Complexity of Checking Transactional Consistency}},
	author={Biswas, Ranadeep and Enea, Constantin},
	booktitle={Proc. ACM Int. Conf. on Object-Oriented Programming, Systems, Languages, and Applications (OOPSLA)},
	pages={1--28},
	year={2019}
}

@inproceedings{EmEn18,
	title={{Sound, Complete, and Tractable Linearizability Monitoring for Concurrent Collections}},
	author={Emmi, Michael and Enea, Constantin},
	booktitle={Proc. 45th ACM SIGPLAN Symposium on Principles of Programming Languages (POPL)},
	pages={1--27},
	year={2018}
}

@inproceedings{BoEmCoHa15,
	title={{Tractable Refinement Checking for Concurrent Objects}},
	author={Bouajjani, Ahmed and Emmi, Michael and Enea, Constantin and Hamza, Jad},
	booktitle={Proc. 42nd ACM SIGPLAN Symposium on Principles of Programming Languages (POPL)},
	pages={651--662},
	year={2015}
}

@article{FiOhRiYa10,
	title={{Abstraction for Concurrent Objects}},
	author={Filipovi{\'c}, Ivana and O’Hearn, Peter and Rinetzky, Noam and Yang, Hongseok},
	journal={Theoretical Computer Science},
	volume={411},
	number={51-52},
	pages={4379--4398},
	year={2010},
	publisher={Elsevier}
}

@inproceedings{OhRiVeYaYo10,
	title={{Verifying Linearizability with Hindsight}},
	author={O'Hearn, Peter and Rinetzky, Noam and Vechev, Martin and Yahav, Eran and Yorsh, Greta},
	booktitle={Proc. 29th Symposium on Principles of Distributed Computing (PODC)},
	pages={85--94},
	year={2010}
}

@inproceedings{LiChLiSu09,
	title={{Model Checking Linearizability via Refinement}},
	author={Liu, Yang and Chen, Wei and Liu, Yanhong and Sun, Jun},
	booktitle={Proc. 16th Int. Symposium on Formal Methods (FM)},
	pages={321--337},
	year={2009}
}

@inproceedings{Zh11,
	title={{Scalable Automatic Linearizability Checking}},
	author={Zhang, Shao Jie},
	booktitle={Proc. 33rd Int. Conf. on Software Engineering (ICSE)},
	pages={1185--1187},
	year={2011}
}

@inproceedings{BuNeSe11,
	title={{Specifying and Checking Semantic Atomicity for Multithreaded Programs}},
	author={Burnim, Jacob and Necula, George and Sen, Koushik},
	booktitle={Proc. 16th Int. Conf. on Architectural Support for Programming Languages and Operating Systems (ASPLOS)},
	pages={79--90},
	year={2011}
}

@inproceedings{BuSe09,
	title={{Asserting and Checking Determinism for Multithreaded Programs}},
	author={Burnim, Jacob and Sen, Koushik},
	booktitle={Proc. 7th  Symposium on the Foundations of Software Engineering (FSE)},
	pages={3--12},
	year={2009}
}

@inproceedings{Fl04,
	title={{Verifying Commit-Atomicity Using Model-Checking}},
	author={Flanagan, Cormac},
	booktitle={Proc. 11th Int. Symposium on Model Checking Software (SPIN)},
	pages={252--266},
	year={2004}
}

@inproceedings{CeRaZuChAl10,
	title={{Model Checking of Linearizability of Concurrent List Implementations}},
	author={{\v{C}}ern{\`y}, Pavol and Radhakrishna, Arjun and Zufferey, Damien and Chaudhuri, Swarat and Alur, Rajeev},
	booktitle={Proc. 22nd Int. Conf. on Computer Aided Verification (CAV)},
	pages={465--479},
	year={2010}
}

@inproceedings{CoGrLuMo06,
	title={{Formal Verification of a Lazy Concurrent List-Based Set Algorithm}},
	author={Colvin, Robert and Groves, Lindsay and Luchangco, Victor and Moir, Mark},
	booktitle={Proc. 18th Int. Conf. on Computer Aided Verification (CAV)},
	pages={475--488},
	year={2006}
}

@inproceedings{VeYa08,
	title={{Deriving Linearizable Fine-Grained Concurrent Objects}},
	author={Vechev, Martin and Yahav, Eran},
	booktitle={Proc. 29th ACM SIGPLAN Conf. on Programming Language Design and Implementation (PLDI)},
	pages={125--135},
	year={2008}
}

@inproceedings{ZhPeHa15,
	title={{Poling: SMT Aided Linearizability Proofs}},
	author={Zhu, He and Petri, Gustavo and Jagannathan, Suresh},
	booktitle={Proc. 27th Int. Conf. on Computer Aided Verification (CAV)},
	pages={3--19},
	year={2015}
}

@inproceedings{AbJoTr16,
	title={{Automated Verification of Linearization Policies}},
	author={Abdulla, Parosh and Jonsson, Bengt and Trinh, Cong},
	booktitle={Proc. 23rd Int. Symposium on Static Analysis (SAS)},
	pages={61--83},
	year={2016}
}

@inproceedings{EmEn19,
	title={{Violat: Generating Tests of Observational Refinement for Concurrent 
	Objects}},
	author={Emmi, Michael and Enea, Constantin},
	booktitle={Proc. 31st Int. Conf. on Computer Aided Verification (CAV)},
	pages={534--546},
	year={2019}
}

@inproceedings{BoEnWa17,
	title={{Checking Linearizability of Concurrent Priority Queues}},
	author={Bouajjani, Ahmed and Enea, Constantin and Wang, Chao},
	booktitle={Proc. 28th Int. Conf. on Concurrency Theory (CONCUR)},
	year={2017}
}

@misc{EmEn17,
	title={{Exposing Non-Atomic Methods of Concurrent Objects}},
	author={Emmi, Michael and Enea, Constantin},
	note={{T}echnical Report. 
	\url{http://arxiv.org/abs/1706.09305}},
	year={2017}
}

@inproceedings{SiMaWaGu11b,
	title={{Predictive Analysis for Detecting Serializability Violations Through Trace Segmentation}},
	author={Sinha, Arnab and Malik, Sharad and Wang, Chao and Gupta, Aarti},
	booktitle={Proc. 9th ACM/IEEE Int. Conf. on Formal Methods and Models for Codesign (MEMPCODE)},
	pages={99--108},
	year={2011}
}

@inproceedings{LiFe13,
	title={{Modular Verification of Linearizability with Non-Fixed 
	Linearization Points}},
	author={Liang, Hongjin and Feng, Xinyu},
	booktitle={Proc. 34th ACM SIGPLAN Conf. on Programming Language Design 
	and Implementation (PLDI)},
	pages={459--470},
	year={2013}
}

@inproceedings{MaLeSaRaBe08,
	title={{Heap Decomposition for Concurrent Shape Analysis}},
	author={Manevich, Roman and Lev-Ami, Tal and Sagiv, Mooly and Ramalingam, Ganesan and Berdine, Josh},
	booktitle={Proc. 15th Int. Symposium on Static Analysis (SAS)},
	pages={363--377},
	year={2008}
}

@inproceedings{EnKo24,
	title={{Scenario-Based Proofs for Concurrent Objects}},
	author={Enea, Constantin and Koskinen, Eric},
	booktitle={Proc. ACM Int. Conf. on Object-Oriented Programming, Systems, Languages, and Applications (OOPSLA)},
	pages={1294--1323},
	year={2024}
}

@inproceedings{FeEnMoRiSh18,
	title={{Order Out of Chaos: Proving Linearizability Using Local Views}},
	author={Feldman, Yotam and Enea, Constantin and Morrison, Adam and Rinetzky, Noam and Shoham, Sharon},
	booktitle={Proc. Int. Symposium on Distributed Computing (DISC)},
	year={2018}
}

@inproceedings{BoEmEnMu17,
	title={{Proving Linearizability Using Forward Simulations}},
	author={Bouajjani, Ahmed and Emmi, Michael and Enea, Constantin and Mutluergil, Suha},
	booktitle={Proc. 29th Int. Conf. on Computer Aided Verification (CAV)},
	pages={542--563},
	year={2017}
}

@inproceedings{NaGhSa24,
	title={{Algorithms for In-Place, Consistent Network Update}},
	author={Namjoshi, Kedar and Gheissi, Sougol and Sabnani, Krishan},
	booktitle={Proc. Int. Conf. of the ACM Special Interest Group on Data Communication (SIGCOMM)},
	pages={244--257},
	year={2024}
}

@misc{ArtifactRepository,
	title = {{Supplementary Artifact}},
	year = {2025},
	Note={\url{https://zenodo.org/records/17253581}}
}

@inproceedings{Le22,
	title={{The Reachability Problem for Petri Nets is Not Primitive Recursive}},
	author={Leroux, J{\'e}r{\^o}me},
	booktitle={Proc. 62nd Annual Symposium on Foundations of Computer Science (FOCS)},
	pages={1241--1252},
	year={2022}
}

@article{Le09,
  title     = {{The General Vector Addition System Reachability Problem by Presburger Inductive Invariants}},
	journal={Logical Methods in Computer Science (LMCS)},
  publisher = {IEEE},
  author    = {Leroux, Jérôme},
  year      = {2010}
}

@inproceedings{BiCiClZh99,
	title={{Symbolic Model Checking without BDDs}},
	author={Biere, Armin and Cimatti, Alessandro and Clarke, Edmund and Zhu, Yunshan},
	booktitle={Proc. 5th Int. Conf. on Tools and Algorithms for the Construction and Analysis of Systems (TACAS)},
	year={1999}
}

@article{Mu77,
  title   = {{State Equation, Controllability, and Maximal Matchings of Petri Nets}},
  volume  = {22},
  pages   = {412--416},
  journal = {Transactions on Automatic Control},
  author  = {Murata, Tadao},
  year    = {1977}
}

@InProceedings{ShSiSt20,
	title = {{Checking Safety Properties Using Induction and a SAT-Solver}},
	author = {Sheeran, Mary and Singh, Satnam and Stålmarck, Gunnar},
	booktitle = {Proc. 3rd Int. Conf. on Formal Methods in Computer-Aided Design (FMCAD)},
	Year = {2000},
}

@inproceedings{Br11,
	title={{SAT-Based Model Checking without Unrolling}},
	author={Bradley, Aaron},
	booktitle={Proc. 12th Int. Conf. on Verification, Model Checking, and Abstract Interpretation (VMCAI)},
	pages={70--87},
	year={2011}
}

@misc{mcc:2025,
	Title = {{Complete Results for the 2025 Edition of the Model Checking Contest}},
	Author = {Kordon, F. and Hulin-Hubard, F.  and Jezequel, L. and Paviot-Adet, E. and
		Nivon, Q. and Amat, N. and Berthomieu,  B. and {Dal Zilio}, S. and 
		Ding, Z. and He, Y. and Li, S. and  Jiang, C. and Jensen, {P. G.} and
		Srba, J. and Thierry-Mieg, Y.},
	Howpublished = {\url{https://mcc.lip6.fr/2025/results.php}},
	Year = {2025}
	}

@inproceedings{Be87,
	title = {{Transformations and Decompositions of Nets}},
	booktitle = {Petri Nets: Central Models and their Properties},
	publisher = {Springer},
	author = {Berthelot, G.},
	year = {1987}
}

@article{BeLeDa20,
	title={{Counting Petri Net Markings from Reduction Equations}},
	author={Berthomieu, Bernard and Le Botlan, Didier and Dal Zilio, Silvano},
	journal={International Journal on Software Tools for Technology Transfer (STTT)},
	volume={22},
	pages={163--181},
	year={2020},
	publisher={Springer}
}

@misc{postgresql-transaction-iso,
	title        = {{PostgreSQL: Transaction Isolation}},
	author       = {{PostgreSQL Global Development Group}},
	howpublished = {\url{https://www.postgresql.org/docs/current/transaction-iso.html}},
	note         = {[Online; accessed 2025-10-2]},
	year         = {2025}
}

@misc{cockroachdb-si-docs,
	title        = {{ CockroachDB: Revision 7}},
	howpublished = {\url{https://dbdb.io/db/cockroachdb/revisions/7}},
	note         = {[Online; accessed 2025-10-02]},
	year         = {2018}
}

@misc{cockroach-issue-14099,
	author       = {{CockroachDB Contributors}},
	title        = {{GitHub Issue \#14099: UPSERT Anomaly Under SNAPSHOT Isolation}},
	howpublished = {\url{https://github.com/cockroachdb/cockroach/issues/14099}},
	note         = {[Online; accessed 2025-07-10]},
	year         = {2018}
}

@article{BoDaGiIzMcReScTaVaVaWa14,
	title={{P4: Programming Protocol-Independent Packet Processors}},
	author={Bosshart, Pat and Daly, Dan and Gibb, Glen and Izzard, Martin and 
	McKeown, Nick and Rexford, Jennifer and Schlesinger, Cole and Talayco, Dan and 
	Vahdat, Amin and Varghese, George and Walker, David},
	journal={SIGCOMM Computer Communication Review},
	volume={44},
	number={3},
	pages={87--95},
	year={2014},
	publisher={Association for Computing Machinery (ACM)}
}

@article{XiWeFoNiXi15,
	author={Xia, Wenfeng and Wen, Yonggang and Foh, Chuan Heng and Niyato, Dusit and Xie, Haiyong},
	journal={IEEE Communications Surveys \& Tutorials}, 
	title={{A Survey on Software-Defined Networking}}, 
	year={2015},
	volume={17},
	number={1},
	pages={27-51}
	}

@article{KrRaVePaRoAzUh14,
	title={{Software-Defined Networking: A Comprehensive Survey}},
	author={Kreutz, Diego and Ramos, Fernando and Verissimo, Paulo Esteves and Rothenberg, Christian and Azodolmolky, Siamak and Uhlig, Steve},
	journal={Proc. of the IEEE},
	volume={103},
	number={1},
	pages={14--76},
	year={2014}
}

@inproceedings{BaBaBrKrLaMaMoMoNiNo22,
	title={{cvc5: A Versatile and Industrial-Strength SMT Solver}},
	author={Barbosa, Haniel and Barrett, Clark and Brain, Martin and Kremer, Gereon and Lachnitt, Hanna and Mann, Makai and Mohamed, Abdalrhman and Mohamed, Mudathir and Niemetz, Aina and N{\"o}tzli, Andres and Ozdemir, Alex and Preiner, Mathias and Reynolds, Andrew and Sheng, Ying  and Tinelli, Cesare and Zohar, Yoni},
	booktitle={Proc. 28th Int. Conf. on Tools and Algorithms for the Construction and Analysis of Systems (TACAS)},
	pages={415--442},
	year={2022}
}

@inproceedings{BaCoDeHaJoKiReTi11,
	title={{cvc4}},
	author={Barrett, Clark and Conway, Christopher and Deters, Morgan and Hadarean, Liana and Jovanovi{\'c}, Dejan and King, Tim and Reynolds, Andrew and Tinelli, Cesare},
	booktitle={Proc. 23rd Int. Conf. on Computer Aided Verification (CAV)},
	pages={171--177},
	year={2011}
}

@article{ViWeMa15,
	title={{Boolean Satisfiability Solvers and their Applications in Model Checking}},
	author={Vizel, Yakir and Weissenbacher, Georg and Malik, Sharad},
	journal={Proc. of the IEEE},
	volume={103},
	number={11},
	pages={2021--2035},
	year={2015}
}

@inproceedings{DrPe14,
	title={{LCD: Local Combining on Demand}},
	author={Drachsler-Cohen, Dana and Petrank, Erez},
	booktitle={Proc. 18th Int. Conf. on Principles of Distributed Systems (OPODIS)},
	pages={355--371},
	year={2014}
}

@inproceedings{DuLaSr25,
	title={{Statistical Model Checking of Stochastic Timed-Arc Petri Nets}},
	author={Dubois, Tanguy and Larsen, Kim and Srba, Ji{\v{r}}{\'\i}},
	booktitle={Proc. 46th Int. Conf. on Applications and Theory of Petri Nets and Concurrency (PETRI NETS)},
	pages={174--196},
	year={2025}
}

@inproceedings{AkChDaJaSa17,
	title={{On Petri Nets with Hierarchical Special Arcs}},
	author={Akshay, S and Chakraborty, Supratik and Das, Ankush and Jagannath, Vishal and Sandeep, Sai},
	booktitle={Proc. 28th Int. Conf. on Concurrency Theory (CONCUR)},
	year={2017}
}

@article{LiWaChSuZh02,
	title={{Modeling Workflow Processes with Colored Petri Nets}},
	author={Liu, Dongsheng and Wang, Jianmin and Chan, Stephen C and Sun, Jiaguang and Zhang, Li},
	journal={Computers in Industry},
	volume={49},
	number={3},
	pages={267--281},
	year={2002},
	publisher={Elsevier}
}

@article{Zu91,
	title={{Timed Petri Nets Definitions, Properties, and Applications}},
	author={Zuberek, Wlodek},
	journal={Microelectronics Reliability},
	volume={31},
	number={4},
	pages={627--644},
	year={1991},
	publisher={Elsevier}
}

@inproceedings{HuScReAb17,
	title={{Analyzing Hybrid Petri Nets with Multiple Stochastic Firings using HyPro}},
	author={H{\"u}ls, Jannik and Schupp, Stefan and Remke, Anne and Abraham, Erika},
	booktitle={Proc. 11th EAI Int. Conf. on Performance Evaluation Methodologies and Tools (VALUETOOLS)},
	pages={178--185},
	year={2017}
}

@inproceedings{AmBeDo14,
	title={{(Stochastic) Model Checking in GreatSPN}},
	author={Amparore, Elvio Gilberto and Beccuti, Marco and Donatelli, Susanna},
	booktitle={Proc. 35th Int. Conf. on Applications and Theory of Petri Nets and Concurrency (PETRI NETS)},
	pages={354--363},
	year={2014}
}

@inproceedings{AnPePe13,
	title={{Precise Robustness Analysis of Time Petri Nets with Inhibitor Arcs}},
	author={Andr{\'e}, {\'E}tienne and Pellegrino, Giuseppe and Petrucci, Laure},
	booktitle={Proc. 11th Int. Conf. on Formal Modeling and Analysis of Timed Systems (FORMATS)},
	pages={1--15},
	year={2013}
}

@inproceedings{PiHaRe20,
	title={{Classic and Non-Prophetic Model Checking for Hybrid Petri Nets with Stochastic Firings}},
	author={Pilch, Carina and Hartmanns, Arnd and Remke, Anne},
	booktitle={Proc. 23rd Int. Conf. on Hybrid Systems: Computation and Control (HSCC)},
	pages={1--11},
	year={2020}
}

@book{BiHeCa09,
	title={{Handbook of Satisfiability}},
	author={Biere, Armin and Heule, Marijn and van Maaren, Hans},
	volume={185},
	year={2009},
	publisher={IOS Press}
}

@article{BeDaWe18,
	title={{A Unifying View on SMT-Based Software Verification}},
	author={Beyer, Dirk and Dangl, Matthias and Wendler, Philipp},
	journal={Journal of Automated Reasoning (JAR)},
	volume={60},
	number={3},
	pages={299--335},
	year={2018},
	publisher={Springer}
}

@inproceedings{CiGrScSe13,
	title={{The MathSAT5 SMT Solver}},
	author={Cimatti, Alessandro and Griggio, Alberto and Schaafsma, Bastiaan Joost and Sebastiani, Roberto},
	booktitle={Proc. 19th Int. Conf. on Tools and Algorithms for the Construction and Analysis of Systems (TACAS)},
	pages={93--107},
	year={2013}
}

@inproceedings{BjGa15,
	title={{Property Directed Polyhedral Abstraction}},
	author={Bj{\o}rner, Nikolaj and Gurfinkel, Arie},
	booktitle={Proc. 16th Int. Workshop on Verification, Model Checking, and Abstract Interpretation (VMCAI)},
	pages={263--281},
	year={2015}
}

@inproceedings{ViGu14,
	title={{Interpolating Property Directed Reachability}},
	author={Vizel, Yakir and Gurfinkel, Arie},
	booktitle={Proc. 26th Int. Conf. on Computer Aided Verification (CAV)},
	pages={260--276},
	year={2014}
}

@inproceedings{Wo18,
	title={{Petri Net Model Checking with LoLA 2}},
	author={Wolf, Karsten},
	booktitle={Proc. 39th Int. Conf. on Applications and Theory of Petri Nets and Concurrency (PETRI NETS)},
	pages={351--362},
	year={2018}
}

@book{Kl56,
	title={{Representation of Events in Nerve Nets and Finite Automata}},
	author={Kleene, Stephen Cole},
	volume={34},
	year={1956},
	publisher={Princeton University Press Princeton}
}

@inproceedings{BaStTi10,
	title={{The SMT-LIB Standard: Version 2.0}},
	author={Barrett, Clark and Stump, Aaron and Tinelli, Cesare},
	booktitle={Proc. of the 8th Int. Workshop on Satisfiability Modulo Theories (SMT)},
	pages={14},
	year={2010}
}

@inproceedings{Ra12,
	title={{Safety Slicing Petri Nets}},
	author={Rakow, Astrid},
	booktitle={Proc. 33rd Int. Conf. on Applications and Theory of Petri Nets and Concurrency (PETRI NETS)},
	pages={268--287},
	year={2012}
}

@inproceedings{BlLa23,
	author = {Blankestijn, Max and Laarman, Alfons},
	title = {{Incremental Property Directed Reachability}},
	booktitle = {Proc. 24th Int. Conf. on Formal Engineering Methods (ICFEM)},
	pages = {208--227},
	year = {2023}
}

@article{AnBeCh16,
	title={{Formalising Concurrent UML State Machines Using Coloured Petri Nets}},
	author={Andr{\'e}, {\'E}tienne and Benmoussa, Mohamed Mahdi and Choppy, Christine},
	journal={Formal Aspects of Computing (FAC)},
	volume={28},
	number={5},
	pages={805--845},
	year={2016},
	publisher={Springer}
}

@article{EtChRo16,
	title={{Preserving Partial-Order Runs in Parametric Time Petri Nets}},
	author={Andr{\'e}, {\'E}tienne and Chatain, Thomas and Rodriguez, C{\'e}sar},
	journal={ACM Transactions on Embedded Computing Systems (TECS)},
	volume={16},
	number={2},
	pages={1--26},
	year={2016},
	publisher={ACM New York}
}

@inproceedings{AnBeCh14,
	title={{Translating UML State Machines to Coloured Petri Nets Using Acceleo: A Report}},
	author={Andr{\'e}, {\'E}tienne and Benmoussa, Mohamed Mahdi and Choppy, Christine},
	booktitle={Proc. 3rd Int. Workshop on Engineering Safety and Security Systems (ESSS)},
	year={2014}
}

@inproceedings{CiGrMoTo14,
	title={{IC3 Modulo Theories via Implicit Predicate Abstraction}},
	author={Cimatti, Alessandro and Griggio, Alberto and Mover, Sergio and Tonetta, Stefano},
	booktitle={Proc. 20th Int. Conf. on Tools and Algorithms for the Construction and Analysis of Systems (TACAS)},
	pages={46--61},
	year={2014}
}

@article{CiGrMoTo16,
	title={{Infinite-State Invariant Checking with IC3 and Predicate Abstraction}},
	author={Cimatti, Alessandro and Griggio, Alberto and Mover, Sergio and Tonetta, Stefano},
	journal={Formal Methods in System Design (FMSD)},
	volume={49},
	number={3},
	pages={190--218},
	year={2016},
	publisher={Springer}
}

@inproceedings{DuRo17,
	title={{FuseIC3: An Algorithm for Checking Large Design Spaces}},
	author={Dureja, Rohit and Rozier, Kristin Yvonne},
	booktitle={Proc. 17th Int. Conf. on Formal Methods in Computer-Aided Design (FMCAD)},
	pages={164--171},
	year={2017}
}

@inproceedings{VeSeSc05,
	title={{On the Complexity of Equational Horn Clauses}},
	author={Verma, Kumar Neeraj and Seidl, Helmut and Schwentick, Thomas},
	booktitle={Proc. 20th Int. Conf. on Automated Deduction (CADE)},
	pages={337--352},
	year={2005}
}

@misc{Rocq,
	title        = {{The Rocq Prover}},
	howpublished = {\url{https://rocq-prover.org/}},
	note         = {Accessed: 2025-12-13}
}

@inproceedings{KwNoPa02,
	title={{PRISM: Probabilistic Symbolic Model Checker}},
	author={Kwiatkowska, Marta and Norman, Gethin and Parker, David},
	booktitle={Proc. 12th Int. Conf. on Modelling Techniques and Tools for Computer Performance Evaluation (TOOLS)},
	pages={200--204},
	year={2002}
}

@inproceedings{DeJuKaVo17,
	title={{A Storm is Coming: A Modern Probabilistic Model Checker}},
	author={Dehnert, Christian and Junges, Sebastian and Katoen, Joost-Pieter and Volk, Matthias},
	booktitle={Proc. 29th Int. Conf. on Computer Aided Verification (CAV)},
	pages={592--600},
	year={2017}
}

@inproceedings{KiLiZhKiLeSeSe20,
	title={{TEA: Enabling State-Intensive Network Functions on Programmable Switches}},
	author={Kim, Daehyeok and Liu, Zaoxing and Zhu, Yibo and Kim, Changhoon and Lee, Jeongkeun and Sekar, Vyas and Seshan, Srinivasan},
	booktitle={Proc. Int. Conf. of the ACM Special Interest Group on Data Communication (SIGCOMM)},
	pages={90--106},
	year={2020}
}

@inproceedings{HoLaArReWa22,
	title={{Modular Switch Programming Under Resource Constraints}},
	author={Hogan, Mary and Landau-Feibish, Shir and Arashloo, Mina Tahmasbi and Rexford, Jennifer and Walker, David},
	booktitle={Proc. 19th USENIX Symposium on Networked Systems Design and Implementation (NSDI)},
	pages={193--207},
	year={2022}
}
}

\newpage
\appendix
%\newpage
%% Appendix
\appendix

\section{Tour of Examples}
\label{appendix:tour}

Next, we will walk through a series of examples, in varying levels of complexity. Each example will demonstrate different aspects of serializable vs. non-serializable programs.
The first examples are relatively basic, while the last examples have higher complexity and are motivated by real-world programs, e.g., BGP routing policy updates.
Each thread is spawned any number of times (and at any point in time) by a \textit{request} from the user, marked {\color{ForestGreen}$\blacklozenge_\text{req}$}. The request executes, and eventually returns a \textit{response}  {\color{red}$\blacklozenge_\text{resp}$}.
For instance, in the three examples presented in \Cref{sec:introduction} (Listings~\ref{lst:MotivatingExample1Ser},~\ref{lst:MotivatingExample2NonSer}, and~\ref{lst:MotivatingExample3Ser}), there is a single type of request {\color{ForestGreen}$\blacklozenge_\text{main}$} and (up to) two types of responses {\color{red}$\blacklozenge_0$}, {\color{red}$\blacklozenge_1$}.
We analyze serializability through the lens of such {\color{ForestGreen}$\blacklozenge_\text{req}$}/{\color{red}$\blacklozenge_\text{resp}$} pairs. Specifically, the programs in Listings~\ref{lst:MotivatingExample1Ser} and~\ref{lst:MotivatingExample3Ser} only induce pairs of the type {\color{ForestGreen}$\blacklozenge_\text{main}$}/{\color{red}$\blacklozenge_1$}, while the program in Listing~\ref{lst:MotivatingExample2NonSer} can also induce {\color{ForestGreen}$\blacklozenge_\text{main}$}/{\color{red}$\blacklozenge_0$}, as formulated by our Network System framework (see \Cref{sec:problem-definition}). 
We depict global variables with upper-case characters, while local variables (for each request) are depicted with lower-case ones.
Unless explicitly stated otherwise, all global and local variables are initialized to 0.
The symbol \texttt{?} depicts a nondeterministic choice between 0 and 1. All other constructs (\texttt{while}, \texttt{yield}, and \texttt{if}) have their standard interpretation, and are based on the \toolname{} semantics covered in Appendix~\ref{appendix:ser-semantics}.

\subsection{Example 1}

%\subsubsection{Example 1}

We start with a basic example, describing a single request {\color{ForestGreen}$\blacklozenge_\text{A}$}, a single local variable (\texttt{x}) per request, and a single global variable (\texttt{FLAG}) shared among all in-flight requests. 
In Listing~\ref{lst:BasicSer}, an in-flight request assigns to \texttt{x} the value of \texttt{FLAG} (hence, initially, \texttt{[x:=0]}). Then, the request non-deterministically chooses whether to \texttt{yield} or to flip the value of \texttt{x}. Subsequently, \texttt{FLAG} is assigned 1 and the value of \texttt{x} is returned as the response to request {\color{ForestGreen}$\blacklozenge_\text{A}$}. 
Note that the presence of the \texttt{else} branch renders the program serializable, as intuitively, 
for any interleaving that modifies \texttt{x} via the \texttt{if} branch, there exists a corresponding serial execution in which the \texttt{else} branch is taken, yielding an equivalent outcome.
However, this changes in  Listing~\ref{lst:BasicNonSer},
%
%
%\begin{wrapfigure}{r}{0.62\textwidth}
%	\vspace{-0.5\intextsep} % tighten top
%	\centering
%	\begin{minipage}[t]{0.25\textwidth}
%		\begin{lstlisting}[caption={Serializable},label={lst:BasicSer},numbers=none]
%request A: 
%    x := FLAG
%
%    if (?):
%        yield
%    else:
%        x := 1 - x
%
%    FLAG := 1
%    return x
%		\end{lstlisting}
%	\end{minipage}\hspace{1.2em}% <-- small gap here
%	\begin{minipage}[t]{0.25\textwidth}
%		\begin{lstlisting}[caption={Not serializable},label={lst:BasicNonSer},numbers=none]
%request A: 
%    x := FLAG 
%
%    if (?): 
%        yield
%    // no else
%
%
%    FLAG := 1 
%    return x
%		\end{lstlisting}
%	\end{minipage}
%	\vspace{-0.9\intextsep} % tighten bottom
%\end{wrapfigure}
%
%
in which there is no \texttt{else} branch --- an update that makes the program non-serializable.
Now, any serial execution will have \textit{at most one} pair of {\color{ForestGreen}$\blacklozenge_\text{A}$}/{\color{red}$\blacklozenge_0$} (this is in fact the first request, returning the original zero-initialized value of \texttt{FLAG}).
As the first request also assigns \texttt{[FLAG:=1]} before terminating, any subsequent request in a serial run will assign \texttt{[x:=1]} and hence, will return only responses of {\color{red}$\blacklozenge_1$}. 
%, and $(i-1)$ pairs of {\color{ForestGreen}$\blacklozenge_\text{A}$}/{\color{red}$\blacklozenge_1$}.
%
%
%
%Any serial execution will match the first request {\color{ForestGreen}$\blacklozenge_\text{A}$} the response {\color{red}$\blacklozenge_0$} (as $[x:=FLAG]$, which is initially 0). 
%
%
%
%
%Differently put, for any serial execution with i requests --- we have exactly one (first) request/response pair {\color{ForestGreen}$\blacklozenge_\text{A}$}/{\color{red}$\blacklozenge_0$}, and $(i-1)$ pairs of {\color{ForestGreen}$\blacklozenge_\text{A}$}/{\color{red}$\blacklozenge_1$}.
%
%\noindent
However, given that the first request can also \texttt{yield}, it is possible for another request to concurrently run the program after the first request yields and before it resumes. This, in turn, will allow more than one request to assign \texttt{[x:=0]}, and hence, for example, we can obtain \textit{multiple} {\color{ForestGreen}$\blacklozenge_\text{A}$}/{\color{red}$\blacklozenge_0$} pairs. Thus, Listing~\ref{lst:BasicNonSer} is not serializable.
%\noindent
%\begin{minipage}[t]{0.30\textwidth}
%	\begin{lstlisting}[caption={Serializable},
%		label={lst:BasicSer}]
%request A: 
%    x := FLAG
%
%    if (?):
%        yield
%    else:
%       x := 1 - x
%
%    FLAG := 1
%    return x
%	\end{lstlisting}
%\end{minipage}
%\hfill
%\begin{minipage}[t]{0.30\textwidth}
%	\begin{lstlisting}[caption={Not serializable},
%		label={lst:BasicNonSer}]
%request A: 
%    x := FLAG 
%
%    if (?): 
%        yield
%    // no else
%
%
%    FLAG := 1 
%    return x
%	\end{lstlisting}
%\end{minipage}%

%\todo{original:}

%\vspace{2em}
%example - 2

% Second row
\noindent
\begin{minipage}[t]{0.45\textwidth}
	\begin{lstlisting}[caption={Serializable},
		label={lst:BasicSer},numbers=none]
request A: 
    x := FLAG
    if (?):
        yield
    else:
        x := 1 - x
    FLAG := 1
    return x
	\end{lstlisting}
\end{minipage}
\hfill
\begin{minipage}[t]{0.45\textwidth}
	\begin{lstlisting}[caption={Not serializable},
	label={lst:BasicNonSer},numbers=none]
request A: 
    x := FLAG 
    if (?): 
        yield
    // no else
	
    FLAG := 1 
    return x
		\end{lstlisting}
\end{minipage}%

\subsection{Example 2}

%\subsubsection{Example 2}
The following program pairs have a single global variable (\texttt{X}), and two requests --- {\color{ForestGreen}$\blacklozenge_\text{incr}$} which increments \texttt{X} by 1, and {\color{ForestGreen}$\blacklozenge_\text{decr}$} which decrements \texttt{X} by 1. Both programs have loops that guarantee that \texttt{X} will always be between 0 and 3, otherwise the \texttt{while} loop will yield ad infinitum. Both requests return the value of \texttt{X} after updating it.
In the first case, Listing~\ref{lst:FredSer} presents a serializable program, due to the absence of any \texttt{yield} between the increment/decrement of \texttt{X} and its return. Equivalently, in each of the requests, the update of \texttt{X} and the returned value can be thought of as \textit{a single atomic execution}.
However, in Listing~\ref{lst:FredNonSer}, we add an additional \texttt{yield} (and a local variable \texttt{y}), occurring in each of the requests, between the update of \texttt{X} and its return.
This change allows requests of the same type to update \texttt{X} to the same value ---  resulting in outputs such as
$\{{\color{ForestGreen}\blacklozenge_\text{incr}}/{\color{red}\blacklozenge_\text{1}},{\color{ForestGreen}\blacklozenge_\text{incr}}/{\color{red}\blacklozenge_\text{2}},{\color{ForestGreen}\blacklozenge_\text{incr}}/{\color{red}\blacklozenge_\text{3}},{\color{ForestGreen}\blacklozenge_\text{decr}}/{\color{red}\blacklozenge_\text{2}},{\color{ForestGreen}\blacklozenge_\text{decr}}/{\color{red}\blacklozenge_\text{2}}\}$ which cannot be obtained in any serial execution.

%
%
%
%  incr/1
%incr/3
%incr/2
%(decr/2)^2
 %

%\vspace{2em}
%\newpage
%example - 3

% Third row
\noindent
\begin{minipage}[t]{0.45\textwidth}
	\begin{lstlisting}[caption={Serializable},
		label={lst:FredSer},numbers=none]
request incr: 
    while (X == 3):
        yield


    X := X + 1
    return X		

request decr: 
    while (X == 0): 
        yield


    X := X - 1
    return X
		\end{lstlisting}
\end{minipage}
\hfill
\begin{minipage}[t]{0.45\textwidth}
	\begin{lstlisting}[caption={Not serializable},
		label={lst:FredNonSer},numbers=none]
request incr:
    while (X == 3):
        yield
    y := X
    yield
    X := y + 1
    return X		

request decr: 
    while (X == 0):
        yield
    y := X
    yield
    X := y - 1
    return X
		\end{lstlisting}
\end{minipage}
	
\subsection{Example 3}
%\subsubsection{Example 3}	
%\todo{continue}	
%example - 6

The next example (see Listing~\ref{lst:ComplexWhileNonSer}) has a global variable \texttt{X} and, for each in-flight request, a local variable \texttt{i}. The {\color{ForestGreen}$\blacklozenge_\text{flip}$} request flips \texttt{X} (initialized to 0); the {\color{ForestGreen}$\blacklozenge_\text{main}$} request attempts to decrement \texttt{i} five times.
%
%It is straightforward to observe that the program in Listing~\ref{lst:ComplexWhileSer} is trivially serializable, as there are no yields.
%
Any serial execution cannot induce a response {\color{red}$\blacklozenge_1$}, 
as it will have a single request in-flight, with \texttt{X} being either 0 or 1. Thus, exactly one of the \texttt{while} loops will run indefinitely, prohibiting any {\color{ForestGreen}$\blacklozenge_\text{main}$}/{\color{red}$\blacklozenge_1$} pairs.
%
%Specifically, in any serial execution of requests, the local i variable is either 0 or 1, and hence --- serial execution do no produce any request/response pairs.
%
To prove that the program is not serializable, we show that an interleaving \emph{can} result in a non-empty set of outputs. 
%
%However, some interleavings can allow the program to output a response {\color{red}$\blacklozenge_1$}. 
%
Specifically, given at least \texttt{[i=5]} interleavings of in-flight {\color{ForestGreen}$\blacklozenge_\text{flip}$} requests, it is possible for a {\color{ForestGreen}$\blacklozenge_\text{main}$} request to terminate and bypass all \texttt{while} loops, something that cannot occur in any serial execution.
%
%Here is some text that will flow around the code listing. You can introduce the snippet and then let the text wrap naturally alongside it.

\vspace{1em}  
\begin{wrapfigure}{r}{0.38\textwidth}
	  \vspace{-1em}  % pull the top of the figure up by 1em
	\centering
	\begin{lstlisting}[caption={Not serializable},label={lst:ComplexWhileNonSer},numbers=none]
request flip: 
    X := 1 - X 

request main:
    i := 5
    while (i > 0):
        while (X == 0):
            yield
        while (X == 1):
            yield
        i := i - 1

    return 1        
	\end{lstlisting}
\vspace{-0.5em}  % tighten space at bottom of the listing
\end{wrapfigure}
\vspace{1em}  % space below the wrapfigure; change as needed

%This paragraph will flow to the left of the listing. Continue writing your explanation or narrative here, and the text will neatly wrap around the code block on the right. You can add as much text as you like, and LaTeX will handle the wrapping automatically.

%\bigskip

%If you need to reference the listing elsewhere, use \texttt{\textbackslash ref~\{lst:ComplexWhileNonSer\}} to point to it, e.g., Listing~\ref{lst:ComplexWhileNonSer}.

% Second row
%\noindent
%\begin{minipage}[t]{0.45\textwidth}
%	\begin{lstlisting}[caption={Serializable},
%		label={lst:ComplexWhileSer}]
%		    request flip: 
%		        X := 1 - X 
%		    
%		    request main:
%		        i := 5
%		        while (i > 0):
%		            while (X == 0):
%		                pass
%		            while (X == 1):
%		                pass
%		            i := i - 1
%		        
%		        return 1       
%				\end{lstlisting}
%\end{minipage}%
%\hfill
%\begin{minipage}[t]{0.45\textwidth}
%	\begin{lstlisting}[caption={Not serializable},
%		label={lst:ComplexWhileNonSer}]
%		    request flip: 
%		        X := 1 - X 
%		
%		    request main:
%		        i := 5
%		        while (i > 0):
%		            while (X == 0):
%		                yield
%		            while (X == 1):
%		                yield
%		            i := i - 1
%		
%		        return 1        
%					\end{lstlisting}
%\end{minipage}

\subsection{Example 4}
%\subsubsection{Example 4: Banking System}

We illustrate a simple banking system inspired by Chandy and Lamport’s distributed snapshot algorithm~\cite{ChLa85}.  The system manages a client’s funds across multiple accounts; we use two accounts, \texttt{A} and \texttt{B}, but the same pattern extends to any number of accounts.  
Each {\color{ForestGreen}$\blacklozenge_{\mathit{transfer}}$} request transfers \texttt{\$50} from \texttt{A} to \texttt{B}, and each {\color{ForestGreen}$\blacklozenge_{\mathit{interest}}$} request adds an interest rate of \texttt{t\%} to each account (we set \texttt{[t=100\%]} for simplicity).
Both requests return the combined total \texttt{[A + B]}.
In every serial execution with one {\color{ForestGreen}$\blacklozenge_{\mathit{interest}}$} request, and any number of {\color{ForestGreen}$\blacklozenge_{\mathit{transfer}}$} requests, the total balance satisfies the invariant
\texttt{[A$_{\text{after}}$ + B$_{\text{after}}$ = (1 + t\%) \,$\bigl($A$_{\text{before}}$ + B$_{\text{before}}\bigr)$]}
.
%
%
%\[
%(A_{\text{after}} + B_{\text{after}})
%= (1 + t\%) \,\bigl(A_{\text{before}} + B_{\text{before}}\bigr),
%\]
Although the individual balances of \texttt{A} and \texttt{B} depend on the serial order, the \textit{combined} sum always reflects exactly one application of the interest rate.
However, non-serial interleavings can violate this invariant.  For instance, if a {\color{ForestGreen}$\blacklozenge_{\mathit{transfer}}$} request deducts \texttt{\$50} from \texttt{A} (resulting in \texttt{[50,50]}) and then yields, then an {\color{ForestGreen}$\blacklozenge_{\mathit{interest}}$} request may double both balances to \texttt{[100,100]} before the transfer resumes --- resulting in \texttt{[100,150]} and a missing \texttt{\$50}.  By contrast, any serial ordering of these two operations yields \texttt{[A + B = (100+50)$\times$2 = 300]}, with final states \texttt{[150,150]} or \texttt{[100,200]} depending on which request runs first.
%
%
% original
%The next example emulates a simple banking system, as motivated by Chandy and Lamport's seminal paper on distributed snapshots~\cite{ChLa85}. The system operates on number of bank accounts owned by the same client. For simplicity, we'll choose two bank account denoted by the global variables $A$ and $B$, and initialized to have $100\$$ and $50\$$ respectively. 
%%
%Each {\color{ForestGreen}$\blacklozenge_\text{transfer}$} request allocates $50\$$ from account $A$ to account $B$, and every {\color{ForestGreen}$\blacklozenge_\text{interest}$} request adds $t\%$ interest to each account. For simplicity, and without loss of generality, we chose $t=100\%$, hence doubling the funds in each account, per each such request. We also note that for simplicity we depict two accounts ($A$ and $B$), although this example is valid for any number of accounts.
%%
%Both requests return the final sum of the client funds in both the accounts.  
%%
Listing~\ref{lst:BankSer} has a serial version of this banking system (without \texttt{yield}), and Listing~\ref{lst:BankNonSer} includes \texttt{yield} statements between the adjustment of accounts \texttt{A} and \texttt{B} (we note that this is motivated by real-world systems in which accounts can be sharded and partitioned across different nodes).

\noindent
\begin{minipage}[t]{0.45\textwidth}
	\begin{lstlisting}[caption={Serializable},
		label={lst:BankSer},numbers=none]
A := 100, B := 50

request transfer: 
    // transfer $50
    A := A - 50
    // no yield
    B := B + 50
    return A + B
	
request interest: 
    // add a 100% interest
    A := A + A
    // no yield
    B := B + B
    return A + B      
			\end{lstlisting}
\end{minipage}
\hfill
\begin{minipage}[t]{0.45\textwidth}
	\begin{lstlisting}[caption={Not serializable},
		label={lst:BankNonSer},numbers=none]
A := 100, B := 50
	
request transfer: 
    // transfer $50
    A := A - 50
    yield
    B := B + 50
    return A + B

request interest: 
    // add a 100% interest
    A := A + A
    yield
    B := B + B
    return A + B
      		\end{lstlisting}
\end{minipage}

\subsection{Example 5}

The following example is motivated by~\cite{NaGhSa24} and demonstrates how reasoning about serializability corresponds to correctness of routing policies in software‐defined networks (SDNs). In an SDN, switches not only forward packets but can also be programmed in domain‐specific languages (e.g., \texttt{P4}~\cite{BoDaGiIzMcReScTaVaVaWa14}). At runtime, a centralized controller node can adjust the global network policy by periodically sending control packets to each switch, causing it to adjust its routing policy.
An instance of a simple network with two competing policies is shown in Fig.~\ref{fig:BgpRoutingPolicies}. This network consists of four nodes (numbered \texttt{0} through \texttt{3}), with the two middle nodes --- node \texttt{1} (labeled \texttt{WEST}) and node \texttt{2} (labeled \texttt{EAST}), serving as ingress points from where traffic nondeterministically enters the network. The controller selects one of two policies: a \textcolor{NavyBlue}{blue} policy, which routes traffic from West to East, or an \textcolor{darkorange}{orange} policy, which routes it in the opposite direction.

%
%
%// [WEST, switch 1] ---> [EAST, switch 2] ---> [out, switch 3] 
%else:
%B := 0 // red/orange policy
%// [out, switch 0] <--- [WEST, switch 1] <--- [EAST, switch 2] 
%
\begin{wrapfigure}{r}{0.45\textwidth}  % “r” = wrap on the right, width = 45% of line
	\centering
	\includegraphics[
	width=\linewidth,
	trim=10 15 15 5,   % {left bottom right top} — tweak as needed
	clip
	]{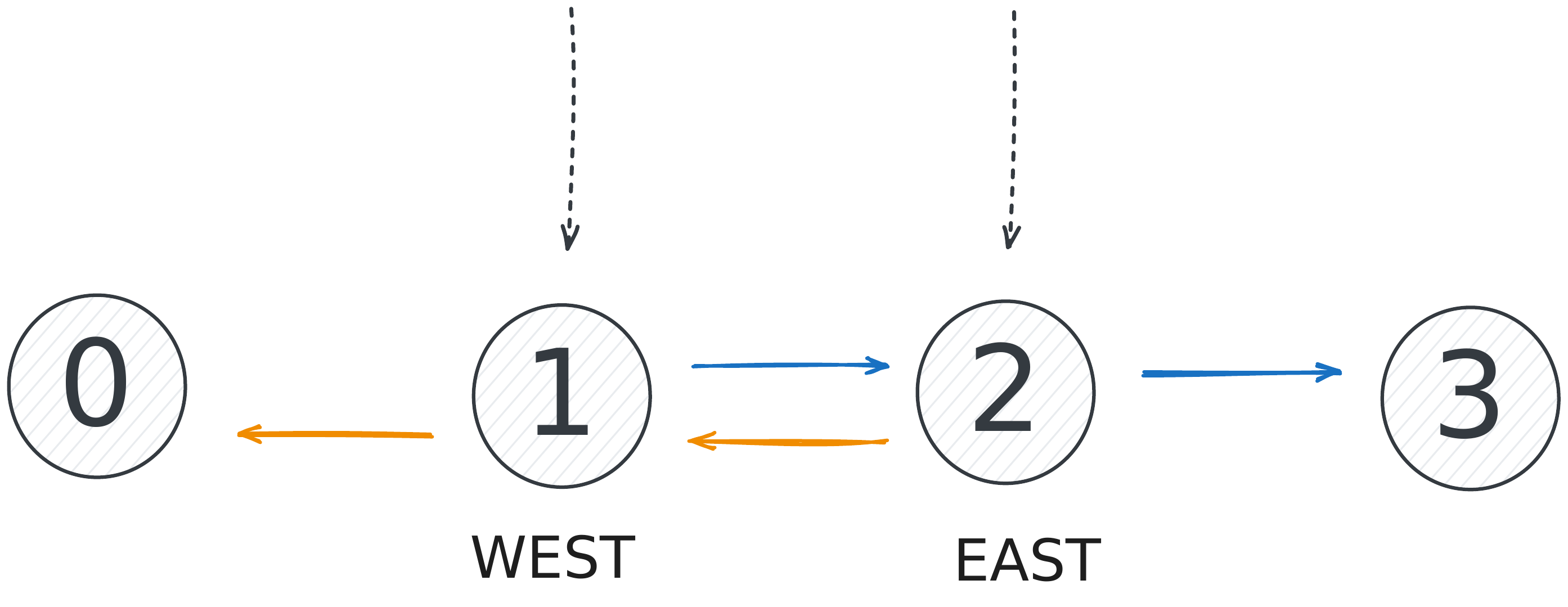}
	\caption{Two routing policies.}
	\label{fig:BgpRoutingPolicies}
\end{wrapfigure}
%
%\begin{figure}[H]
%	\centering
%	\includegraphics[width=0.5\linewidth]{plots/east_west_routing_updated_colors.pdf}
%	\caption{Routing policy in example 5.}
%	\label{fig:BgpRoutingPolicies}
%\end{figure}
%
This SDN-controlled routing policy is realized in the pseudo-code in Listing~\ref{lst:BgpNonSerializable}.
The program includes a single global variable \texttt{B}, indicating whether the current routing policy is \textcolor{NavyBlue}{blue} (\texttt{[B=1]}) or \textcolor{darkorange}{orange} (\texttt{[B=0]}).
The program has three types of requests:
%\begin{itemize}
%	
%	\item
	(i)
	{\color{ForestGreen}$\blacklozenge_\text{policy\_update}$}:
%	\textit{policy update}:
 represents a controller  update, which nondeterministically decides whether to update the policy (i.e., flip the value of  variable \texttt{B}) or not;
%	
%		\item
(ii)
	{\color{ForestGreen}$\blacklozenge_\text{route\_west}$}:
	 a request representing a packet entering the network from the \texttt{WEST} node; and 
%	
%		\item
(iii)
{\color{ForestGreen}$\blacklozenge_\text{route\_east}$}: a request representing a packet entering the network from the \texttt{EAST} node.
	%
%\end{itemize}
%
%Each of the routing requests represents a single packet entering the network. The request includes a local \textit{current} variable representing the index of the current node visited. This variable is initialized as the ingress node value, and updated to emulate the chosen routing path. There is also a \textit{visited\_east} variable (or a \textit{visited\_west} variable, depending on the request in question).
%%
%The return value of the {\color{ForestGreen}$\blacklozenge_\text{route\_from\_west}$} requests is the sum \textit{(current+current+visited\_east)}. This is an identifier encoding all possible \textit{(current\_switch, visited\_east)} pairs.
%%
%The program is not serializable, as witnessed by an interleaving that can give rise to final return value of \textit{(current+current+visited\_east)=1} (due to \textit{current=0} and \textit{visited\_east=1}). This represents a routing cycle in the network, which is possible only when updated the routing policy after a request has already been routed based on the previous policy.
%Legal, acyclic routes of this request have either a return value of 0 (in the case of [\textit{current=0}, \textit{visited\_east=0}]) or 7 (in the case of [\textit{current=3}, \textit{visited\_east=1}]).
%Potential routing cycles can also be observed via non serializable executions also for the  {\color{ForestGreen}$\blacklozenge_\text{route\_from\_east}$} requests.

\begin{center}
\begin{minipage}[!htbp]{1.0\textwidth}
	\begin{lstlisting}[caption={BGP routing (not serializable)},label={lst:BgpNonSerializable},numbers=none]
 request policy_update:
     if (?): // nondeterministically 1 or 0
         B := 1  // blue policy 
     else:
         B := 0 // orange policy
		
 request route_west:
     current := 1 // initial node
     while (current == 1) or (current == 2): // still routing        
         if (current == 1): // west (switch 1)
             if (B == 1): // blue policy
                 current := 2
             else: // orange policy
                 current := 0
         if (current == 2): // east (switch 2)
             visited_east := 1
             if (B == 1): // blue policy
                 current := 3
             else: // orange policy
                 current := 1
         yield
     return current + current + visited_east
     
 request route_east: ... // dual case      
		\end{lstlisting}
\end{minipage}
\end{center}

Each of the routing requests represents a single packet entering the network. The request includes a local \texttt{current} variable representing the index of the current node visited. This variable is initialized as the ingress node value and is updated to emulate the chosen routing path. There is also a \texttt{visited\_east} variable (or a \texttt{visited\_west} variable, depending on the request in question).
The return value of the {\color{ForestGreen}$\blacklozenge_\text{route\_west}$} requests is the sum \texttt{[current+current+visited\_east]}, an identifier encoding all possible \texttt{(current\_switch, visited\_east)} pairs.
The program is not serializable, as witnessed by an interleaving that can give rise to a final return value of \texttt{[current+current+visited\_east=1]} (due to \texttt{[current=0]} and \texttt{[visited\_east=1]}). This represents a \textit{routing cycle} in the network, which is possible only when there is an interleaving between a control packet ({\color{ForestGreen}$\blacklozenge_\text{policy\_update}$}) and a routing packet (e.g., {\color{ForestGreen}$\blacklozenge_\text{route\_west}$}). Specifically, this occurs when a request has already been spawned and has begun routing based on the previous policy, then yields, and eventually returns after the policy was flipped based on another control packet --- hence resulting in a routing cycle.
More formally, this is conveyed by response values that represent these cycles and are obtained only via non-serial executions. For example, acyclic routes of this request have either a return value of 0 (in the case of [\texttt{current=0}, \texttt{visited\_east=0}]) or 7 (in the case of [\texttt{current=3}, \texttt{visited\_east=1}]).
Dually, routing cycles could also occur in the case of {\color{ForestGreen}$\blacklozenge_\text{route\_east}$} interleavings.

\subsection{Example 6}
\label{app:snapshot-isolation-example}

The next program captures serializability through the lens of the \textit{snapshot isolation} consistency model, which is used in various real-world database systems, including \texttt{PostgreSQL}~\cite{postgresql-transaction-iso} and \texttt{CockroachDB}~\cite{cockroachdb-si-docs}, and has been linked to real‐world anomalies (e.g., duplicate‐key errors in the latter~\cite{cockroach-issue-14099}).
The depicted program has two nodes (represented by the global variables \texttt{N$_1$} and \texttt{N$_2$}) which monitor ongoing traffic in the network, and are originally both active, as indicated by their initial values: \texttt{[N$_1$=1]}, \texttt{[N$_2$=1]}.
The {\color{ForestGreen}$\blacklozenge_\text{main}$} request takes a snapshot of the system, i.e., locally records the current activation status of each of the two nodes.
Then, in the first request, and in any future ones in which both nodes are active, each in-flight request non-deterministically decides which of the two nodes to deactivate, i.e., set \texttt{[N$_i$:=0]}, for maintaining overall energy efficiency.
The {\color{ForestGreen}$\blacklozenge_\text{main}$} request eventually returns the current sum of active nodes in the system.
In order for the system to emulate multiple non-trivial interleavings, our setting also includes two additional requests, {\color{ForestGreen}$\blacklozenge_\text{activate\_n1}$} and {\color{ForestGreen}$\blacklozenge_\text{activate\_n2}$}, which activate nodes \texttt{N$_1$} and \texttt{N$_2$}, respectively.
We note that the program is not serializable due to the \texttt{yield} statement that appears immediately after the recorded snapshot of the node activation status. One such example of a non-serializable behavior occurs when two {\color{ForestGreen}$\blacklozenge_\text{main}$} requests are both in-flight, and each of them records two active monitor nodes and then executes \texttt{yield}. Then, each request might turn off a complement monitor node. As a result of each request operating based on its isolated snapshot of the global state, both monitor nodes can be turned off --- inducing a request with {\color{ForestGreen}$\blacklozenge_\text{main}$}/{\color{red}$\blacklozenge_0$} (for \texttt{[N$_1$+N$_2$=0+0=0]}).
We note that in any serial execution, no two {\color{ForestGreen}$\blacklozenge_\text{main}$} requests can \textit{simultaneously} record both monitors as active, and hence, a response of {\color{red}$\blacklozenge_0$} cannot be obtained by serial executions.

\begin{minipage}[htbp]{1.1\textwidth}
	\begin{lstlisting}[caption={Snapshot-based monitor deactivation (not serializable)},numbers=none]
// initialize both monitors to be active
N_1_ACTIVE := 1
N_2_ACTIVE := 1

request main:
    // take snapshot
    n_1_active_snapshot := N_1_ACTIVE
    n_2_active_snapshot := N_2_ACTIVE
    yield

    if (n_1_active_snapshot == 1) and (n_2_active_snapshot == 1):
        // if both nodes active --- choose which one to deactivate 
        if (?): 
            N_1_ACTIVE := 0
        else:
            N_2_ACTIVE := 0

    return N_1_ACTIVE + N_2_ACTIVE  // total active nodes


request activate_n1:
    N_1_ACTIVE := 1

request activate_n2:
    N_2_ACTIVE := 1
		
		
	\end{lstlisting}
\end{minipage}

%We include an additional network‐monitoring example under \textit{snapshot isolation} in Appendix~\ref{appendix:snapshotIsolationExample}. Snapshot isolation is supported by production databases such as \texttt{PostgreSQL}~\cite{postgresql-transaction-iso} and \texttt{CockroachDB}~\cite{cockroachdb-si-docs}, and has been linked to real‐world anomalies (e.g., duplicate‐key errors in the latter~\cite{cockroach-issue-14099}).

%We add an additional example for network monitoring and \textit{snapshot isolation} in Appendix~\ref{appendix:snapshotIsolationExample}. Snapshot isolation is a consistency model used in production systems such as \texttt{PostgreSQL}~\cite{postgresql-transaction-iso} and \texttt{CockroachDB}~\cite{cockroachdb-si-docs}, and has been implicated in real-world bugs (e.g.,see the report on duplicate key errors ins \texttt{CockroachDB}~\cite{cockroach-issue-14099}).

%We add an additional example for network monitoring and \textit{snapshot isolation} in Appendix~\ref{appendix:snapshotIsolationExample}.%, and its relation to serializability.
%\guy{suddenly I'm not so sure about this example. It is indeed sound but maybe a bit silly?}

\clearpage

\section{Toy Petri Net Example}
\label{appendix:toyPN}

Observe the toy Petri net in Fig.~\ref{fig:toyPN}.

\begin{figure}[H]
	\centering
	\includegraphics[width=0.3\textwidth]{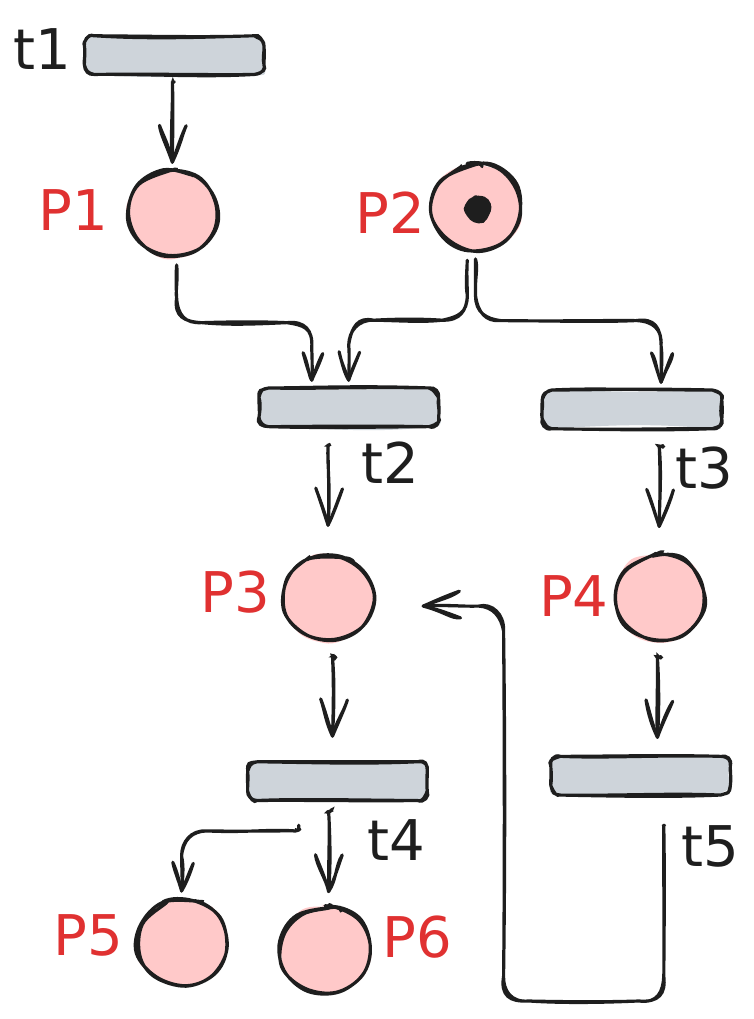}
	\caption{A toy Petri net.}
	\label{fig:toyPN}
\end{figure}

We formally define the net as follows:

 \(N=(P,T,\mathsf{pre},\mathsf{post},M_0)\) with
\[
P=\{P_1,P_2,P_3,P_4,P_5,P_6\},\quad
T=\{t_1,t_2,t_3,t_4,t_5\},
\]
and the flow functions $\mathsf{pre}$,$\mathsf{post}$ are given as
\[
\begin{array}{c|cccccc}
	& P_1 & P_2 & P_3 & P_4 & P_5 & P_6 \\ \hline
	\mathsf{pre}(t_1)  & 0 & 0 & 0 & 0 & 0 & 0 \\
	\mathsf{post}(t_1) & 1 & 0 & 0 & 0 & 0 & 0 \\ \hline
	\mathsf{pre}(t_2)  & 1 & 1 & 0 & 0 & 0 & 0 \\
	\mathsf{post}(t_2) & 0 & 0 & 1 & 0 & 0 & 0 \\ \hline
	\mathsf{pre}(t_3)  & 0 & 1 & 0 & 0 & 0 & 0 \\
	\mathsf{post}(t_3) & 0 & 0 & 0 & 1 & 0 & 0 \\ \hline
	\mathsf{pre}(t_4)  & 0 & 0 & 1 & 0 & 0 & 0 \\
	\mathsf{post}(t_4) & 0 & 0 & 0 & 0 & 1 & 1 \\ \hline
	\mathsf{pre}(t_5)  & 0 & 0 & 0 & 1 & 0 & 0 \\
	\mathsf{post}(t_5) & 0 & 0 & 1 & 0 & 0 & 0
\end{array}
\]
The initial marking is
\[
M_0 = (0,1,0,0,0,0)^\top 
\]

Differently put, there is a single token in place $P_2$.

\begin{itemize}
	\item An examples of a \emph{reachable} marking is
	\[
	M_f = (0,0,0,0,1,1)^\top,
	\]
	reached by the firing sequence
	\[
	M_0 \xrightarrow{t_1} M_1
	\xrightarrow{t_2} M_2
	\xrightarrow{t_4} M_f,
	\]
	where
	\[
	M_1 = (1,1,0,0,0,0)^\top,
	\quad
	M_2 = (0,0,1,0,0,0)^\top.
	\]
	\item An example of a \emph{non-reachable} marking is
	\[
	M_{nr} = (0,1,1,0,0,0)^\top.
	\]
	Since producing a token at \(P_3\) (via \(t_2\)) necessarily consumes the only token in \(P_2\) and, as no transition replenishes \(P_2\), then it is impossible for these two places to \textit{simultaneously} hold a single token in any reachable firing. However, we note that if the initial marking were 
	
	\[
	M_0' = (0,2,0,0,0,0)^\top,
	\]
	then marking $M_{nr}$ \textit{would have} been reachable, by firing a single transition $t_1$, followed by a single transition $t_2$.
\end{itemize}

%\newpage
\clearpage
\section{SER Small-Step Semantics}
\label{appendix:ser-semantics}

%
%\subsection{Semantics}
%
The set \(\texttt{V}\) is a finite set of numeric constants; booleans use $0/1$. We respectively denote with \(\texttt{VARS}\) and \(\texttt{vars}\) the (finite) sets of global and local variables. Mappings $\rho:{\texttt{vars}}\to \texttt{V}$ and $g:{\texttt{VARS}}\to \texttt{V}$ respectively map a local or global variable to its current value in \(\texttt{V}\).
Configurations are denoted as $\cfg{e}{\rho}{g}$, with \(e\) being a valid \toolname{} expression. 
Small steps are denoted $(\step)$, while big steps are denoted $(\pstep)$, and may comprise of a sequence of small steps (denoted $\step^{*}$).
%
%program configurations $\langle g , \Pi\rangle$ with \(g\) being a global state and \(\Pi\) being set of in-flight \((e,\ell)\) pairs.
%
%We denote with \(\ell_0\) the initial local state of every packet.

\smallskip
\noindent\textit{Small step} $(\step)$.
\begin{mathpar}
	\inferrule*[right=ND-0]{ }{\cfg{\nondet}{\rho}{g} \step \cfg{0}{\rho}{g}}
	\and
	\inferrule*[right=ND-1]{ }{\cfg{\nondet}{\rho}{g} \step \cfg{1}{\rho}{g}}\\
	
	\inferrule*[right=LOCAL-READ]{\rho(x)=v\quad v\in{\texttt{V}}}{\cfg{x}{\rho}{g} \step \cfg{v}{\rho}{g}}
	\and
	\inferrule*[right=GLOBAL-READ]{g(X)=v \quad v\in{\texttt{V}}}{\cfg{X}{\rho}{g} \step \cfg{v}{\rho}{g}}\\
	
	\inferrule*[right=LOCAL-WRITE-STEP]{\cfg{e}{\rho}{g} \step \cfg{e'}{\rho'}{g'}}{\cfg{x := e}{\rho}{g} \step \cfg{x := e'}{\rho'}{g'}}
	\and
	\inferrule*[right=LOCAL-WRITE-DONE]{v\in{\texttt{V}}}{\cfg{x := v}{\rho}{g} \step \cfg{v}{\update{\rho}{x}{v}}{g}}
	
	\inferrule*[right=GLOBAL-WRITE-STEP]{\cfg{e}{\rho}{g} \step \cfg{e'}{\rho'}{g'}}{\cfg{X := e}{\rho}{g} \step \cfg{X := e'}{\rho'}{g'}}
	\and
	\inferrule*[right=GLOBAL-WRITE-DONE]{v\in{\texttt{V}}}{\cfg{X := v}{\rho}{g} \step \cfg{v}{\rho}{\update{g}{X}{v}}}
	
	\inferrule*[right=EQ-L]{\cfg{e_1}{\rho}{g} \step \cfg{e_1'}{\rho'}{g'}}{\cfg{e_1 == e_2}{\rho}{g} \step \cfg{e_1' == e_2}{\rho'}{g'}}
	\and
	\inferrule*[right=EQ-R]{\cfg{e_2}{\rho}{g} \step \cfg{e_2'}{\rho'}{g'}}{\cfg{v_1 == e_2}{\rho}{g} \step \cfg{v_1 == e_2'}{\rho'}{g'}}
	\and
	\inferrule*[right=EQ-T]{v_1=v_2\quad v_1, v_2\in{\texttt{V}}}{\cfg{v_1 == v_2}{\rho}{g} \step \cfg{1}{\rho}{g}}
	\and
	\inferrule*[right=EQ-F]{v_1\neq v_2 \quad v_1, v_2\in{\texttt{V}}}{\cfg{v_1 == v_2}{\rho}{g} \step \cfg{0}{\rho}{g}}
	\end{mathpar}
	
	\begin{mathpar}
	\inferrule*[right=SEQ-STEP]{\cfg{e_1}{\rho}{g} \step \cfg{e_1'}{\rho'}{g'}}{\cfg{e_1 ; e_2}{\rho}{g} \step \cfg{e_1' ; e_2}{\rho'}{g'}}
	\and
	\inferrule*[right=SEQ-DONE]{v\in{\texttt{V}}}{\cfg{v ; e_2}{\rho}{g} \step \cfg{e_2}{\rho}{g}}

	\inferrule*[right=IF-GUARD]{\cfg{e_1}{\rho}{g} \step \cfg{e_1'}{\rho'}{g'}}{\cfg{\ifkw(e_1)\{e_2\}\elsekw\{e_3\}}{\rho}{g} \step \cfg{\ifkw(e_1')\{e_2\}\elsekw\{e_3\}}{\rho'}{g'}}
	\and
	\inferrule*[right=IF-T]{ }{\cfg{\ifkw(1)\{e_2\}\elsekw\{e_3\}}{\rho}{g} \step \cfg{e_2}{\rho}{g}}
	\and
	\inferrule*[right=IF-F]{ }{\cfg{\ifkw(0)\{e_2\}\elsekw\{e_3\}}{\rho}{g} \step \cfg{e_3}{\rho}{g}}
	
	\inferrule*[right=WHILE-UNFOLD]{ }{\cfg{\whilekw(e_1)\{e_2\}}{\rho}{g}
		\step
		\cfg{\ifkw(e_1)\{\,e_2 ; \whilekw(e_1)\{e_2\}\,\}\elsekw\{0\}}{\rho}{g}}
\end{mathpar}

\noindent\textit{Big step} $(\pstep)$ and scheduling.
\begin{mathpar}
	\inferrule*[right=YIELD]{\cfg{e}{\rho}{g} \step^{*} \cfg{\yieldkw ; e'}{\rho'}{g'}}{\cfg{e}{\rho}{g} \pstep \cfg{e'}{\rho'}{g'}}
	\\
%	\inferrule*[right=START]{\cfg{e_i}{\rho}{g}\pstep \cfg{e'_i}{\rho'}{g'} \quad name_i\{e_i\}\in P_0}{\langle \requestkw\ name_i\{e_i\}, \rho, g \rangle \pstep  \cfg{e'_i}{\rho'}{g'} }
%	\and
	\inferrule*[right=TERMINATE]{\cfg{e}{\rho}{g} \step^{*} \cfg{v}{\rho'}{g'} \quad v\in{\texttt{V}}}{\cfg{e}{\rho}{g} \pstep \cfg{v}{\rho'}{g'}}	
\end{mathpar}

\smallskip
\noindent
\textbf{Note.}
Instead of defining a \texttt{spawn} instruction, as exists in some languages --- \toolname{} captures \textit{external} spawning via requests.
This setting can equivalently capture self-spawning (by using additional global variables), while translating more naturally to the networking domain --- in which threads are captured by packets sent by an external user.
%\newpage
\clearpage

\section{Additional Network System Examples}
\label{appendix:MoreNsExamples}

\subsection{Translation Example: Listing~\ref{lst:MotivatingExample1Ser}}
\label{appendix:subsec::Ex1A:NS}

For our first motivating example, presented in Listing~\ref{lst:MotivatingExample1Ser}, we depict the NS in Fig.~\ref{fig:code1ExampleNS}, the Serializability NFA in Fig.~\ref{fig:code1ExampleNFA}, and the Interleaving Petri net in Fig.~\ref{fig:code1ExamplePN}.

%\begin{minipage}[t]{0.3\textwidth}
%	\begin{lstlisting}[caption={Without yield or lock (serializable)}]
%	request main: 
%		X := 1 
%		// no yield
%		y := X 
%		X := 0
%		return y 
%	\end{lstlisting}
%\end{minipage}

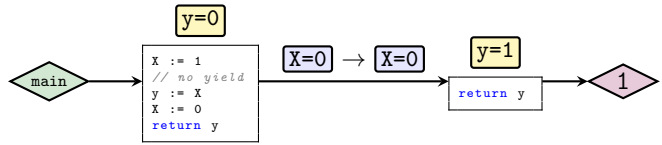
\begin{figure}[!htbp]
	\centering

	\begin{tikzpicture}[
		node distance=1.5cm and 2.5cm,
		>=stealth,
		thick,
		every node/.style={font=\small}
		]
		%––– Request (main) –––
		\node[
		draw=black,
		line width=0.8pt,
		fill=ForestGreen!20,
		text=black,
		diamond,
		aspect=2,
		inner sep=2pt,
		scale=0.7
		] (main) {\texttt{main}};
		
		%––– First state: y=0 + full program –––
		\node[right=0.7cm of main, align=center] (state1) {
			\begin{tikzpicture}[baseline=(ybox.base)]
				\node[
				draw=black,
				line width=0.8pt,
				fill=brightyellow,
				text=black,
				rectangle,
				rounded corners=1pt,
				inner sep=2pt
				] (ybox) {\texttt{y=0}};
			\end{tikzpicture}\\[-2.5pt]
			\begin{minipage}{1.3cm}
				\begin{lstlisting}[language=CustomPseudoCode,numbers=none,basicstyle=\tiny\ttfamily]
X := 1
// no yield
y := X
X := 0
return y
				\end{lstlisting}
			\end{minipage}
		};
		
		%––– Second state: y=1 + //end –––
		\node[right=of state1, align=center] (state2) {
			\begin{tikzpicture}[baseline=(ybox.base)]
				\node[
				draw=black,
				line width=0.8pt,
				fill=brightyellow,
				text=black,
				rectangle,
				rounded corners=1pt,
				inner sep=2pt
				] (ybox) {\texttt{y=1}};
			\end{tikzpicture}\\[-2.5pt]
			\begin{minipage}{1.0cm}
				\begin{lstlisting}[language=CustomPseudoCode,numbers=none,basicstyle=\tiny\ttfamily]
return y
				\end{lstlisting}
			\end{minipage}
		};
		
		%––– Response "1" –––
		\node[
		right=0.6cm of state2,
		draw=black,
		line width=0.8pt,
		fill=RedViolet!20,
		text=black,
		diamond,
		aspect=2,
		inner sep=2pt,
		scale=0.7,
		font=\Large
		] (resp1) {\texttt{1}};
		
		%––– Arrows –––
		\draw[->] (main) -- (state1);
		
		% Single transition label: X=0 -> X=1
		\draw[->] (state1) -- node[above] {%
			\begin{tikzpicture}[baseline=(a.base)]
				\node[draw=black,line width=0.8pt,fill=blue!10,rectangle,rounded corners=1pt,inner sep=2pt] (a) {\texttt{X=0}};
			\end{tikzpicture}
			$\to$
			\begin{tikzpicture}[baseline=(b.base)]
				\node[draw=black,line width=0.8pt,fill=blue!10,rectangle,rounded corners=1pt,inner sep=2pt] (b) {\texttt{X=0}};
			\end{tikzpicture}
		} (state2);
		
		% To response "1"
		\draw[->] (state2) -- (resp1);
		
	\end{tikzpicture}
	
	\caption{The network system for interleaved executions of the program in Listing~\ref{lst:MotivatingExample1Ser}.}
\label{fig:code1ExampleNS}
\end{figure}

%\begin{figure}[!htbp]
%	\centering
%	\includegraphics[width=0.9\textwidth]{plots/code_1_NS.png}
%	\caption{Network System for interleaving executions of the program in Listing~\ref{lst:MotivatingExample1Ser}.}
%	\label{fig:code1ExampleNS}
%\end{figure}

%\begin{figure}[!htbp]
%	\centering
%	\includegraphics[width=0.35\textwidth]{plots/code_1_NFA.png}
%	\caption{NFA for serial executions of the program in Listing~\ref{lst:MotivatingExample1Ser}.}
%	\label{fig:code1ExampleNFA}
%\end{figure}

\begin{figure}  [!htbp]
	\centering

	\begin{tikzpicture}[
		->,>=stealth,
		thick,
		node distance=2.5cm,
		state/.style={
			draw=black,
			line width=0.8pt,
			fill=blue!10,
			rectangle,
			rounded corners=1pt,
			inner sep=2pt,
			font=\small
		},
		every node/.style={font=\small}
		]
		% Single state
		\node[state] (X0) {\texttt{X=0}};
		
		% Initial state arrow
		\draw[->] ([yshift=-0.4cm]X0.south) -- (X0.south);
		
		% Self-loop with main/1 using the paper's colored lozenge notation
		\draw[->] (X0) edge[loop right]
		node[right] {${\color{ForestGreen}\blacklozenge_{\mathrm{main}}}/{\color{red}\blacklozenge_1}$} (X0);
	\end{tikzpicture}
	
	\caption{The NFA for serial executions of the program in Listing~\ref{lst:MotivatingExample1Ser}.}
\label{fig:code1ExampleNFA}
\end{figure}

\begin{figure}[H]
	\centering
	\includegraphics[width=0.6\textwidth]{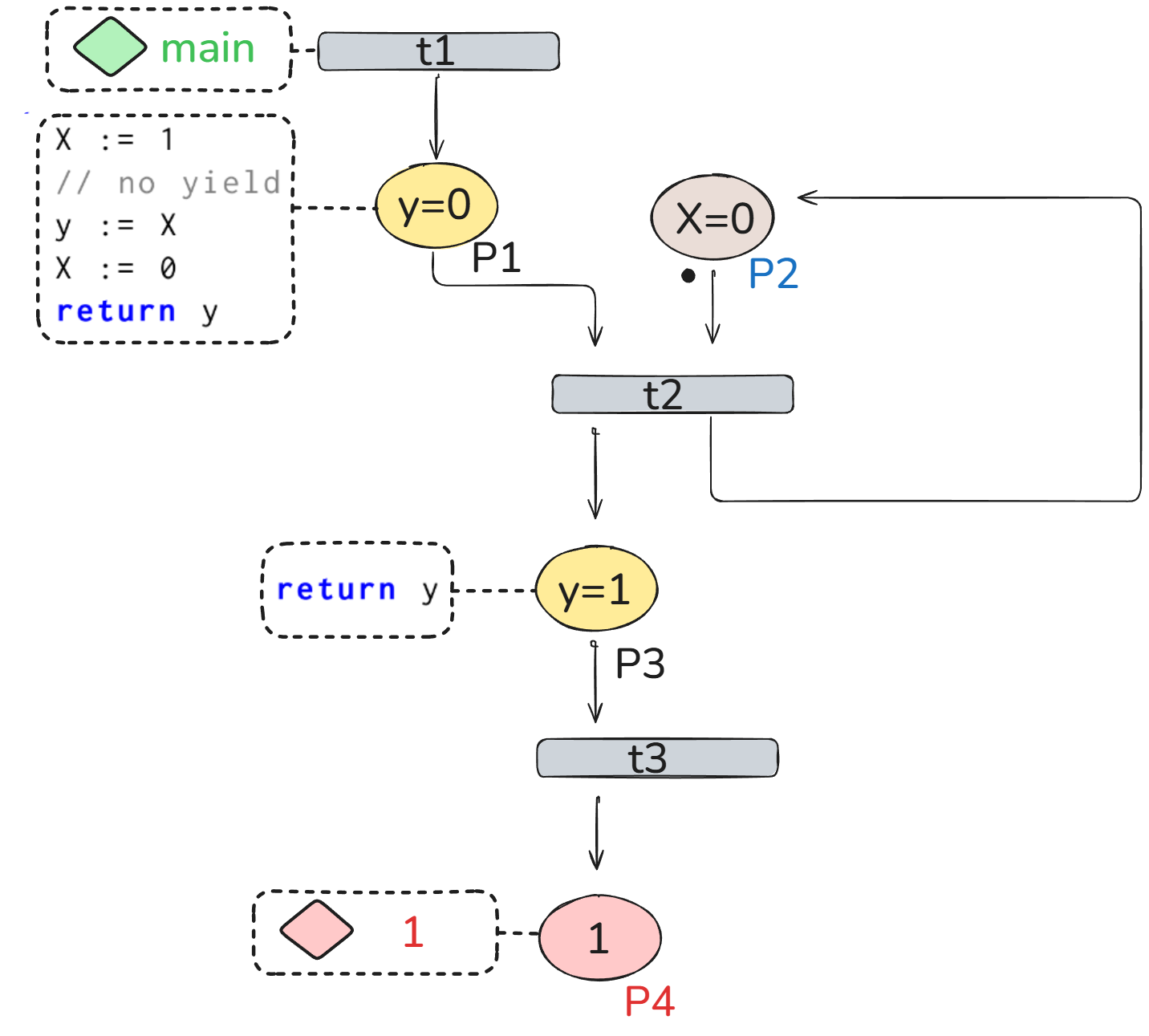}
	\caption{The Petri net for interleaved executions of the program in Listing~\ref{lst:MotivatingExample1Ser}.}
	\label{fig:code1ExamplePN}
\end{figure}

\subsection{Translation Example: Listing~\ref{lst:MotivatingExample2NonSer}}
\label{appendix:subsec::Ex1B:NS}

The NS, Serializability NFA, and Interleaving Petri net of Listing~\ref{lst:MotivatingExample2NonSer} are depicted in the main text (see subsec.~\ref{subsec:SerToNsTranslation}).
We present in Fig.~\ref{fig:code2ExampleNSSecondPart} the mappings \(\delta\), $req$, and $resp$.

\begin{figure}[!htbp]
	\centering
	%–––– Network system diagram ––––
	% \includegraphics[width=\textwidth]{plots/code_2_NS.png}\\[1ex]
	
	%–––– req, resp, and δ definitions ––––
	\[
	\begin{array}{@{}r@{\;}l}
		req \coloneq & 
		\big\{
		\big[
		\begin{array}{c c c}
			\begin{tikzpicture}[baseline=(textnode.base)]
				\node[
				draw=black,
				line width=0.8pt,
				fill=ForestGreen!20,
				text=black,
				diamond,
				aspect=2,
				inner sep=2pt,
				scale=0.7
				] (textnode) {\texttt{main}};
			\end{tikzpicture}
			&\!\!\rightarrow\!\!&
			\begin{array}{c}
				\begin{tikzpicture}[baseline=(ybox.base)]
					\node[
					draw=black,
					line width=0.8pt,
					fill=brightyellow,
					text=black,
					rectangle,
					rounded corners=1pt,
					inner sep=2pt
					] (ybox) {\texttt{y=0}};
				\end{tikzpicture}\vspace{-2pt}
				\\
				\begin{minipage}{0.20\linewidth}
					\begin{lstlisting}[language=CustomPseudoCode,numbers=none,basicstyle=\tiny\ttfamily]
X := 1 
yield 
y := X
X := 0
return y
					\end{lstlisting}
				\end{minipage}
			\end{array}
		\end{array}
		\big]
		\big\}
		\\[2em]
		resp \coloneq &
		\big\{
		\big[
		\begin{array}{c c c}
			\begin{array}{c}
				\begin{tikzpicture}[baseline=(ybox.base)]
					\node[
					draw=black,
					line width=0.8pt,
					fill=brightyellow,
					text=black,
					rectangle,
					rounded corners=1pt,
					inner sep=2pt
					] (ybox) {\texttt{y=0}};
				\end{tikzpicture}\vspace{-2pt}
				\\
				\begin{minipage}{0.11\linewidth}
					\begin{lstlisting}[language=CustomPseudoCode,numbers=none,basicstyle=\tiny\ttfamily]
return y
					\end{lstlisting}
				\end{minipage}
			\end{array}
			&\!\!\rightarrow\!\!&
			\begin{tikzpicture}[baseline=(textnode.base),scale=0.7]
				\node[
				draw=black,
				line width=0.8pt,
				fill=RedViolet!20,
				text=black,
				diamond,
				aspect=2,
				inner sep=2pt,
				font=\small
				] (textnode) {\texttt{0}};
			\end{tikzpicture}
		\end{array}
		\big]\,{},
		\big[
		\begin{array}{c c c}
			\begin{array}{c}
				\begin{tikzpicture}[baseline=(ybox.base)]
					\node[
					draw=black,
					line width=0.8pt,
					fill=brightyellow,
					text=black,
					rectangle,
					rounded corners=1pt,
					inner sep=2pt
					] (ybox) {\texttt{y=1}};
				\end{tikzpicture}\vspace{-2pt}
				\\
				\begin{minipage}{0.11\linewidth}
					\begin{lstlisting}[language=CustomPseudoCode,numbers=none,basicstyle=\tiny\ttfamily]
return y
					\end{lstlisting}
				\end{minipage}
			\end{array}
			&\!\!\rightarrow\!\!&
			\begin{tikzpicture}[baseline=(textnode.base),scale=0.7]
				\node[
				draw=black,
				line width=0.8pt,
				fill=RedViolet!20,
				text=black,
				diamond,
				aspect=2,
				inner sep=2pt,
				font=\small
				] (textnode) {\texttt{1}};
			\end{tikzpicture}
		\end{array}
		\big]
		\big\}
		\\[2em]
		\delta \coloneq & 
		\big\{\big[(
		\begin{tikzpicture}[baseline=(ybox.base)]
			\node[
			draw=black,
			line width=0.8pt,
			fill=blue!10,
			text=black,
			rectangle,
			rounded corners=1pt,
			inner sep=2pt
			] (ybox) {\texttt{X=0}};
		\end{tikzpicture}\,{},
		\begin{array}{c}
			\begin{tikzpicture}[baseline=(ybox.base)]
				\node[
				draw=black,
				line width=0.8pt,
				fill=brightyellow,
				text=black,
				rectangle,
				rounded corners=1pt,
				inner sep=2pt
				] (ybox) {\texttt{y=0}};
			\end{tikzpicture}\vspace{-2pt}
			\\
			\begin{minipage}{0.14\linewidth}
				\begin{lstlisting}[language=CustomPseudoCode,numbers=none,basicstyle=\tiny\ttfamily]
X := 1
yield
y := X
X := 0
return y
				\end{lstlisting}
			\end{minipage}
		\end{array}
		)
		\;\rightarrow\;
		(
		\begin{tikzpicture}[baseline=(ybox.base)]
			\node[
			draw=black,
			line width=0.8pt,
			fill=blue!10,
			text=black,
			rectangle,
			rounded corners=1pt,
			inner sep=2pt
			] (ybox) {\texttt{X=1}};
		\end{tikzpicture}\,{},
		\begin{array}{c}
			\begin{tikzpicture}[baseline=(ybox.base)]
				\node[
				draw=black,
				line width=0.8pt,
				fill=brightyellow,
				text=black,
				rectangle,
				rounded corners=1pt,
				inner sep=2pt
				] (ybox) {\texttt{y=0}};
			\end{tikzpicture}\vspace{-2pt}
			\\
			\begin{minipage}{0.14\linewidth}
				\begin{lstlisting}[language=CustomPseudoCode,numbers=none,basicstyle=\tiny\ttfamily]
y := X
X := 0
return y
				\end{lstlisting}
			\end{minipage}
		\end{array}
		)
		\big],
		\\[0.5em]
		& \phantom{\big\{}
		\big[(
		\begin{tikzpicture}[baseline=(ybox.base)]
			\node[
			draw=black,
			line width=0.8pt,
			fill=blue!10,
			text=black,
			rectangle,
			rounded corners=1pt,
			inner sep=2pt
			] (ybox) {\texttt{X=1}};
		\end{tikzpicture}\,{},
		\begin{array}{c}
			\begin{tikzpicture}[baseline=(ybox.base)]
				\node[
				draw=black,
				line width=0.8pt,
				fill=brightyellow,
				text=black,
				rectangle,
				rounded corners=1pt,
				inner sep=2pt
				] (ybox) {\texttt{y=0}};
			\end{tikzpicture}\vspace{-2pt}
			\\
			\begin{minipage}{0.14\linewidth}
				\begin{lstlisting}[language=CustomPseudoCode,numbers=none,basicstyle=\tiny\ttfamily]
y := X
X := 0
return y
				\end{lstlisting}
			\end{minipage}
		\end{array}
		)
		\;\rightarrow\;
		(
		\begin{tikzpicture}[baseline=(ybox.base)]
			\node[
			draw=black,
			line width=0.8pt,
			fill=blue!10,
			text=black,
			rectangle,
			rounded corners=1pt,
			inner sep=2pt
			] (ybox) {\texttt{X=0}};
		\end{tikzpicture}\,{},
		\begin{array}{c}
			\begin{tikzpicture}[baseline=(ybox.base)]
				\node[
				draw=black,
				line width=0.8pt,
				fill=brightyellow,
				text=black,
				rectangle,
				rounded corners=1pt,
				inner sep=2pt
				] (ybox) {\texttt{y=1}};
			\end{tikzpicture}
			\vspace{-2pt}
			\\
			\begin{minipage}{0.11\linewidth}
				\begin{lstlisting}[language=CustomPseudoCode,numbers=none,basicstyle=\tiny\ttfamily]
return y
				\end{lstlisting}
			\end{minipage}
		\end{array}
		)
		\big],
		\ldots
		\big\}
	\end{array}
	\]
	\caption{The \(\delta\) transition function, and the \(req\) and \(resp\) mappings for the program in Listing~\ref{lst:MotivatingExample2NonSer}.}
	\label{fig:code2ExampleNSSecondPart}
\end{figure}
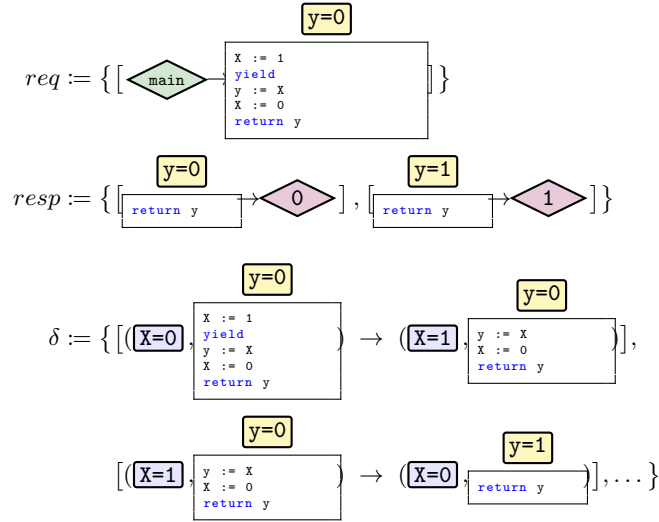

\subsection{Translation Example: Listing~\ref{lst:MotivatingExample3Ser}}
\label{appendix:subsec:Ex1C:NS}

For our third motivating example, presented in Listing~\ref{lst:MotivatingExample3Ser}, we denote the NS in Fig.~\ref{fig:code3ExampleNS}, the Serializability NFA in Fig.~\ref{fig:code3ExampleNFA}, and the Interleaving Petri net in Fig.~\ref{fig:code3ExamplePN}.

%\begin{minipage}[t]{0.3\textwidth}
%	\begin{lstlisting}[caption={With yield and lock (serializable)}]
%		request foo: 
%			// lock
%			while (L == 1): 
%				yield
%			L := 1 
%		
%			X := 1
%			yield
%			y := X 
%			X := 0
%		
%			// unlock    
%			L := 0
%			return y 
%	\end{lstlisting}
%\end{minipage}
%
%This program corresponds to the following Network System (NS):

%\begin{figure}[!htbp]
%	\centering
%	\includegraphics[width=1.1\textwidth]{plots/code_3_NS.png}
%	\caption{Network System for interleaving executions of the program in Listing~\ref{lst:MotivatingExample3Ser}.}
%	\label{fig:code3ExampleNS}
%\end{figure}

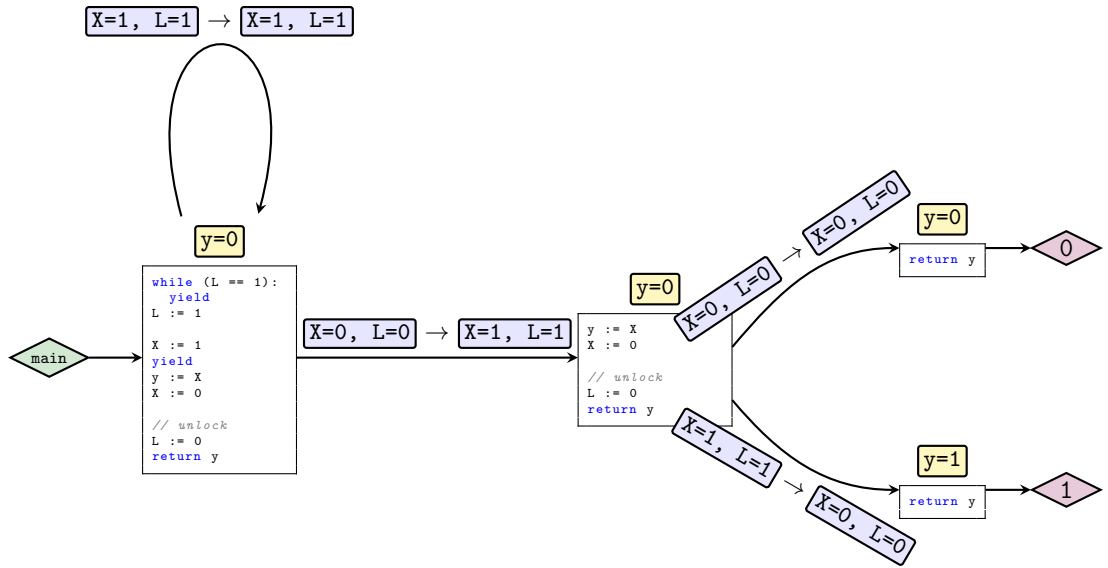
\begin{figure}[!htbp]
	\centering

	\begin{tikzpicture}[
		node distance=1.5cm and 2.5cm,
		>=stealth,
		thick,
		every node/.style={font=\small}
		]
		%–––– Request (green diamond) ––––
		\node[
		draw=black,
		line width=0.8pt,
		fill=ForestGreen!20,
		text=black,
		diamond,
		aspect=2,
		inner sep=2pt,
		scale=0.7
		] (main) {\texttt{main}};
		
		%–––– State 1: y=0 + locked program ––––
		\node[right=0.7cm of main, align=center] (state1) {
			\begin{tikzpicture}[baseline=(ybox.base)]
				\node[
				draw=black,
				line width=0.8pt,
				fill=brightyellow,
				text=black,
				rectangle,
				rounded corners=1pt,
				inner sep=2pt
				] (ybox) {\texttt{y=0}};
			\end{tikzpicture}\\[-2.5pt]
			\begin{minipage}{1.8cm}
				\begin{lstlisting}[language=CustomPseudoCode,numbers=none,basicstyle=\tiny\ttfamily]
while (L == 1):
	yield
L := 1

X := 1
yield
y := X
X := 0

// unlock
L := 0
return y
				\end{lstlisting}
			\end{minipage}
		};
		
		%–––– State 2: y=0 + tail program ––––
		\node[right=of state1, xshift=12mm, align=center] (state2) {
			\begin{tikzpicture}[baseline=(ybox.base)]
				\node[
				draw=black,
				line width=0.8pt,
				fill=brightyellow,
				text=black,
				rectangle,
				rounded corners=1pt,
				inner sep=2pt
				] (ybox) {\texttt{y=0}};
			\end{tikzpicture}\\[-2.5pt]
			\begin{minipage}{1.8cm}
				\begin{lstlisting}[language=CustomPseudoCode,numbers=none,basicstyle=\tiny\ttfamily]
y := X
X := 0

// unlock
L := 0
return y
				\end{lstlisting}
			\end{minipage}
		};
		
		%–––– State 3 (top-right): //end, y=0 ––––
		\node[above right=-0.5cm and 2.2cm of state2, align=center] (state3) {
			\begin{tikzpicture}[baseline=(ybox.base)]
				\node[
				draw=black,
				line width=0.8pt,
				fill=brightyellow,
				text=black,
				rectangle,
				rounded corners=1pt,
				inner sep=2pt
				] (ybox) {\texttt{y=0}};
			\end{tikzpicture}\\[-2.5pt]
			\begin{minipage}{0.9cm}
				\begin{lstlisting}[language=CustomPseudoCode,numbers=none,basicstyle=\tiny\ttfamily]
return y
				\end{lstlisting}
			\end{minipage}
		};
		
		%–––– State 4 (bottom-right): //end, y=1 ––––
		\node[below right=-0.2cm and 2.2cm of state2, align=center] (state4) {
			\begin{tikzpicture}[baseline=(ybox.base)]
				\node[
				draw=black,
				line width=0.8pt,
				fill=brightyellow,
				text=black,
				rectangle,
				rounded corners=1pt,
				inner sep=2pt
				] (ybox) {\texttt{y=1}};
			\end{tikzpicture}\\[-2.5pt]
			\begin{minipage}{0.9cm}
				\begin{lstlisting}[language=CustomPseudoCode,numbers=none,basicstyle=\tiny\ttfamily]
return y
				\end{lstlisting}
			\end{minipage}
		};
		
		%–––– Responses ––––
		\node[
		right=0.6cm of state3,
		draw=black,
		line width=0.8pt,
		fill=RedViolet!20,
		text=black,
		diamond,
		aspect=2,
		inner sep=2pt,
		scale=0.7,
		font=\Large
		] (resp0) {\texttt{0}};
		
		\node[
		right=0.6cm of state4,
		draw=black,
		line width=0.8pt,
		fill=RedViolet!20,
		text=black,
		diamond,
		aspect=2,
		inner sep=2pt,
		scale=0.7,
		font=\Large
		] (resp1) {\texttt{1}};
		
		%–––– Arrows ––––
		\draw[->] (main) -- (state1);
		
		% Self-loop on state1: (X=1,L=1) -> (X=1,L=1)
		\draw[->] (state1) edge[loop above]
		node[above] {%
			\begin{tikzpicture}[baseline=(a.base)]
				\node[draw=black,line width=0.8pt,fill=blue!10,rectangle,rounded corners=1pt,inner sep=2pt] (a) {\texttt{X=1, L=1}};
			\end{tikzpicture}
			$\to$
			\begin{tikzpicture}[baseline=(b.base)]
				\node[draw=black,line width=0.8pt,fill=blue!10,rectangle,rounded corners=1pt,inner sep=2pt] (b) {\texttt{X=1, L=1}};
			\end{tikzpicture}
		} (state1);
		
		% Transition state1 -> state2: (X=0,L=0) -> (X=1,L=1)
		\draw[->] (state1) -- node[above] {%
			\begin{tikzpicture}[baseline=(a.base)]
				\node[draw=black,line width=0.8pt,fill=blue!10,rectangle,rounded corners=1pt,inner sep=2pt] (a) {\texttt{X=0, L=0}};
			\end{tikzpicture}
			$\to$
			\begin{tikzpicture}[baseline=(b.base)]
				\node[draw=black,line width=0.8pt,fill=blue!10,rectangle,rounded corners=1pt,inner sep=2pt] (b) {\texttt{X=1, L=1}};
			\end{tikzpicture}
		} (state2);
		
		% Transition state2 -> state3: (X=0,L=0) -> (X=0,L=0)
		\draw[->] ([yshift=4pt]state2.east) to[out=50,in=180]
		node[above,sloped] {%
			\begin{tikzpicture}[baseline=(a.base)]
				\node[draw=black,line width=0.8pt,fill=blue!10,rectangle,rounded corners=1pt,inner sep=2pt] (a) {\texttt{X=0, L=0}};
			\end{tikzpicture}
			$\to$
			\begin{tikzpicture}[baseline=(b.base)]
				\node[draw=black,line width=0.8pt,fill=blue!10,rectangle,rounded corners=1pt,inner sep=2pt] (b) {\texttt{X=0, L=0}};
			\end{tikzpicture}
		} (state3.west);
		
		% Transition state2 -> state4: (X=1,L=1) -> (X=0,L=0)
		\draw[->] ([yshift=-16pt]state2.east) to[out=-50,in=180]
		node[below,sloped] {%
			\begin{tikzpicture}[baseline=(a.base)]
				\node[draw=black,line width=0.8pt,fill=blue!10,rectangle,rounded corners=1pt,inner sep=2pt] (a) {\texttt{X=1, L=1}};
			\end{tikzpicture}
			$\to$
			\begin{tikzpicture}[baseline=(b.base)]
				\node[draw=black,line width=0.8pt,fill=blue!10,rectangle,rounded corners=1pt,inner sep=2pt] (b) {\texttt{X=0, L=0}};
			\end{tikzpicture}
		} (state4.west);
		
		% To responses
		\draw[->] (state3) -- (resp0);
		\draw[->] (state4) -- (resp1);
		
	\end{tikzpicture}
	
	\caption{The network system for interleaved executions of the program in Listing~\ref{lst:MotivatingExample3Ser}.}
\label{fig:code3ExampleNS}
\end{figure}

%\begin{figure}[!htbp]
%	\centering
%	\includegraphics[width=0.4\textwidth]{plots/code_3_NFA.png}
%	\caption{NFA for serial executions of the program in Listing~\ref{lst:MotivatingExample3Ser}.}
%	\label{fig:code3ExampleNFA}
%\end{figure}

\begin{figure}  [!htbp]
	\centering

	\begin{tikzpicture}[
		->,>=stealth,
		thick,
		node distance=2.5cm,
		state/.style={
			draw=black,
			line width=0.8pt,
			fill=blue!10,
			rectangle,
			rounded corners=1pt,
			inner sep=2pt,
			font=\small
		},
		every node/.style={font=\small}
		]
		% States
		\node[state] (X1L1) {\texttt{X=1, L=1}};
		\node[state, right of=X1L1] (X0L0) {\texttt{X=0, L=0}};
		
		% No edges to or from X=1, L=1 (isolated)
		
		% Self-loop on X=0, L=0 with main/1 (colored lozenge notation)
		\draw[->] (X0L0) edge[loop right]
		node[right] {${\color{ForestGreen}\blacklozenge_{\mathrm{main}}}/{\color{red}\blacklozenge_1}$} (X0L0);
	\end{tikzpicture}
	
	\caption{The NFA for serial executions of the program in Listing~\ref{lst:MotivatingExample3Ser}.}
\label{fig:code3ExampleNFA}
\end{figure}

\begin{figure}[!htbp]
	\centering
	\includegraphics[width=0.8\textwidth]{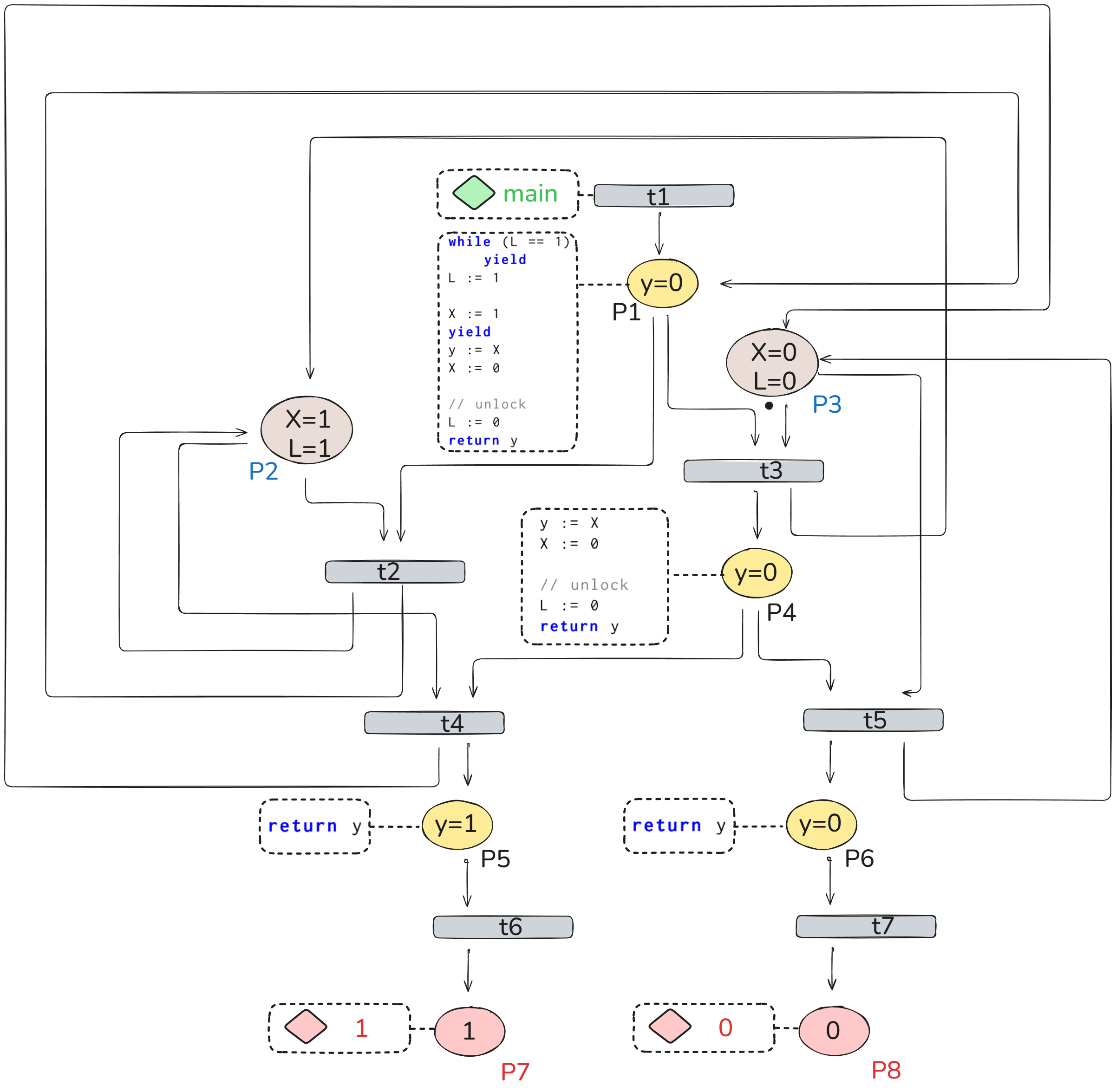}
	\caption{The Petri net for interleaved executions of the program in Listing~\ref{lst:MotivatingExample3Ser}.}
	\label{fig:code3ExamplePN}
\end{figure}

%\newpage
\clearpage

\section{Translating Network Systems to Petri Nets}
\label{appendix:NS-to-PN-formulation}

We denote with \(\mathbf0\) a zero vector of dimension \(|P|\), and with \(\mathbf1_{p}\) a \(|P|\)-sized indicator vector that has 0 in every coordinate except the one corresponding to place \(p \in P\), which has 1. 
The flow functions $\mathsf{pre},\mathsf{post}:T\to\{0,1\}^{|P|}$ assign to each transition $t$ a binary vector over $P$ whose $1$-entries mark the places from which tokens are consumed (for $\mathsf{pre}(t)$) and to which tokens are produced (for $\mathsf{post}(t)$) when $t$ fires.
A transition $t$ is enabled at $M$ iff $\mathsf{pre}(t)\le M$ (component-wise); firing yields
\(
M\xrightarrow{t}M' \quad\text{where}\quad M' = M-\mathsf{pre}(t)+\mathsf{post}(t)
\).	

\medskip
\textit{Construction.}
We generate the Petri net:
\[
N_{\mathrm{int}}(\mathcal S)
= (P,\,T,\,\mathsf{pre},\,\mathsf{post},\,M_0),
\]
where
\[
P
=
P_G \;\cup\; P_{REQ,L} \;\cup\; P_{REQ,RESP}
\]

for 
\[
\begin{aligned}
	P_G
	&= \{\,p_g \mid g\in G\},\quad
	P_{REQ,L}
	= \bigl\{\,p_{({\color{ForestGreen}\blacklozenge_{\mathit{req}}},\ell)}
	\mid {\color{ForestGreen}\blacklozenge_{\mathit{req}}}\in\mathit{REQ},\,\ell\in  L\bigr\},\\[1ex]
	P_{REQ,RESP}
	&= \bigl\{\,p_{({\color{ForestGreen}\blacklozenge_{\mathit{req}}}/{\color{red}\blacklozenge_{\mathit{resp}}})}
	\mid {\color{ForestGreen}\blacklozenge_{\mathit{req}}}\in\mathit{REQ},\,
	{\color{red}\blacklozenge_{\mathit{resp}}}\in\mathit{RESP}\bigr\}.
\end{aligned}
\]

%	\[
%	\begin{aligned}
	%		P_G &= \{\,p_g \mid g\in G\},\;\,
	%		P_{REQ,L} = \{\,p_{{\color{ForestGreen}\blacklozenge_{\mathit{req}}}/\ell}
	%		\mid {\color{ForestGreen}\blacklozenge_{\mathit{req}}}\in\mathit{REQ},\,\ell\in L\},\\
	%		P_{REQ,RESP} &= \{\,p_{{\color{ForestGreen}\blacklozenge_{\mathit{req}}}/{\color{red}\blacklozenge_{\mathit{resp}}}}
	%		\mid {\color{ForestGreen}\blacklozenge_{\mathit{req}}}\in\mathit{REQ},\,
	%		{\color{red}\blacklozenge_{\mathit{resp}}}\in\mathit{RESP}\}.
	%	\end{aligned}
%	\]

%	\[
%	P_G 
%	= \{\,p_g \mid g\in G\}
%	\quad 
%%	P_L 
%%	= \{\,p_\ell \mid \ell\in L\}
%%	\quad
%	P_{REQ,L} =
%	\{\,p_{{\color{ForestGreen}\blacklozenge_{\mathit{req}}}/\ell} \mid
%	{\color{ForestGreen}\blacklozenge_{\mathit{req}}}\in \mathit{REQ}, \ell\in L\},%\\
%	\quad
%	P_{REQ,RESP}=
%	\{\,p_{{\color{ForestGreen}\blacklozenge_{\mathit{req}}}/{\color{red}\blacklozenge_{\mathit{resp}}}} \mid
%	{\color{ForestGreen}\blacklozenge_{\mathit{req}}}\in \mathit{REQ}, {\color{red}\blacklozenge_{\mathit{resp}}}\in \mathit{RESP}\},
%	\]
%	\{\,p_{{\color{ForestGreen}\blacklozenge_{\mathit{req}}}/{\color{red}\blacklozenge_{\mathit{resp}}}} \mid
%	{\color{ForestGreen}\blacklozenge_{\mathit{req}}}\in \mathit{REQ}, {\color{red}\blacklozenge_{\mathit{resp}}}\in \mathit{RESP}\},
%	\]

%	\[
%	P_{REQ,RESP}=
%	\{\,p_{{\color{ForestGreen}\blacklozenge_{\mathit{req}}}/{\color{red}\blacklozenge_{\mathit{resp}}}} \mid
%	{\color{ForestGreen}\blacklozenge_{\mathit{req}}}\in \mathit{REQ}, {\color{red}\blacklozenge_{\mathit{resp}}}\in \mathit{RESP}\},
%	\]

with \(G\) being the set of global states, \(L\) being the set of local states (in the case of a \toolname-derived NS, this is the coupling of the local variable assignments of an in-flight request and its remaining \toolname{} program to execute), \(\mathit{REQ}\) denotes the request labels; and \(\mathit{RESP}\) denotes the response labels.

\medskip
Transitions are partitioned as:
\[
T = T_{\mathit{req}} \;\cup\; T_{\delta}\;\cup\;T_{\mathit{resp}}
\]
where

%	\[
%	T_{\mathit{req}} = \{\,t_{({\color{ForestGreen}\blacklozenge_{\mathit{req}}},\ell)} \mid {(\color{ForestGreen}\blacklozenge_{\mathit{req}}},\ell)\in\mathit{req}\},\quad
%	T_{\delta} = \{\,t_{(\ell,g)\to(\ell',g')} \mid (\ell,g)\xrightarrow{}(\ell',g')\in\delta\},\quad
%	T_{\mathit{resp}} = \{\,t_{(\ell,{\color{red}\blacklozenge_{\mathit{resp}}})} \mid (\ell,{\color{red}\blacklozenge_{\mathit{resp}}})\in\mathit{resp}\}.
%	\]

\begin{align*}
	T_{\mathit{req}}
	&= \{\,t_{({\color{ForestGreen}\blacklozenge_{\mathit{req}}},\ell)} \mid {(\color{ForestGreen}\blacklozenge_{\mathit{req}}},\ell)\in\mathit{req}\},\\[1ex]
	T_{\delta}
	&= \bigl\{\,t_{((\ell,g),(\ell',g'))} 
	\mid ((\ell,g),(\ell',g'))\in\delta\bigr\},\quad
	T_{\mathit{resp}}
	= \{\,t_{(\ell,{\color{red}\blacklozenge_{\mathit{resp}}})} \mid (\ell,{\color{red}\blacklozenge_{\mathit{resp}}})\in\mathit{resp}\}.
\end{align*}

%\smallskip
%Given a local state \(\ell\) which resulted from a request \({\color{ForestGreen}\blacklozenge_{\mathit{req}}}\) (either directly or downstream due to program execution) --- the transitions are:
Their \(\mathsf{pre}\) and \(\mathsf{post}\) flow functions are:
\[
\begin{alignedat}{3}
	\mathsf{pre}\bigl(t_{({\color{ForestGreen}\blacklozenge_{\mathit{req}}},\ell)}\bigr)
	&= \mathbf0, &
	\mathsf{post}\bigl(t_{({\color{ForestGreen}\blacklozenge_{\mathit{req}}},\ell)}\bigr)
	&= \mathbf1_{p_{({\color{ForestGreen}\blacklozenge_{\mathit{req}}},\ell)}}, 
	&&\text{for }({\color{ForestGreen}\blacklozenge_{\mathit{req}}},\ell)\in\mathit{req},\\
	\mathsf{pre}\bigl(t_{((\ell,g),(\ell',g'))}\bigr)
	&= \mathbf1_{p_{({\color{ForestGreen}\blacklozenge_{\mathit{req}}},\ell)}} + \mathbf1_{p_g}, &
	\mathsf{post}\bigl(t_{((\ell,g),(\ell',g'))}\bigr)
	&= \mathbf1_{p_{({\color{ForestGreen}\blacklozenge_{\mathit{req}}},\ell')}} + \mathbf1_{p_{g'}}, 
	&&\text{for }{{\color{ForestGreen}\blacklozenge_{\mathit{req}}}\in\mathit{REQ}}, ((\ell,g),(\ell',g'))\in\delta,\\
	\mathsf{pre}\bigl(t_{(\ell,{\color{red}\blacklozenge_{\mathit{resp}}})}\bigr)
	&= \mathbf1_{p_{({\color{ForestGreen}\blacklozenge_{\mathit{req}}},\ell)}}, &
	\mathsf{post}\bigl(t_{(\ell,{\color{red}\blacklozenge_{\mathit{resp}}})}\bigr)
	&= \mathbf1_{p_{({\color{ForestGreen}\blacklozenge_{\mathit{req}}}/{\color{red}\blacklozenge_{\mathit{resp}}})}}, 
	&&\text{for }{{\color{ForestGreen}\blacklozenge_{\mathit{req}}}\in\mathit{REQ},(\ell,\color{red}\blacklozenge_{\mathit{resp}}})\in\mathit{resp}
	%\ %(\ell\text{ the matching local state).
		%		}
\end{alignedat}
\]

Where, for the last two cases, \({\color{ForestGreen}\blacklozenge_{\mathit{req}}}\) concerns requests that eventually give rise to a local state \(\ell \in L\) that originated downstream (during execution).

\medskip
The initial marking is a single token in the place representing the initial global state $g_0$ of the NS:
\[
M_0(p_{g_0}) = 1,
\quad
M_0(p) = 0 \text{ for all }p\neq p_{g_0},
\]
%	where \(g_0\) is the initial global state of the network system \(\mathcal S\).  

Define the projection \(\pi\) to solely include the markings of places representing completed request/response pairs.
% (with the exception of a single token on a state in \(P_G\)).
%\[
%\pi \;:\;\mathbb N^P \;\longrightarrow\;\mathbb N^{P_R}
%\quad\bigl(\pi(M)\bigr)(p_{({{\color{ForestGreen}\blacklozenge_{\mathit{req}}}/{\color{red}\blacklozenge_{\mathit{resp}}}})})\;=\;M(p_{({{\color{ForestGreen}\blacklozenge_{\mathit{req}}}/{\color{red}\blacklozenge_{\mathit{resp}}}})})\text{ for }p_{({{\color{ForestGreen}\blacklozenge_{\mathit{req}}}/{\color{red}\blacklozenge_{\mathit{resp}}}})}\in P_{REQ,RESP}.
%\]
Then, the multiset of all  (${{\color{ForestGreen}\blacklozenge_{\mathit{req}}}/{\color{red}\blacklozenge_{\mathit{resp}}}}$) pairs of the NS, obtained by \textit{any} interleaving, is:
\[
\mathsf{Int}(\mathcal S)
\;=\;
\bigl\{\;\pi(M)\;\bigm|\;M_0 \xrightarrow{}^{*} M\text{ in }N_{\mathrm{int}}(\mathcal S)\bigr\}.
\]
\clearpage

\section{Example: Serializable Program}
\label{appendix:ns-serializable}

Now, we observe again the adjusted program with a spin-lock (as previously described in Listing~\ref{lst:MotivatingExample3Ser}), of which we depicted figures of the corresponding NS (Fig.~\ref{fig:code3ExampleNS}), Serializability NFA (Fig.~\ref{fig:code3ExampleNFA}), and Interleaving Petri net (Fig.~\ref{fig:code3ExamplePN}) in Appendix~\ref{appendix:MoreNsExamples}.
In this case, serializability corresponds to the Petri net being unable to reach a marking satisfying the same semilinear formula \(\mathcal {F}\) as in the non-serializable case described in the main text (subsec.~\ref{subsec:SerToNsTranslation}):

\[
\mathcal {F}:
\quad
P_1 = 0 \wedge 
\textcolor{blue}{P_2} \ge 0 \wedge \textcolor{blue}{P_3} \ge 0  \wedge P_4 = 0
\wedge P_5 = 0 \wedge P_6 = 0 \wedge \textcolor{red}{P_7} \ge 0 \wedge \textcolor{red}{P_8} \ge 1.
\]

% (note however, that this is not always the case). 
%(but this time each place $P_i$ corresponds to the new PN). 
%following formula:
%
%\[
%\textcolor{blue}
%P_1 = 0 \wedge 
%{P_2} \ge 0 \wedge \textcolor{blue}{P_3} \ge 0  \wedge P_4 = 0
%\wedge P_5 = 0 \wedge P_6 = 0 \wedge \textcolor{red}{P_7} = 0 \wedge \textcolor{red}{P_8} \ge 1.
%\]

In addition, although the target set is the same as in the previous example, the Petri net places $(P_1,\ldots,P_8)$ encode different states that correspond to the updated network system. For instance, now each place in the PN that encodes a global state accounts for two global variables, \texttt{X} and \texttt{L}, and the initial global state corresponds to the place encoding the initial assignment \textcolor{blue}{[X=0, L=0]}, etc.
Furthermore, unlike the case in Listing~\ref{lst:MotivatingExample2NonSer} (covered in subsec.~\ref{subsec:SerToNsTranslation}), this target set of markings (encoding request/response pairs of non-serial executions) is \textit{unreachable}, as witnessed by the inductive invariant:

\[
\begin{aligned}
	&(P_{1},\textcolor{blue}{P_{2}},\textcolor{blue}{P_{3}},P_{4},P_{5},P_{6},\textcolor{red}{P_{7}},\textcolor{red}{P_{8}})
	\;\mapsto\;\\
	&\quad
	\exists\,e_{0},\dots,e_{5}\ge0.\;
	\Bigl(
	e_{2}-e_{1}+\textcolor{blue}{P_{3}}-1=0\;\land\;
	e_{2}+P_{1}-e_{5}=0\;\land\;
	P_{5}-e_{1}+e_{4}=0\;\land\\
	&\qquad\quad
	-\,e_{4}+\textcolor{red}{P_{7}}=0\;\land\;
	P_{6}+e_{3}-e_{0}=0\;\land\;
	\textcolor{red}{P_{8}}-e_{3}=0\;\land\\
	&\qquad\quad
	-\,e_{2}+e_{1}+e_{0}+P_{4}=0\;\land\;
	-\,e_{2}+e_{1}+\textcolor{blue}{P_{2}}=0
	\Bigr)
	\;\land\;
	\bigl(P_{4}-1\ge0\;\lor\;\textcolor{blue}{P_{3}}-1\ge0\bigr).
\end{aligned}
\]

We then revert and project it on the request/response pairs of the network system.
We get the following inductive invariants for each of the two (reachable) global states:

\begin{proof}
	
	\medskip\noindent
	For global state \textcolor{blue}{[L=0,X=0]} the projected invariant is:
	\[
	\bigl(\,\text{\color{ForestGreen}$\blacklozenge_{\text{main}}$}/\text{\color{red}$\blacklozenge_{0}$},\;
	\text{\color{ForestGreen}$\blacklozenge_{\text{main}}$}/\text{\color{red}$\blacklozenge_{1}$}\bigr)
	\;\mapsto\;
	\exists\,e_{0},\dots,e_{5}\ge0.\;
	\begin{aligned}[t]
		& e_{2}-e_{1}=0,\quad
		e_{2}-e_{5}=0,\quad
		-e_{1}+e_{4}=0,\\
		& -e_{4}+\bigl(\text{\color{ForestGreen}$\blacklozenge_{\text{main}}$}/\text{\color{red}$\blacklozenge_{1}$}\bigr)=0,\quad
		-e_{0}+e_{3}=0,\\
		& -e_{3}+\bigl(\text{\color{ForestGreen}$\blacklozenge_{\text{main}}$}/\text{\color{red}$\blacklozenge_{0}$}\bigr)=0,\quad
		-e_{2}+e_{1}+e_{0}=0,\\
		& -e_{2}+e_{1}=0.
	\end{aligned}
	\]
	\noindent From 
	\[e_{1}=e_{2}=e_{4}=e_{5}=(\;
	\text{\color{ForestGreen}$\blacklozenge_{\text{main}}$}/\text{\color{red}$\blacklozenge_{1}$}),\;
	e_{0}=e_{3}=
	(\text{\color{ForestGreen}$\blacklozenge_{\text{main}}$}/\text{\color{red}$\blacklozenge_{0}$})
	\]
	
	it follows that \[-e_{2}+e_{1}+e_{0}=0\;\Longrightarrow\;e_{0}=0,\]
	
	thus: 
	\[
	(	\text{\color{ForestGreen}$\blacklozenge_{\text{main}}$}/\text{\color{red}$\blacklozenge_{0}$})
	=0 
	\]
	
	indicating that  (\(\text{\color{ForestGreen}$\blacklozenge_{\text{main}}$}/\text{\color{red}$\blacklozenge_{0}$}\)) cannot be obtained from the global state
	\textcolor{blue}{[L=0,X=0]}.
	
	\medskip\noindent
	In the second case, for the global state \textcolor{blue}{[L=1, X=1]}
	the projected invariant is:

	\[
	\bigl(\,\text{\color{ForestGreen}$\blacklozenge_{\text{main}}$}/\text{\color{red}$\blacklozenge_{0}$},\;
	\text{\color{ForestGreen}$\blacklozenge_{\text{main}}$}/\text{\color{red}$\blacklozenge_{1}$}\bigr)
	\;\mapsto\;
	\exists\,e_{0},\dots,e_{5}.\;\bot,
	\]
	which is unsatisfiable. Hence, no completed request/response pair, and in particular, no (\(\text{\color{ForestGreen}$\blacklozenge_{\text{main}}$}/\text{\color{red}$\blacklozenge_{0}$}\)) pair can be produced from this state via \textit{any} execution. Intuitively, this aligns with the fact that there cannot be any output generated via an interleaving, given that the spin-lock is acquired (\textcolor{blue}{[L=1]}).
	%	\guy{Mark/Jules is this part correct?}
	
	\medskip
	\noindent\textbf{Conclusion.}
	In every reachable state, no request/response pair of type	($	\text{\color{ForestGreen}$\blacklozenge_{\text{main}}$}/\text{\color{red}$\blacklozenge_{0}$})
	$
	can occur. Consequently, the only possible pairs are of type
	($	\text{\color{ForestGreen}$\blacklozenge_{\text{main}}$}/\text{\color{red}$\blacklozenge_{1}$})
	$,
	all of which lie within the NFA’s language for serial executions (Fig.~\ref{fig:code3ExampleNFA}).
	Hence, the program is serializable. Moreover, as proven in subsection~\ref{appendix:subsec:InductiveInvariantExample},
	these invariants are inductive: they hold in the initial state and are preserved under every transition.
\end{proof}

%	\medskip\noindent
%	\textbf{Conclusion.}
%	In all reachable states
%	it holds that there cannot be any request/response pair of type
%	(	$	\text{\color{ForestGreen}$\blacklozenge_{\text{main}}$}/\text{\color{red}$\blacklozenge_{0}$}
%	$).
%	%
%	Furthermore, this indicates that the only attainable request/response pairs are of the form 	($	\text{\color{ForestGreen}$\blacklozenge_{\text{main}}$}/\text{\color{red}$\blacklozenge_{1}$})
%	$, which are included in the language of of NFA for serial executions. Thus, this program is serializable.
%	%
%	We further show (in Appendix~\ref{appendix:InductiveInvariantExample}) that these invariants are \textit{inductive}: they encompass the system’s initial state and, once satisfied, remain true for all subsequent executions.

%\newpage

%\subsection{Time/Space Complexity}
%
%\guy{Should we add something about time/space complexity?}

%\newpage

\subsection{Proof of Inductive Invariant}
\label{appendix:subsec:InductiveInvariantExample}

\begin{proof}
	
	Define the predicate
	\[
	\begin{aligned}
		I(P_{1},\dots,\textcolor{red}{P_{8}})
		:={}&
		(P_{1},\textcolor{blue}{P_{2}},\textcolor{blue}{P_{3}},P_{4},P_{5},P_{6},\textcolor{red}{P_{7}},\textcolor{red}{P_{8}})
		\;\mapsto\;\\
		&\quad
		\exists\,e_{0},\dots,e_{5}\ge0.\;
		\Bigl(
		e_{2}-e_{1}+\textcolor{blue}{P_{3}}-1=0\;\land\;
		e_{2}+P_{1}-e_{5}=0\;\land\;
		P_{5}-e_{1}+e_{4}=0\;\land\\
		&\qquad\quad
		-\,e_{4}+\textcolor{red}{P_{7}}=0\;\land\;
		P_{6}+e_{3}-e_{0}=0\;\land\;
		\textcolor{red}{P_{8}}-e_{3}=0\;\land\\
		&\qquad\quad
		-\,e_{2}+e_{1}+e_{0}+P_{4}=0\;\land\;
		-\,e_{2}+e_{1}+\textcolor{blue}{P_{2}}=0
		\Bigr)
		\;\land\;
		\bigl(P_{4}-1\ge0\;\lor\;\textcolor{blue}{P_{3}}-1\ge0\bigr).
	\end{aligned}
	\]

	\medskip\noindent
	\textbf{(1) Initialization.}
	The initial marking has $\textcolor{blue}{P_{3}}=1$ and $P_{1}=\textcolor{blue}{P_{2}}=P_{4}=P_{5}=P_{6}=\textcolor{red}{P_{7}}=\textcolor{red}{P_{8}}=0$.
	Choose $e_{0}=\cdots=e_{5}=0$.  Then
	\[
	e_{i}\ge0,\quad
	e_{2}-e_{1}+\textcolor{blue}{P_{3}}-1=0-0+1-1=0,\;\dots,\;-e_{2}+e_{1}+P_{2}=0,
	\]
	and 
	\[
	P_{4}-1\ge0\;\lor\;\textcolor{blue}{P_{3}}-1\ge0
	\;=\;-1\ge0\;\lor\;0\ge0
	\;=\;\texttt{FALSE}\;\lor\;\texttt{TRUE}
	\;=\;\texttt{TRUE}.
	\]
	Thus $I$ holds initially.
	
	\medskip\noindent
	\textbf{(2) Consecution.}
	One checks for each transition $t_{k}$ of the Petri net that
	\[
	I(M)\;\Longrightarrow\;I\bigl(t_{k}(M)\bigr).
	\]
	In each case, the same $(e_{0},\dots,e_{5})$ can be adjusted (per the \texttt{SMT} certificate) to show that the eight equalities and the disjunction remain valid. See our accompanying artifact~\cite{ArtifactRepository} for generating a full proof in the standard \texttt{SMT-LIB} format~\cite{BaStTi10}.
	
	\medskip\noindent
	\textbf{(3) Refutation of the property.}
	Suppose by contradiction that there exists a marking $P$ for which both $I(P)$ and $\mathcal {F}(P)$ hold:
	\[
	\mathcal {F}(P):\quad
	P_{1}=0,\;
	\textcolor{blue}{P_{2}}\ge0,\;
	\textcolor{blue}{P_{3}}\ge0,\;
	P_{4}=0,\;
	P_{5}=0,\;
	P_{6}=0,\;
	\textcolor{red}{P_{7}}\ge0,\;
	\textcolor{red}{P_{8}}\ge1.
	\] 
	
	\noindent
	From
	\[
	e_{2}-e_{1}+\textcolor{blue}{P_{3}}-1=0
	\quad\text{and}\quad
	-e_{2}+e_{1}+\textcolor{blue}{P_{2}}=0
	\]
	we get
	\[
	\textcolor{blue}{P_{2}}=1-\textcolor{blue}{P_{3}}.
	\]
	From
	\[
	\textcolor{red}{P_{8}}-e_{3}=0
	\quad\text{and}\quad
	P_{6}+e_{3}-e_{0}=0
	\]
	and from the assumption that $P_6=0$, we get
	\[
	e_{0}=e_{3}=\textcolor{red}{P_{8}}.
	\]

	\noindent
	Similarly, the invariant equalities 
	$(-\,e_{2}+e_{1}+e_{0}+P_{4}=0)$ and $(	-\,e_{2}+e_{1}+\textcolor{blue}{P_{2}}=0)$
	induce
	\[
	\textcolor{blue}{P_{2}}=P_{4}+e_{0}=P_{4}+\textcolor{red}{P_{8}},
	\]
	thus, and as we also assume that $P_4=0$, then:
	\[
	\textcolor{red}{P_{8}}=\textcolor{blue}{P_2}-P4=(1-\textcolor{blue}{P_{3}})-P_{4}=1-\textcolor{blue}{P_{3}}-0=1-\textcolor{blue}{P_{3}}
	\]

%	\medskip
	\noindent
	However, $\mathcal {F}(P)$ also induces $\textcolor{blue}{P_{3}}\ge0$ and $\textcolor{red}{P_{8}}\ge1$, and hence $\textcolor{blue}{P_{3}}=0$.  
	Furthermore, as our invariant includes a conjunction with $\bigl(P_{4}-1\ge0\;\lor\;\textcolor{blue}{P_{3}}-1\ge0\bigr)$, then it necessarily holds that \(P_{4}\ge1\). This contradicts \(P_{4}=0\) as required for the semilinear set to be reachable.
	Thus, $I\land\mathcal {F}$ is unsatisfiable, i.e., 
	%\[
	$
	I(P)\;\Longrightarrow\;\neg\mathcal {F}(P)$.
	%\]
	This completes the proof that $I$ is an inductive invariant refuting property $\mathcal {F}$.
\end{proof}

%\newpage
\clearpage

\section{Non-Serializable Execution Counterexample}
\label{appendix:non-serializable-execution-example}
Continuing the running example presented in subsec.~\ref{subsec:SerToNsTranslation}, we present in Table~\ref{tab:PetriNetFiringCounterexample} a firing sequence of the Petri net (Fig.~\ref{fig:code2ExamplePN}) resulting in the marking \(M^*\) (satisfying $\mathcal {F}$):

\[
M^* = \{\textcolor{blue}{P_3}(1),\;\textcolor{red}{P_7}(1),\;\textcolor{red}{P_8}(1)\}
\]
%\medskip
%\noindent
%\textbf{Serializable NFA Extraction.}
%%
%
%
%
%\begin{wrapfigure}{r}{0.5\textwidth}  % “r” = right, width = 0.5\textwidth
%	\centering
%	\includegraphics[width=0.48\textwidth,trim=0 0 0 0,clip]{plots/code_2_NFA.png}
%	\caption{NFA for serialized executions of Listing~\ref{lst:MotivatingExample2NonSer}.}
%	\label{fig:code2ExampleNFA}
%\end{wrapfigure}

%\begin{figure}[!htbp]
%	\centering
%	\includegraphics[width=0.5\textwidth,trim=0 0 0 0,clip]{plots/code_2_NFA.png}
%	\caption{NFA for serialized executions of Listing~\ref{lst:MotivatingExample2NonSer}.}
%	\label{fig:code2ExampleNFA}
%\end{figure}
%
%\medskip
%\noindent
%\textbf{Petri Net Extraction.}
%
%
%
%\medskip
%\noindent
%\textbf{Counterexample Extraction.}
%%
%
%%

\begin{table}[H]
	\centering
	\label{tab:reach-seq}
	\resizebox{0.9\textwidth}{!}{
		\begin{tabular}{c l c c c c c c}
			\toprule
			\textbf{Step} 
			& \textbf{Firing} 
			& \multicolumn{3}{c}{\textbf{Marking (after firing)}} 
			& \multicolumn{3}{c}{\textbf{Description (after firing)}} \\
			\cmidrule(lr){3-5} \cmidrule(lr){6-8}
			& 
			& \textbf{Global} 
			& \textbf{Local} 
			& \textbf{Responses} 
			& \textbf{Global state} 
			& \textbf{In-flight requests} 
			& \textbf{Responses} \\
			\midrule
			0 & --                                  
			& {\color{blue}$P_3$(1)}                  
			& --                                    
			& --                                    
			& {\color{blue}[X=0]}                   
			& --                          
			& --                                    \\
			1 & $\textcolor{ForestGreen}{t_1}$ 
			& {\color{blue}$P_3$(1)}                  
			& $P_1$(1)                                
			& --                                    
			& {\color{blue}[X=0]}                   
			& {\color{ForestGreen}$\blacklozenge_\text{main}$} 
			& --                                    \\
			2 & $\textcolor{ForestGreen}{t_1}$ 
			& {\color{blue}$P_3$(1)}                  
			& $P_1$(2)                                
			& --                                    
			& {\color{blue}[X=0]}                   
			& {\color{ForestGreen}$\blacklozenge_\text{main}$}, {\color{ForestGreen}$\blacklozenge_\text{main}$}  
			& --                                    \\
			3 & $t_3$                                  
			& {\color{blue}$P_2$(1)}                  
			& $P_1$(1),$P_4$(1)                          
			& --                                   
			&                                    {\color{blue}[X=1]}    
			&                                    {\color{black}$\blacklozenge_\text{until yield}$}, {\color{ForestGreen}$\blacklozenge_\text{main}$}   
			& --                                    \\
			4 & $t_2$                                  
			& {\color{blue}$P_2$(1)}                  
			& $P_4$(2)                                
			& --                                    
			&                                    {\color{blue}[X=1]}    
			&                                    {\color{black}$\blacklozenge_\text{until yield}$}, {\color{black}$\blacklozenge_\text{until yield}$}   
			& --                                    \\
			5 & $t_4$                                  
			& {\color{blue}$P_3$(1)}                  
			& $P_5$(1),$P_4$(1)                          
			& --                                    
			&                                   {\color{blue}[X=0]}     
			&                                    {\color{black}$\blacklozenge_\text{after yield}$}, {\color{black}$\blacklozenge_\text{until yield}$}   
			& --                                    \\
			6 & $\textcolor{red}{t_6}$                     
			& {\color{blue}$P_3$(1)}                  
			& $P_4$(1)                                
			& {\color{red}$P_7$(1)}                    
			&                                      	{\color{blue}[X=0]}  
			&                                    {\color{black}$\blacklozenge_\text{until yield}$}   
			&                                   {\color{red}$\blacklozenge_1$}     \\
			7 & $t_5$                                  
			& {\color{blue}$P_3$(1)}                  
			& $P_6$(1)                                
			& {\color{red}$P_7$(1)}                    
			&                                   {\color{blue}[X=0]}    
			&                                    {\color{black}$\blacklozenge_\text{after yield}$}      
			&                                   {\color{red}$\blacklozenge_1$}        \\
			8 & $\textcolor{red}{t_7}$                     
			& {\color{blue}$P_3$(1)}                                  
			& --                                    
			& {\color{red}$P_7$(1),\color{red}$P_8$(1)}    
			&                                   {\color{blue}[X=0]}    
			&                                   --    
			&                                   {\color{red}$\blacklozenge_0$}, {\color{red}$\blacklozenge_1$}       \\
			\bottomrule
		\end{tabular}
	}
	\caption{The firing sequence reaching marking $M^*$ which is in our target semilinear set $\mathcal {F}$. The marking $P_i(n_j)$ indicates that there are $n_j$ tokens in place $P_i$. The initial marking has a single token in place $\textcolor{blue}{P_3}$, encoding $g_0$ ($\textcolor{blue}{\texttt{[X=0]}}$).}
	\label{tab:PetriNetFiringCounterexample}
\end{table}

%\newpage

%% Appendix
%\appendix
\clearpage

\section{Proof: Bidirectional Slicing Correctness}
\label{appendix:BidirectionalProof}

%\subsection{Preliminaries}
%
%
%\begin{definition}[Petri Net]
%	A \emph{Petri net} is a tuple
%	\[
%	N = (P,\,T,\,\Pre,\,\Post)
%	\]
%	where
%	\begin{itemize}
%		\item $P$ is a finite set of \emph{places},
%		\item $T$ is a finite set of \emph{transitions},
%		\item $\Pre: P\times T \to \mathbb{N}$ is the \emph{pre-incidence} function,
%		\item $\Post: P\times T \to \mathbb{N}$ is the \emph{post-incidence} function.
%	\end{itemize}
%\end{definition}
%
%\begin{definition}[Marking]
%	A \emph{marking} is a function $M: P \to \mathbb{N}$. We write $M(p)$
%	for the number of tokens in place $p$.  The initial marking is
%	denoted $M_0$.  A transition $t\in T$ is \emph{enabled} at marking
%	$M$ if $\forall p\in P:\,M(p)\ge\Pre(p,t)$.  Firing $t$ yields the
%	new marking
%	\[
%	M' = M - \Pre(\cdot,t) + \Post(\cdot,t),
%	\]
%	written $M \xrightarrow{t} M'$.
%\end{definition}
%
%\begin{definition}[Firing Sequence]
%	A sequence $\sigma = t_1 t_2 \cdots t_k \in T^*$ is \emph{fireable}
%	from $M_0$ if there exist markings $M_1,\dots,M_k$ such that
%	$M_0\xrightarrow{t_1}M_1\cdots\xrightarrow{t_k}M_k$.  We write
%	$M_0 \xrightarrow{\sigma} M_k$.
%\end{definition}
%
%\begin{definition}[Semilinear Target Set]
%	A \emph{semilinear} set $S\subseteq \mathbb{N}^P$ is a finite union of
%	linear sets.  We assume $S$ is given by a finite description of its
%	linear components.  We view $S$ as the \emph{target} set of markings
%	we wish to reach.
%\end{definition}

\subsection{The Bidirectional Slicing Algorithm}

Let $N=(P,T,\Pre,\Post, M_0)$ be a Petri net and $S\subseteq\mathbb{N}^P$ be a target set.
By convention, we assume that $P$ and $T$ are disjoint.
 
\begin{definition}[Forward Over-Approximation]
	Define the operator $\mathcal{F}:\mathcal{P}(P\cup T)\to\mathcal{P}(P\cup T)$ by
	\[
	X \mapsto X
	~\cup~
	\{\,t\in T \mid \forall p\in P:\; \Pre(t,p)>0 \implies p\in X\}
	~\cup~
	\{\,p\in P \mid \exists t\in X\cap T,\ \Post(t,p)>0\}.
	\]
	Starting from $X_0 = \{\,p\mid M_0(p)>0\}$, iterate
	$X_{i+1} = \mathcal{F}(X_i)$ until a least fixed-point
	$X^*=\bigcup_i X_i$ is reached.  Call $X^*_P = X^*\cap P$ the set of
	forward-reachable places.
\end{definition}

\begin{definition}[Backward Over-Approximation]
	Let
	\[
	Y_0 = \{\,p\in P \mid \exists M\in S:\;M(p)\neq0\}
	\]
	be the places unconstrained to zero by the target.  Define
	$\mathcal{B}:\mathcal{P}(P\cup T)\to\mathcal{P}(P\cup T)$ by
	\[
	Y \mapsto Y
	~\cup~
	\{\,t\in T \mid \forall p\in P:\; \Post(t,p)>0 \implies p\in Y\}
	~\cup~
	\{\,p\in P \mid \exists t\in Y\cap T,\ \Pre(t,p)>0\}.
	\]
	Starting from $Y_0$, defined as the set of all places that are not constrained to zero in the target set $S$ and also have a token in $M_0$;
	iterate $Y_{i+1} = \mathcal{B}(Y_i)$ until a least fixed-point
	$Y^*=\bigcup_i Y_i$ is reached.  Call $Y^*_P = Y^*\cap P$ the set of
	backward-relevant places.
\end{definition}

\begin{definition}[Sliced Net]
  Let
  \begin{align*}
    P' &= X^*_P \;\cap\; Y^*_P,
    \\
    T' &= \{\,t\in T \mid
    \forall p:\;\Pre(t,p)>0\implies p\in P',\;
    \forall p:\;\Post(t,p)>0\implies p\in P'
    \}.
  \end{align*}
  If $M_0(p) > 0$ for any $p \not\in P'$, then the sliced subnet is undefined.
  Otherwise, the sliced subnet is
  \[
  N' = \bigl(P',\,T',\,\Pre|_{P'\times T'},\,\Post|_{P'\times T'}, M_0|_{P'}\bigr).
  \]
\end{definition}

\subsection{Invariant and Correctness}

Intuitively, $P'$ contains an over-approximation of all the places reachable by a firing sequence starting with marking $M_0$ and ending with a marking in $S$.

\begin{definition}[Witnessable Place]
	A place $p\in P$ is \emph{witnessable} if there exist firing
	sequences $\sigma_1,\sigma_2\in T^*$ and markings $M$ and $M'$ such that
	\[
	M_0 \xrightarrow{\sigma_1} M
	\quad\text{and}\quad
	M \xrightarrow{\sigma_2} M'
	\quad\text{with}\quad
	M(p)>0
	\quad\text{and}\quad
	M'\in S.
	\]
	In other words, $p$ can carry a token in some execution from $M_0$ to a marking in the target set $S$.
\end{definition}

\begin{theorem}[Slicing Invariant]
	\label{thm:invariant}
	If a place $p$ is witnessable, then $p\in P'$.
\end{theorem}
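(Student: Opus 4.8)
The plan is to prove the two containments $p\in X^*_P$ and $p\in Y^*_P$ separately, since $P'=X^*_P\cap Y^*_P$; the first follows from the prefix $M_0\xrightarrow{\sigma_1}M$ of the witness, and the second from the suffix $M\xrightarrow{\sigma_2}M'$. Both reduce to a support-containment invariant proved by a routine induction on firing-sequence length, where by \emph{support} I mean the set of places carrying at least one token. Concretely I would isolate two lemmas and then combine them.

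\textbf{Forward containment.} The claim is that every marking reachable from $M_0$ has support contained in $X^*_P$, proved by induction on the length of the firing sequence. The base case is immediate, since $\mathrm{supp}(M_0)=X_0\subseteq X^*$. For a step $M\xrightarrow{t}M'$, enabledness gives $\Pre(t)\le M$, so every input place of $t$ lies in $\mathrm{supp}(M)\subseteq X^*$, and closure of $X^*$ under $\mathcal F$ forces $t\in X^*$. Now take any $q$ with $M'(q)>0$: either $M(q)>0$, so $q\in\mathrm{supp}(M)\subseteq X^*_P$ by the induction hypothesis, or $M(q)=0$, in which case enabledness forces $\Pre(t,q)=0$ and hence $M'(q)=\Post(t,q)>0$, so $q$ is produced by the $X^*$-transition $t$ and $q\in X^*$ again by closure. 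Applying this to $M_0\xrightarrow{\sigma_1}M$ with $M(p)>0$ yields $p\in X^*_P$.

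\textbf{Backward containment.} The claim is that every marking $M$ from which some target marking is reachable (i.e.\ $M\xrightarrow{\sigma}M'$ with $M'\in S$) has support contained in $Y^*_P$, proved by induction on $|\sigma|$. The base case $M=M'\in S$ holds because $M(q)>0$ directly places $q$ in $Y_0\subseteq Y^*_P$. For a step $M\xrightarrow{t}M_1\xrightarrow{\sigma'}M'$ the induction hypothesis gives $\mathrm{supp}(M_1)\subseteq Y^*_P$; fix $q$ with $M(q)>0$. If $\Pre(t,q)=0$ then $M_1(q)\ge M(q)>0$, so $q\in\mathrm{supp}(M_1)\subseteq Y^*_P$. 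Otherwise $\Pre(t,q)>0$, and the crux is to show $t\in Y^*$: since $t$ is enabled at $M$, firing it leaves $M_1(p)\ge\Post(t,p)>0$ for every output place $p$ of $t$, so all output places of $t$ lie in $\mathrm{supp}(M_1)\subseteq Y^*$, which is exactly the condition under which $\mathcal B$ adds $t$; hence $t\in Y^*$, and $\Pre(t,q)>0$ then forces $q\in Y^*$ by closure. Applying this to $M\xrightarrow{\sigma_2}M'$ with $M(p)>0$ and $M'\in S$ yields $p\in Y^*_P$, and combining the two containments gives $p\in P'$.

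The main obstacle is the inductive step of the backward direction, namely verifying that the transition $t$ consuming from $q$ is genuinely backward-relevant. Because $\mathcal B$ admits $t$ only when \emph{all} of its output places already lie in $Y$, one cannot merely follow a single token; the observation that unlocks the step is that enabledness of $t$ at $M$ guarantees a net positive count at each output place of $t$ in $M_1$, placing the entire output set inside $\mathrm{supp}(M_1)$ and hence inside $Y^*$. I would also flag one definitional point on which the backward base case relies: the seed of the backward pass is taken to be $Y_0=\{p\mid\exists M\in S,\ M(p)\neq 0\}$ (the formal definition), since this is precisely what ensures that a place positive in a target marking is captured at the start.
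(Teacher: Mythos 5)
Your proof is correct and follows the same decomposition as the paper's: forward containment $p\in X^*_P$ from the prefix $\sigma_1$ and backward containment $p\in Y^*_P$ from the suffix $\sigma_2$, combined via $P'=X^*_P\cap Y^*_P$. The paper's own proof merely gestures at ``standard Petri-net monotonicity'' and ``working backward from $S$'', so your explicit support-containment inductions --- in particular the observation that enabledness of $t$ at $M$ forces every output place of $t$ into $\mathrm{supp}(M_1)$, which is what licenses $t\in Y^*$, and your reliance on the formal definition of $Y_0$ rather than the inconsistent prose gloss --- supply exactly the details the paper leaves implicit.
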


\begin{proof}
	We split the argument into two parts.
	
	\medskip
	\noindent
	\textbf{(1) Forward-reachability.}
	Suppose $p$ is witnessable.  Then there is a prefix
	$\sigma_1\in T^*$ such that $M_0\xrightarrow{\sigma_1}M$ and
	$M(p)>0$.  By standard Petri-net monotonicity, every place that
	receives a token in the course of $\sigma_1$ must appear in the
	forward fixed-point $X^*_P$.  Hence $p\in X^*_P$.
	
	\medskip
	\noindent
	\textbf{(2) Backward-relevance.}
	Again, since $p$ is witnessable, there is a suffix
	$\sigma_2\in T^*$ from $M$ to $M'\in S$ with $M(p)>0$.  Working
	backward from $S$, every place that can contribute to satisfying the
	semilinear constraints appears in the backward fixed-point $Y^*_P$.
	Thus $p\in Y^*_P$.
	
	\paragraph{Conclusion.}
	Combining (1) and (2) yields $p\in X^*_P\cap Y^*_P = P'$, as desired.
\end{proof}

\begin{corollary}
  If $M_0(p) > 0$ for any $p \not\in P'$ (\textit{i.e.}, if the sliced net is undefined), then $S$ is not reachable from $M_0$.
\end{corollary}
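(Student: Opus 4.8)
The plan is to prove the contrapositive: assuming $S$ is reachable from $M_0$, I will show that every place carrying an initial token must survive the slice, i.e.\ lie in $P'$. This immediately forbids the situation described in the hypothesis (a place $p \notin P'$ with $M_0(p) > 0$), and hence establishes the corollary.

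First I would unpack reachability. If $S$ is reachable from $M_0$, fix a firing sequence $\sigma \in T^*$ and a marking $M' \in S$ with $M_0 \xrightarrow{\sigma} M'$. Now let $p$ be an arbitrary place with $M_0(p) > 0$; the goal is to exhibit $p$ as a \emph{witnessable} place in the precise sense defined just before Theorem~\ref{thm:invariant}. The key observation is that the witness decomposition permits a trivial prefix: the definition only requires $\sigma_1 \in T^*$, so I may take $\sigma_1 = \varepsilon$, the empty sequence. Then the intermediate marking $M$ in the decomposition is forced to equal $M_0$, so that $M(p) = M_0(p) > 0$ holds by assumption. Taking $\sigma_2 = \sigma$ gives $M = M_0 \xrightarrow{\sigma} M'$ with $M' \in S$. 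Both clauses of witnessability are thereby satisfied, so $p$ is witnessable, and Theorem~\ref{thm:invariant} yields $p \in P'$.

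Since $p$ was an arbitrary place with $M_0(p) > 0$, I have shown that $S$ reachable implies $M_0(p) > 0 \Rightarrow p \in P'$ for every $p$. Contrapositively, if there exists a place $p \notin P'$ with $M_0(p) > 0$ — which is exactly the condition under which the sliced net is declared undefined — then $S$ is not reachable from $M_0$, completing the argument.

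I do not anticipate a substantive obstacle here, as the statement is essentially an immediate corollary of the slicing invariant. The only point meriting care is the legitimacy of the empty prefix: I would explicitly note that $\varepsilon \in T^*$ and that $M_0 \xrightarrow{\varepsilon} M_0$, so that the intermediate marking coincides with $M_0$ and inherits its token on $p$. With that verified, the reduction to Theorem~\ref{thm:invariant} is mechanical.
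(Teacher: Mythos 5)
Your proof is correct and takes exactly the route the paper intends: the corollary is stated without an explicit proof as an immediate consequence of Theorem~\ref{thm:invariant}, and your observation that the empty prefix $\sigma_1=\varepsilon$ makes every initially-marked place witnessable whenever $S$ is reachable is precisely the missing (trivial) step. Nothing to add.
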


\begin{corollary}[Bidirectional Slicing Soundness]
	Let $N = (P, T, \Pre, \Post, M_0)$ be a Petri net and $S$ a target set.  
	Let $N' = (P',T',\,\Pre|_{P'\times T'},\,\Post|_{P'\times T'},\,M_0|_{P'})$ be the sliced net.  
	Then $S$ is reachable from $N$ iff it is reachable from $N'$.
\end{corollary}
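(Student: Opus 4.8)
The plan is to prove both inclusions, using the already-established Slicing Invariant (Theorem~\ref{thm:invariant}) as the workhorse for the nontrivial direction. Throughout I identify a marking of $N'$ with its zero-extension to $P$; since the sliced net is defined, $M_0$ carries no tokens outside $P'$, so $M_0|_{P'}$ and $M_0$ agree under this identification, and I will only ever need to compare markings that are supported on $P'$.

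For the direction $N'\Rightarrow N$, I would argue that $N'$ is literally a subnet of $N$: its places, transitions, and flow functions are restrictions, and by construction every $t\in T'$ consumes and produces only on places of $P'$. Hence any firing $M\xrightarrow{t}M'$ in $N'$ lifts verbatim to a firing in $N$ between the corresponding zero-extensions, because the token arithmetic on the $P'$-coordinates is unaffected by the (identically zero) coordinates outside $P'$. A reaching sequence $M_0|_{P'}\xrightarrow{\sigma}M$ with $M\in S$ therefore lifts to $M_0\xrightarrow{\sigma}\hat M$ in $N$ with $\hat M\in S$. This direction is routine and uses no fixed-point reasoning.

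The substantive direction is $N\Rightarrow N'$. Given a witness $M_0\xrightarrow{\sigma}M^{*}$ with $M^{*}\in S$, the key observation is that every marking $M_i$ appearing along $\sigma$ is supported entirely on $P'$. Indeed, any place $p$ with $M_i(p)>0$ is \emph{witnessable}: the prefix of $\sigma$ drives $M_0$ to $M_i$ while $p$ holds a token, and the suffix drives $M_i$ to $M^{*}\in S$. Theorem~\ref{thm:invariant} then yields $p\in P'$. Applying this both before and after each firing shows, for every transition $t$ fired in $\sigma$, that all of $t$'s pre-places (which must be marked to enable $t$) and all of $t$'s post-places (which are marked in the successor) lie in $P'$; by the definition of $T'$, therefore $t\in T'$. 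Consequently $\sigma$ uses only transitions of $T'$ and traverses only $P'$-supported markings, so restricting every marking to $P'$ produces a valid firing sequence of $N'$ from $M_0|_{P'}$ to $M^{*}|_{P'}$. Since $M^{*}$ is itself $P'$-supported, $M^{*}|_{P'}$ is the restriction of a marking in $S$, giving reachability of $S$ in $N'$.

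The main obstacle is precisely this support argument: showing that no intermediate marking ever deposits a token outside $P'$ and that no transition outside $T'$ is ever required. This is where the forward and backward fixed points must interlock, and it is exactly what the witnessability formulation of Theorem~\ref{thm:invariant} is designed to supply. A minor bookkeeping point to dispatch is the degenerate case in which $M_0$ places a token outside $P'$ (so $N'$ is undefined): here the preceding corollary already establishes that $S$ is unreachable in $N$, so the equivalence holds vacuously once an undefined $N'$ is read as admitting no reachable target. I expect the ``restriction commutes with firing'' step and the support argument to be short once the invariant is invoked, with no genuine estimate involved.
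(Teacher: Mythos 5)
Your proof is correct and follows exactly the route the paper intends: the paper states this corollary without an explicit proof, leaving it as a direct consequence of the Slicing Invariant (Theorem~\ref{thm:invariant}), and your argument --- lifting firing sequences from the subnet $N'$ for the easy direction, and using witnessability of every marked place along a witness sequence (hence membership in $P'$, hence membership of every fired transition in $T'$) for the substantive direction --- is precisely that intended derivation. The handling of the degenerate case via the preceding corollary and the zero-extension convention for markings are both consistent with the paper's setup.
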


%\medskip
%\noindent
%\textbf{Termination and Complexity}
\subsection{Termination and Complexity}

\begin{lemma}
	Each iteration of $\mathcal{F}$ and $\mathcal{B}$ strictly increases
	the set of included elements (unless already at the fixed point), and
	the total number of elements is finite.  Hence, both reach their
	fixed points in at most $|P|+|T|$ iterations each.
\end{lemma}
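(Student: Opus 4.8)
The plan is to recognize this as the standard termination argument for an inflationary (extensive) operator iterated on a finite lattice, so the entire statement follows from three elementary observations about $\mathcal{F}$ and $\mathcal{B}$ viewed as maps on subsets of $P\cup T$.

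First I would establish \emph{extensivity}: for every $X\subseteq P\cup T$ we have $X\subseteq\mathcal{F}(X)$, and symmetrically $Y\subseteq\mathcal{B}(Y)$. This is immediate from the definitions, since each operator is written as the union of its argument with two further sets (the newly enabled transitions and the newly reachable places). Iterating from $X_0$ therefore produces a chain $X_0\subseteq X_1\subseteq X_2\subseteq\cdots$ that is non-decreasing under inclusion, and the same holds for the $Y_i$.

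Next I would record the strict-increase/fixed-point dichotomy, which is the first assertion of the lemma. Because $\mathcal{F}(X_i)\supseteq X_i$ always holds, at each step exactly one of two cases occurs: either the inclusion is proper, $X_{i+1}\supsetneq X_i$, so that $|X_{i+1}|\ge|X_i|+1$; or equality holds, $X_{i+1}=X_i$, in which case $X_i$ is already a fixed point of $\mathcal{F}$ and the iteration has stabilized. Finally I would invoke finiteness and count: every iterate lies inside $P\cup T$, and by the convention $P\cap T=\varnothing$ we have $|P\cup T|=|P|+|T|$, so the cardinalities $|X_0|\le|X_1|\le\cdots$ form a non-decreasing sequence of natural numbers bounded above by $|P|+|T|$. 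Such a sequence can strictly increase at most $|P|+|T|$ times before it is forced to repeat a value, and the first repetition is the fixed point; hence $X^{*}=\bigcup_i X_i$ is reached in at most $|P|+|T|$ iterations. The identical argument, with $\Pre$ and $\Post$ interchanged, yields the corresponding bound for $\mathcal{B}$ and $Y^{*}$.

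The content here is genuinely routine, so there is no real obstacle; the only points demanding care are reading extensivity off the two definitions correctly and performing the count so the bound lands on $|P|+|T|$ exactly rather than one larger. The disjointness convention $P\cap T=\varnothing$ is precisely what licenses the equality $|P\cup T|=|P|+|T|$ used in the final step, and note that monotonicity of the operators is not needed for termination—extensivity together with finiteness suffices.
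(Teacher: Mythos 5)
Your proof is correct and matches the paper's argument, which the authors dispose of in one line (``Immediate from monotonicity and finiteness''); you simply spell out the extensivity, strict-increase-or-stabilize dichotomy, and the cardinality count that the paper leaves implicit. Your observation that extensivity (rather than monotonicity proper) is what drives termination is a fair refinement of the paper's wording, but the substance is identical.
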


\begin{proof}
	Immediate from monotonicity and finiteness.
\end{proof}

\noindent
Therefore, the bidirectional slicing converges in polynomial time and preserves an over-approximation of the
places and transitions that \emph{may} appear in some firing sequence from
$M_0$, as part of a marking ending in the target semilinear set $S$.

%\newpage

\begin{figure}[H]
	\centering
	
	% Top row: (a), (b)
	\begin{subfigure}[b]{0.45\textwidth}
		\centering
		\includegraphics[width=\textwidth]{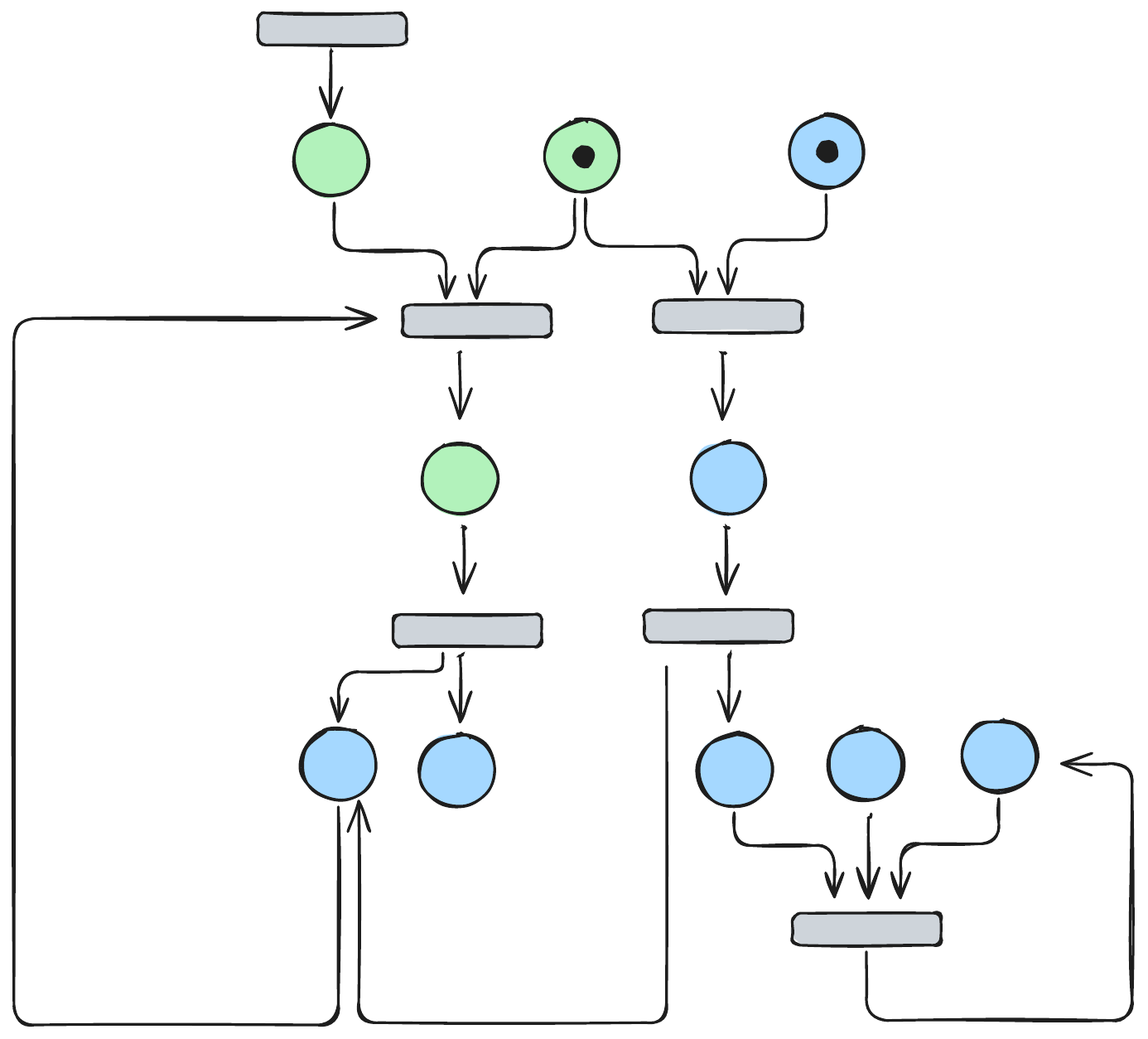}
		\caption{Step 0: initial Petri net, before slicing.}
		\label{fig:step:a}
	\end{subfigure}\hfill
	\begin{subfigure}[b]{0.45\textwidth}
		\centering
		\includegraphics[width=\textwidth]{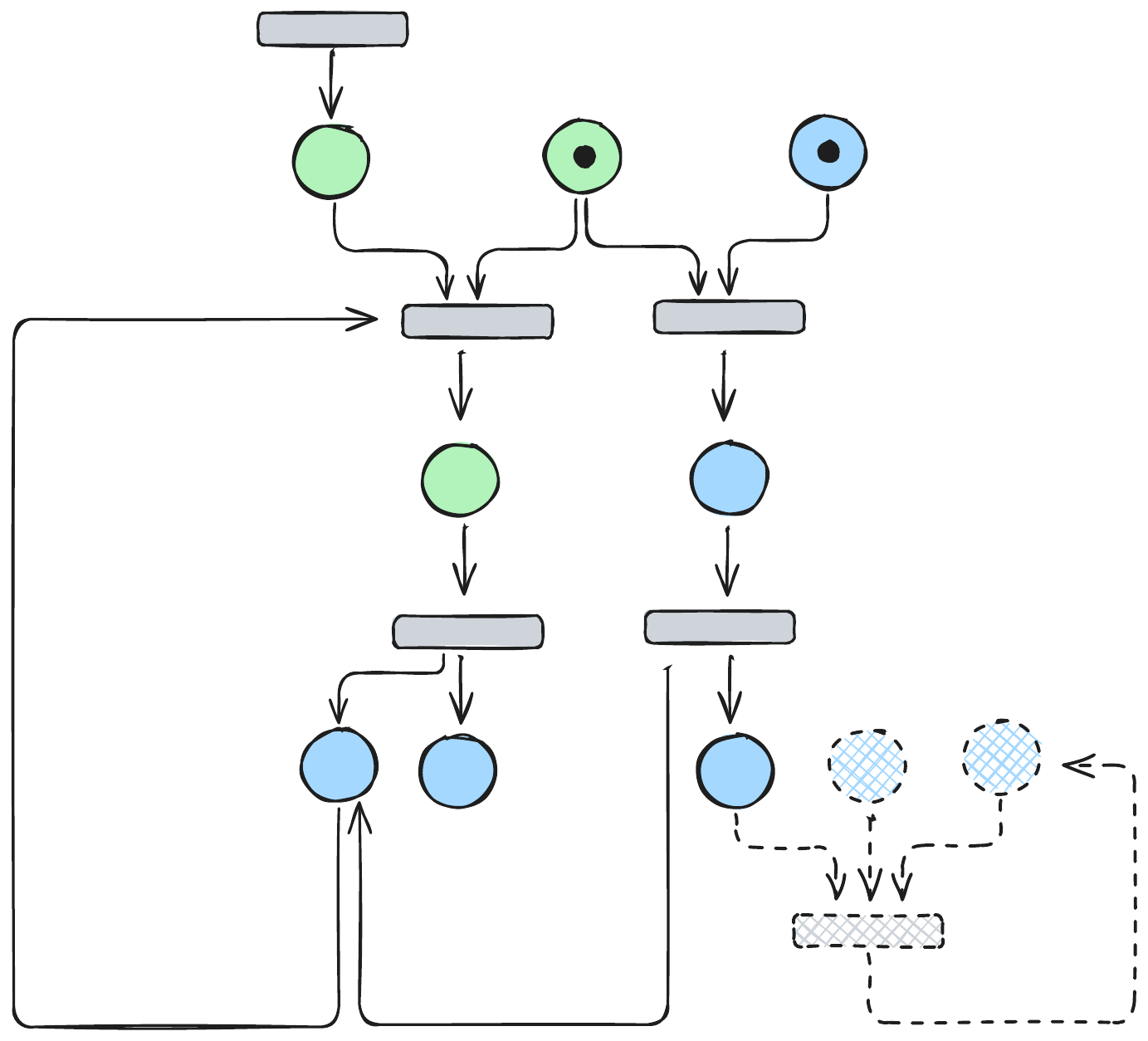}
		\caption{Step 1: first forward pass.}
		\label{fig:step:b}
	\end{subfigure}
	
	\vspace{1em}
	
	% Bottom row: (c), (d), and (e) matching (d)’s height
	\begin{subfigure}[b]{0.30\textwidth}
		\centering
		\includegraphics[width=\textwidth]{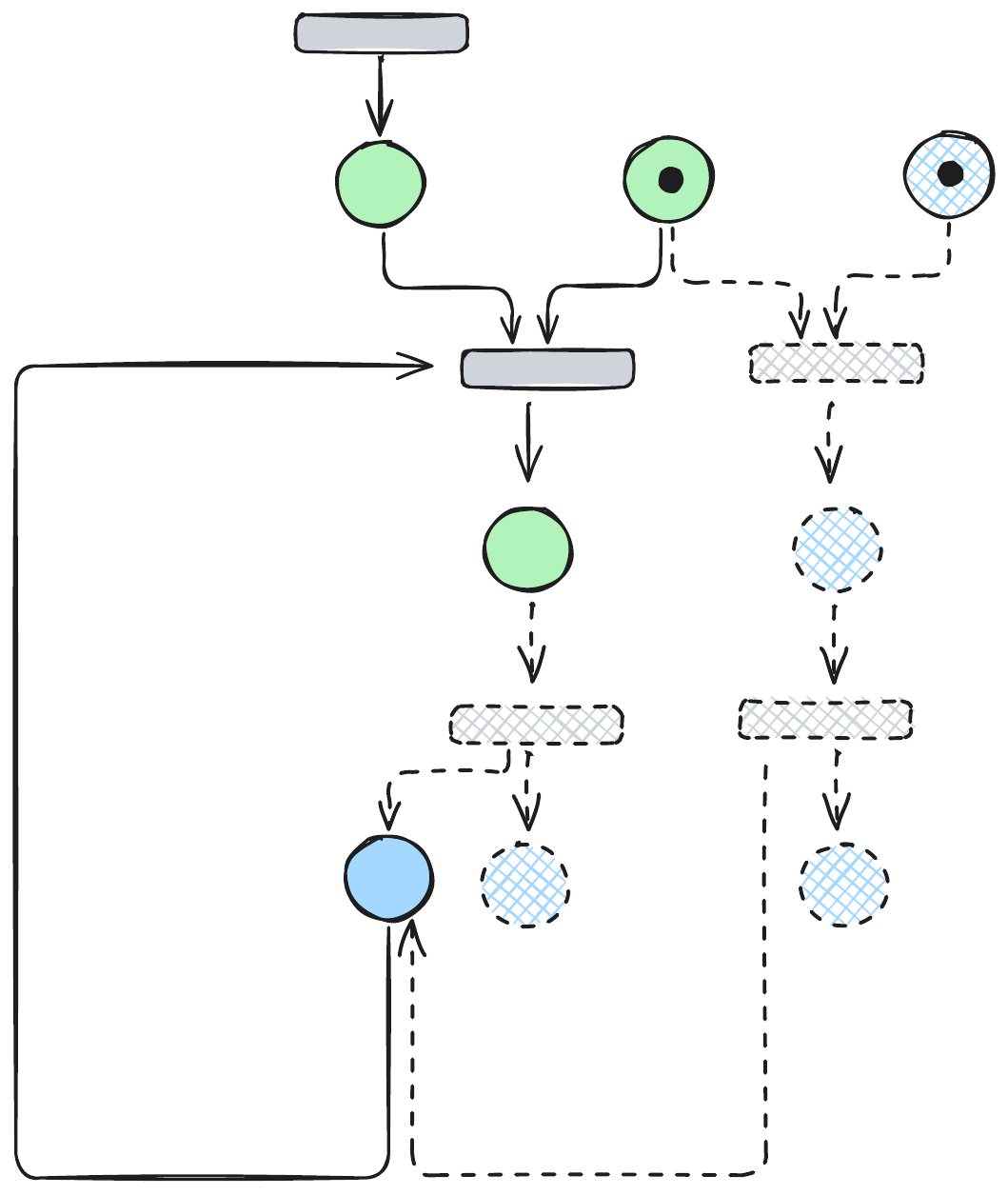}
		\caption{Step 2: first backward pass.}
		\label{fig:step:c}
	\end{subfigure}\hfill
	\begin{subfigure}[b]{0.23\textwidth}
		\centering
		\includegraphics[width=\textwidth]{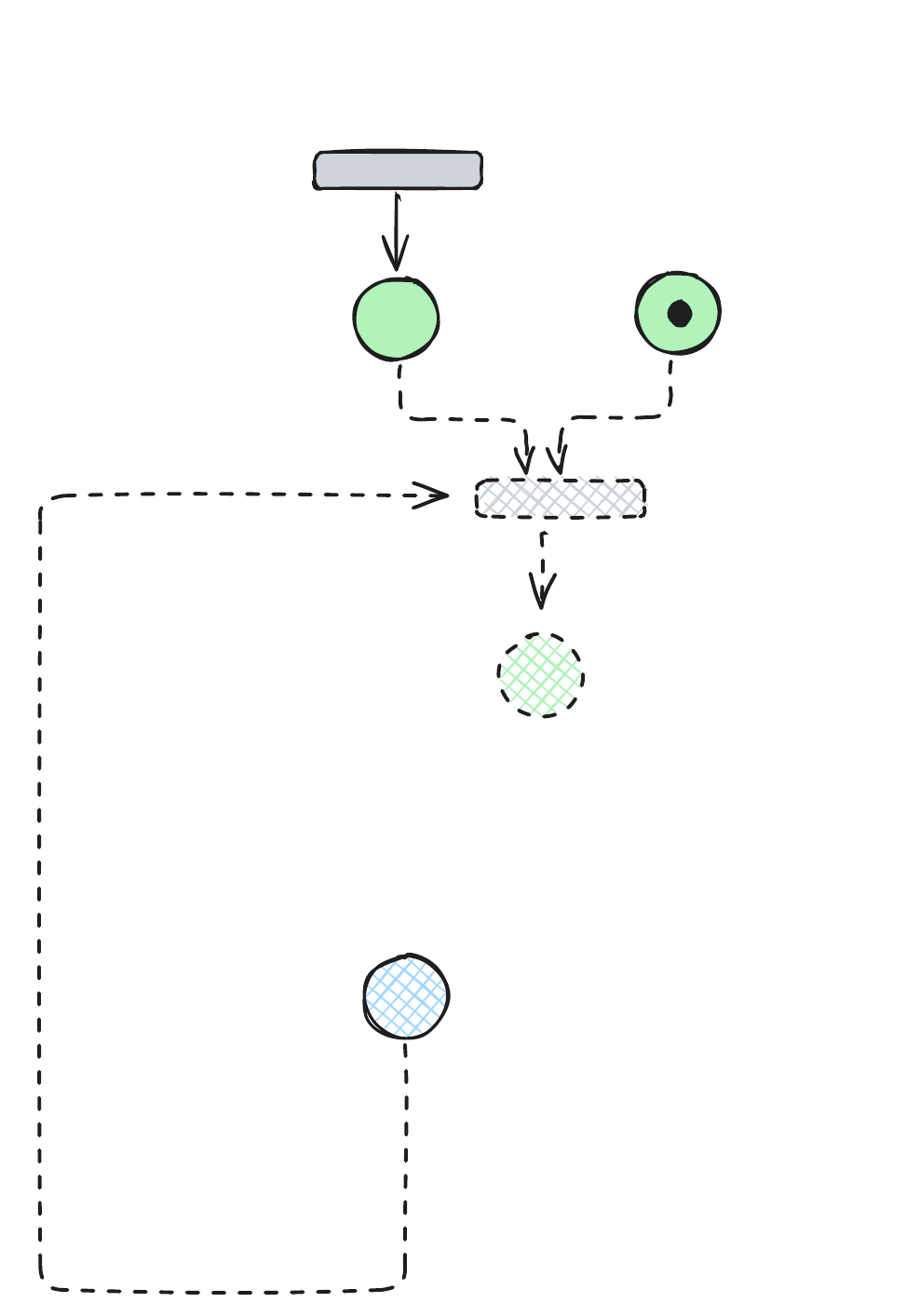}
		\caption{Step 3: second forward pass.}
		\label{fig:step:d}
	\end{subfigure}\hfill
	% <-- three-arg form: [vpos][total height][inner vpos]
	\begin{subfigure}[b][\subfigheight][b]{0.23\textwidth}
		\centering
		\includegraphics[width=\textwidth]{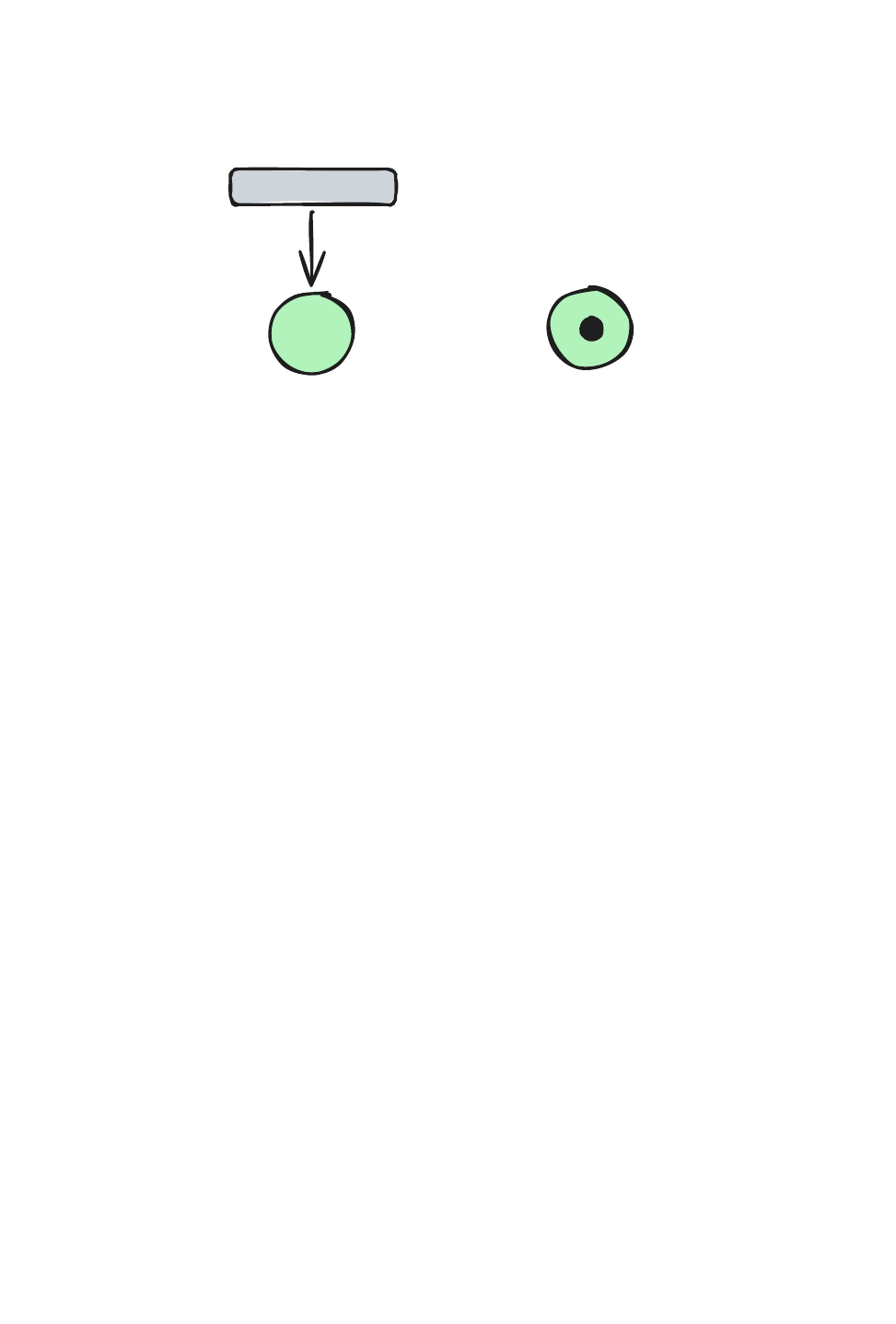}
		\caption{Step 4: final Petri net.}
		\label{fig:step:e}
	\end{subfigure}
	
	\caption{A Petri net during three rounds of bidirectional slicing: two forward passes and one backward pass. Black dots represent initial token markings; green places represent places that are allowed to be reachable in our constraints (i.e., aren't fixed to zero tokens in the final marking). Dashed shapes represent places and transitions that are identified as removable in the current iteration, and will be removed after it ends.}
	\label{fig:bidirectional_pruning}
\end{figure}

%\newpage

\clearpage

\section{Evaluation: Full Results}
\label{appendix:full_results}

See Table~\ref{tab:benchmarks-all}.
%\begin{table}[htbp]
%	\centering
%	% Load the tabular from the external file:
%	\input{tables/average_and_mean_values_of_big_table_simplified.tex}
%	\caption{Average and median runtime. Values are rounded to the nearest integer, to reduce clutter. The \textit{total} column also includes the time for validation.}
%	\label{tab:stats-summary}
%\end{table}

\begin{table}[htbp]
	\centering
	% Load the tabular from the external file:
	\begin{table}[H]
	\centering
	\small
	% increase horizontal padding between columns
	\setlength{\tabcolsep}{5pt}
	\renewcommand{\arraystretch}{0.9}
	\begin{tabular*}{\textwidth}{@{\extracolsep{\fill}}%
			p{1.5cm}   % Category
			p{1.0cm} % Benchmark
			c        % Serializable
			c c c c c c % Features
			r r       % Cert, Total
		}
		\toprule
		\multicolumn{2}{c}{\textbf{Benchmark}}
		& \textbf{Serializable}
		& \multicolumn{6}{c}{\textbf{Features}}
		& \multicolumn{2}{c}{\textbf{Runtime (ms)}} \\
		\cmidrule(lr){1-2} \cmidrule(lr){3-3} \cmidrule(lr){4-9} \cmidrule(lr){10-11}
		&
		&
		& If & While & \texttt{?} & Arith & Yield & Multi-req
		& Cert. & Total \\
		\midrule
		\multirow{7}{=}{Core expressions} & \texttt{a1.ser} & \greencmark &  & \cmark &  &  &       &   & 2 & 47 \\
		 & \texttt{a2.ser} & \xmark &  &        &  &  & \cmark &   & 280 & 296 \\
		 & \texttt{a3.ser} & \greencmark &  &        &  &  &       &   & 1 & 32 \\
		 & \texttt{a4.ser} & \greencmark &  &        &  &  & \cmark & \cmark & 637 & 1{,}071 \\
		 & \texttt{a5.ser} & \greencmark &  & \cmark &  &  & \cmark & \cmark & 3{,}234 & 13{,}624 \\
		 & \texttt{a6.ser} & \xmark &  &        &  &  & \cmark & \cmark & 757 & 775 \\
		 & \texttt{a7.ser} & \greencmark & \cmark & \cmark &  &  & \cmark &   & 4 & 33 \\
		\midrule
		\multirow{4}{=}{State machines} & \texttt{b1.json} & \greencmark & \cmark &        &  &  & \cmark & \cmark & 683 & 968 \\
		 & \texttt{b2.json} & \greencmark & \cmark &        &  &  & \cmark & \cmark & 2{,}063 & 7{,}802 \\
		 & \texttt{b3.json} & \greencmark & \cmark &        &  &  & \cmark & \cmark & 730 & 2{,}080 \\
		 & \texttt{b4.json} & \greencmark & \cmark &        &  &  & \cmark & \cmark & 660 & 1{,}909 \\
		\midrule
		\multirow{8}{=}{Mixed arithmetic} & \texttt{c1.ser} & \xmark &  & \cmark &  & \cmark & \cmark & \cmark & 356{,}195 & 356{,}299 \\
		 & \texttt{c2.ser} & \greencmark &  & \cmark &  & \cmark & \cmark & \cmark & 9{,}858 & 292{,}228 \\
		 & \texttt{c3.ser} & \greencmark &  & \cmark &  & \cmark & \cmark & \cmark & 1{,}886 & 2{,}397 \\
		 & \texttt{c4.ser} & \greencmark &  & \cmark &  & \cmark & \cmark & \cmark & 4{,}336 & 7{,}193 \\
		 & \texttt{c5.ser} & \xmark &  & \cmark &  & \cmark & \cmark & \cmark & 43{,}694 & 43{,}735 \\
		 & \texttt{c6.ser} & \xmark &  & \cmark &  & \cmark & \cmark & \cmark & 629 & 698 \\
		 & \texttt{c7.ser} & \xmark &  & \cmark &  & \cmark & \cmark & \cmark & 797 & 875 \\
		 & \texttt{c8.ser} & \greencmark &  & \cmark &  & \cmark & \cmark & \cmark & 4{,}357 & 8{,}931 \\
		\midrule
		\multirow{5}{=}{Circular increment} & \texttt{d1.ser} & \greencmark & \cmark & \cmark & \cmark &  & \cmark &   & 2{,}391 & 5{,}373 \\
		 & \texttt{d2.ser} & \xmark & \cmark &        & \cmark &  &   \cmark &   & 628 & 731 \\
		 & \texttt{d3.ser} & \greencmark & \cmark & \cmark & \cmark &  &  \cmark &   & 2{,}642 & 10{,}266 \\
		 & \texttt{d4.ser} & \greencmark & \cmark & \cmark & \cmark &  &     \cmark &   & 5{,}604 & 22{,}249 \\
		 & \texttt{d5.ser} & \xmark & \cmark &        &  &  & \cmark &   & 495 & 554 \\
		\midrule
		\multirow{7}{=}{Concurrency \& locking loops} & \texttt{e1.ser} & \greencmark &  & \cmark &  &  & \cmark &   & 351 & 502 \\
		 & \texttt{e2.ser} & \xmark & \cmark & \cmark &  & \cmark & \cmark & \cmark & \texttt{TIMEOUT} & \texttt{TIMEOUT} \\
		 & \texttt{e3.ser} & \xmark & \cmark & \cmark &  & \cmark &   \cmark & \cmark & 24{,}899 & 25{,}039 \\
		 & \texttt{e4.ser} & \xmark & \cmark & \cmark &  &  \cmark &   \cmark & \cmark & 273{,}062 & 273{,}351 \\
		 & \texttt{e5.ser} & \greencmark & \cmark & \cmark & \cmark &  & \cmark &   & 2 & 55 \\
		 & \texttt{e6.ser} & \greencmark & \cmark & \cmark & \cmark &  & \cmark &   & 10 & 114 \\
		 & \texttt{e7.ser} & \greencmark &  & \cmark &  &  &   \cmark &   & 299 & 444 \\
		\midrule
		\multirow{9}{=}{Non-determinism} & \texttt{f1.ser} & \greencmark & \cmark &    \cmark    & \cmark &  & \cmark &   & 388 & 494 \\
		 & \texttt{f2.ser} & \xmark & \cmark &   \cmark     & \cmark &  & \cmark &   & 612 & 676 \\
		 & \texttt{f3.ser} & \xmark &  &        &  & \cmark &   \cmark & \cmark & 653 & 716 \\
		 & \texttt{f4.ser} & \greencmark &  &     \cmark   &  & \cmark & \cmark & \cmark & 1{,}626 & 9{,}515 \\
		 & \texttt{f5.ser} & \greencmark & \cmark &        & \cmark &  &       &   & 7{,}401 & 11{,}301 \\
		 & \texttt{f6.ser} & \xmark & \cmark &        & \cmark &  & \cmark &   & 646 & 830 \\
		 & \texttt{f7.ser} & \xmark & \cmark &        & \cmark &  &  \cmark &   & 400 & 427 \\
		 & \texttt{f8.ser} & \xmark & \cmark &        & \cmark &  &   \cmark &   & 773 & 802 \\
		 & \texttt{f9.ser} & \greencmark & \cmark &        & \cmark &  &  \cmark &   & 10 & 94 \\
		\midrule
		\multirow{7}{=}{Network \& system protocols} & \texttt{g1.ser} & \xmark & \cmark & \cmark &  & \cmark & \cmark & \cmark & 59{,}312 & 74{,}539 \\
		 & \texttt{g2.ser} & \greencmark & \cmark & \cmark &  & \cmark & \cmark & \cmark & \texttt{TIMEOUT} & \texttt{TIMEOUT} \\
		 & \texttt{g3.ser} & \xmark & \cmark & \cmark & \cmark & \cmark & \cmark & \cmark & 20{,}557 & 20{,}954 \\
		 & \texttt{g4.ser} & \xmark & \cmark & \cmark & \cmark & \cmark & \cmark & \cmark & 6{,}859 & 7{,}047 \\
		 & \texttt{g5.ser} & \greencmark & \cmark & \cmark & \cmark & \cmark &   \cmark & \cmark & 3{,}047 & 12{,}324 \\
		 & \texttt{g6.ser} & \xmark & \cmark &        & \cmark & \cmark & \cmark &   & 8{,}193 & 8{,}285 \\
		 & \texttt{g7.ser} & \greencmark & \cmark &        & \cmark & \cmark &       &   & 6{,}886 & 252{,}752 \\
		\midrule
\bottomrule
	\end{tabular*}
\end{table}

	\caption{Overview of our benchmarks (\texttt{TIMEOUT} is $500$ seconds).}
	\label{tab:benchmarks-all}
\end{table}

\subsection{Optimization Analysis}
\label{subsec:optimization-results}

%Next of our four optimizations, and analyzed their effect on the overall runtime and space resources.
%
%All experiments were run with a \texttt{TIMEOUT} value of $150$ seconds.

\subsubsection{Runtime optimization.}

We ran all benchmarks with each of the following six optimization configurations: 
(i) without any optimization (marked [\texttt{\textbf{\text{-}\text{-}\text{-}\text{-}}}] in Fig.~\ref{fig:timeout_cumulative_solved_log}); (ii) with bidirectional slicing (marked [\texttt{\textbf{\text{B}\text{-}\text{-}\text{-}}}]); (iii) with redundant constraint elimination (marked [\texttt{\textbf{\text{-}\text{R}\text{-}\text{-}}}]); (iv) with generation of fewer constraints (marked [\texttt{\textbf{\text{-}\text{-}\text{G}\text{-}}}]);
(v) with strategic Kleene elimination (marked [\texttt{\textbf{\text{-}\text{-}\text{-}\text{S}}}]);
and finally, (vi) with all optimizations altogether (marked [\texttt{\textbf{\text{B}\text{R}\text{G}\text{S}}}]).
The results of the aggregated runtimes are presented in Fig.~\ref{fig:timeout_cumulative_solved_log} and show that over $28\%$ more benchmarks are solved when using all optimizations compared to running without any optimization.
Not surprisingly, the best configuration is the one with all optimizations on. 
Furthermore, the best single-optimization configurations with regard to runtime are [\texttt{\textbf{\text{-}\text{-}\text{G}\text{-}}}] and [\texttt{\textbf{\text{B}\text{-}\text{-}\text{-}}}], solving over $74\%$ and $72\%$ of the benchmarks respectively. 
We also note that the two remaining optimizations, [\texttt{\textbf{\text{-}\text{R}\text{-}\text{-}}}] and [\texttt{\textbf{\text{-}\text{-}\text{-}\text{S}}}], performed slightly worse (although not significantly) than without the optimizations when counting overall timeouts.
However, when analyzing the redundant constraint optimization (  [\texttt{\textbf{\text{-}\text{R}\text{-}\text{-}}}]), we identified instances in which it still \textit{strictly} improves runtime.
For example, the optimization affords a speedup of between $72.2\%$ and $85.2\%$ for benchmarks \texttt{a3.ser} and \texttt{a7.ser},  when compared to the baseline.

\subsubsection{Space optimization.}
Our optimizations also reduce the space complexity of the two main components --- the Petri net and the semilinear set.

\noindent
(1) \textbf{Petri net.} Bidirectional slicing (Fig.~\ref{fig:petri_size_reduction}) 
eliminates the average number of places \textit{by roughly half} --- from $23.91$ down to $12.79$. This optimization proved even more effective on transitions, \textit{eliminating about two-thirds}: from $37.3$ down to $12.61$. 
%Note that the pre-pruning averages were computed over $47$ nets (one per benchmark), whereas the post-pruning averages span $224$ nets, since each pre-pruning net gives rise to a separate pruned net per each disjunct.  
%\\

\noindent
(2) \textbf{Semilinear sets.} We ran an ablation experiment in which we compared all optimizations against runs where each of the three semilinear optimizations (i.e., all but PN slicing) was disabled. The redundant-constraint elimination (with a negated effect in [\texttt{\textbf{\text{B}\textcolor{red}{-}\text{G}\text{S}}}]) and the fewer-constraint generation elimination (with a negated effect in [\texttt{\textbf{\text{B}\text{R}\textcolor{red}{-}\text{S}}}]) \textit{drastically} reduced component counts, with the latter being especially effective in reducing the \textit{maximal} number of components to be up to $\mathbf{931\times}$ smaller, and the \textit{average} number of components to be up to $\mathbf{223\times}$ smaller (Table~\ref{tab:semilinear-size-reduction}), when compared to the baseline executions configured with all optimizations on ([\texttt{\textbf{\text{B}\text{R}\text{G}\text{S}}}]). 
For fairness, we measured only benchmarks completed under all configurations, excluding cases where semilinear sets exploded beyond $2^{30}$ components and timed out. Thus, our reported improvements actually \textit{understate} the true impact of these optimizations on memory. Such blowups, %especially common in state-machine benchmarks with NS loops, 
render even simple programs intractable without these optimizations.

\begin{table}[!htbp]
	\centering
		\includegraphics[width=0.68\linewidth]{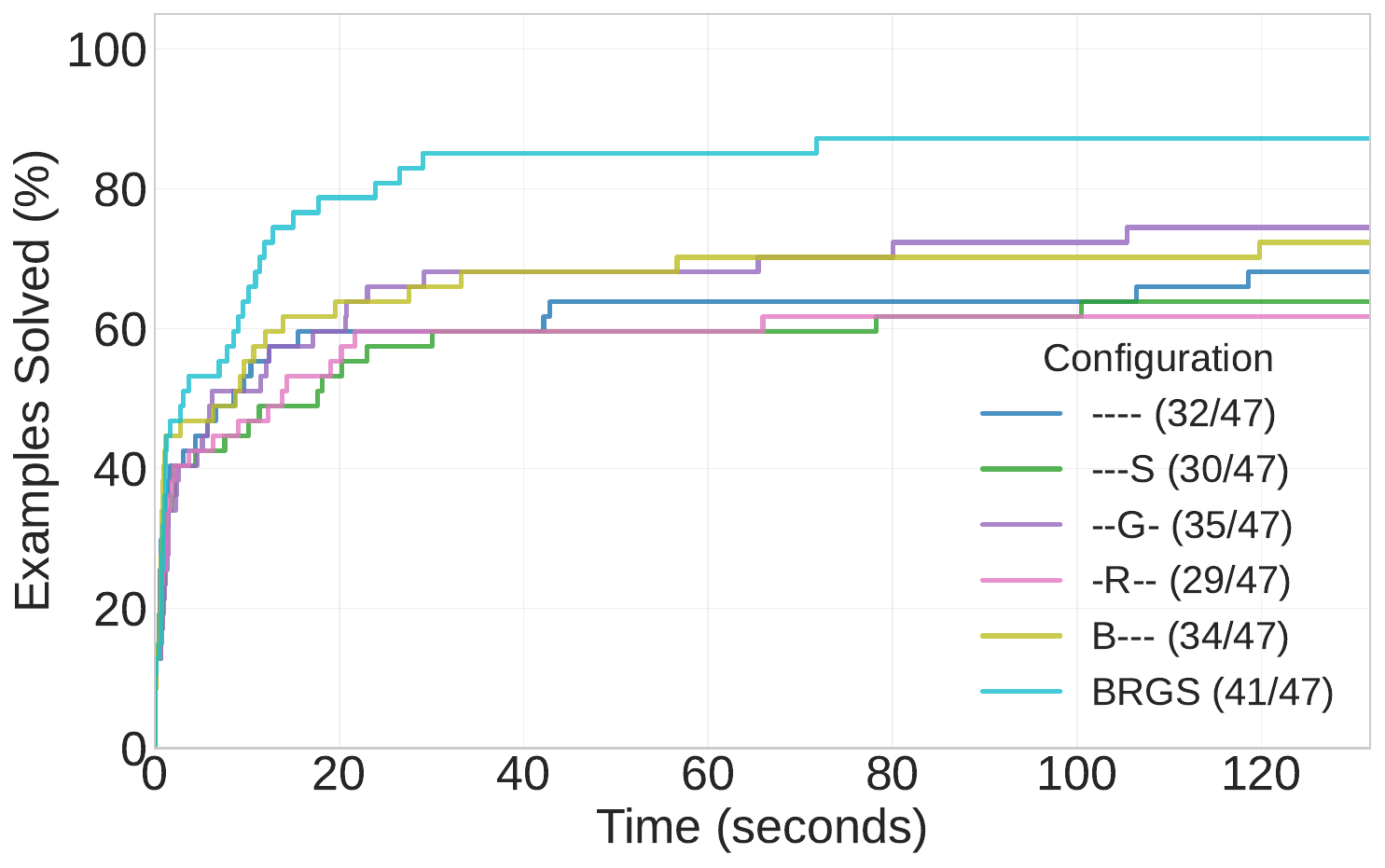}
	\captionof{figure}{Solved instances (\texttt{TIMEOUT} is $150$ seconds).}
	\label{fig:timeout_cumulative_solved_log}
\end{table}

\begin{table}[!htbp]
	\centering
	\includegraphics[width=0.68\linewidth]{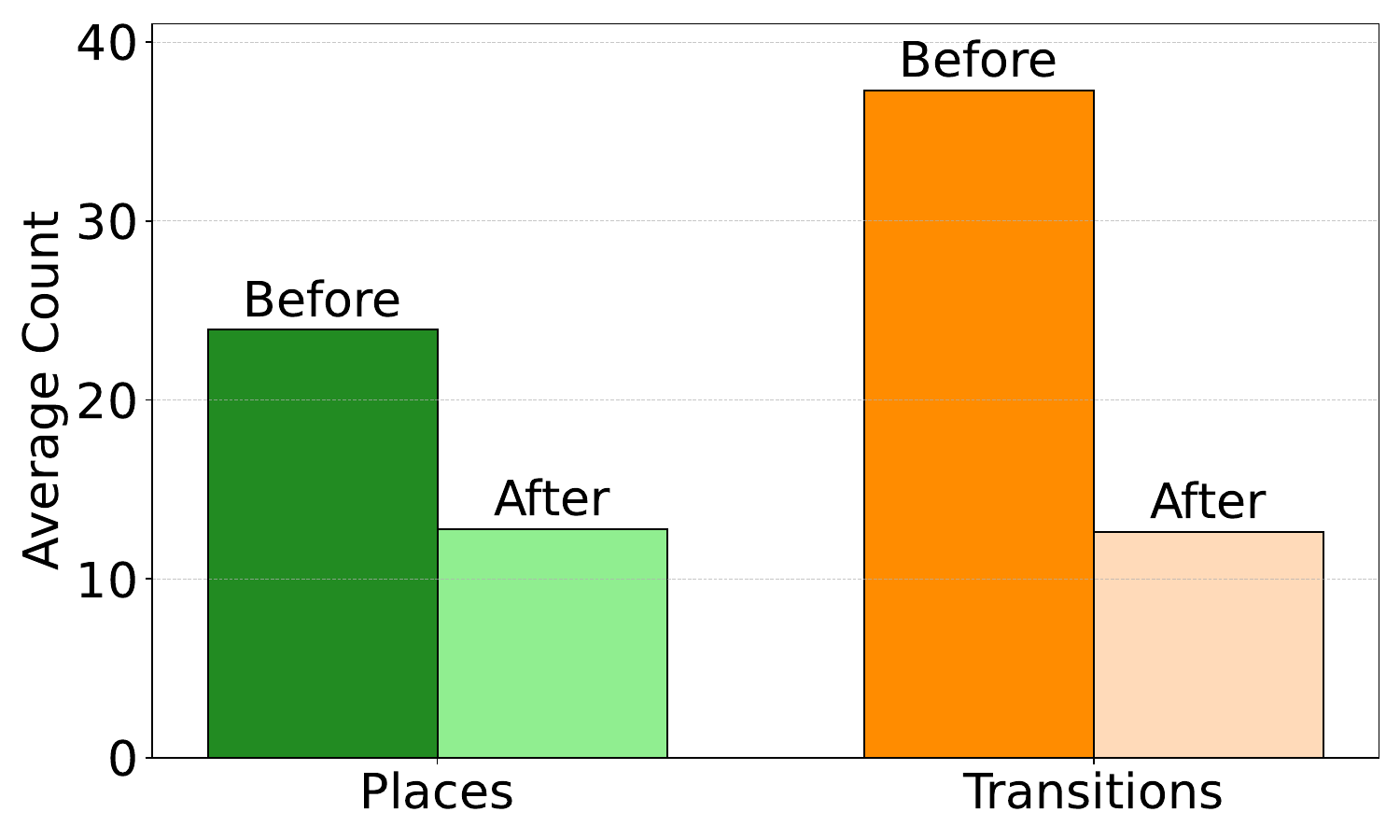}
	\captionof{figure}{PN size reduction via slicing.}
	%(timeout 150 seconds)
	%.}
	\label{fig:petri_size_reduction}
\end{table}

%\begin{table}[!htbp]
%\centering
%\begin{minipage}[t]{0.48\linewidth}
%	\centering
%	\input{tables/average_and_mean_values_of_big_table_simplified_compact.tex}
%	\caption{Runtime for generated certificates (\texttt{total} also includes validation).}
%	%			Values are rounded to the nearest integer, to reduce clutter. 
%	%The \textit{total} column also includes the time for validation.}
%% NOTE - this excludes the 2 non-terminating runs (artifact also includes TIMEOUTs I think)
%\label{tab:stats-summary}
%\end{minipage}\hfill
%\begin{minipage}[t]{0.48\linewidth}
%\centering
%\input{tables/semilinear_size_reduction_compact.tex}
%\caption{Semilinear set size reduction via optimizations (baseline is [\texttt{\text{B}\text{R}\text{G}\text{S}}]).}
%\label{tab:semilinear-size-reduction}
%\end{minipage}
%\end{table}

\begin{table}[!htbp]
	\centering
	\begin{table}[H]
	\centering
	\begin{tabular}{l c c c c}
		\toprule
		& \multicolumn{2}{c}{components} & \multicolumn{2}{c}{periods/component} \\
		\cmidrule(lr){2-3} \cmidrule(lr){4-5}
		& average & max & average & max \\
		\midrule
	\texttt{\text{B}\text{R}\text{G}\text{S}} & 2.91 & 22 & 1.33 & 4 \\
	\texttt{\text{B}\textcolor{red}{-}\text{G}\text{S}} & 8.79 & 194 & \textbf{1.64} & 11 \\
	\texttt{\text{B}\text{R}\textcolor{red}{-}\text{S}} & \textbf{651.41} & \textbf{20{,}484} & 1.28 & \textbf{15} \\
	\texttt{\text{B}\text{R}\text{G}\textcolor{red}{-}} & 2.91 & 22 & 1.35 & 4 \\
  \bottomrule
	\end{tabular}
\end{table}

	\caption{Semilinear set size reduction via optimizations (baseline is [\texttt{\text{B}\text{R}\text{G}\text{S}}]).}
	\label{tab:semilinear-size-reduction}
\end{table}

\clearpage

\section{Petri Net Model Checking}
\label{appendix:smpt}

%\smallskip
%\noindent
\subsection{Petri Nets and VAS(S) Reachability}
%\label{sec:related:petri}
	
	Our work builds on both theoretical and practical advances in 
	Petri net research, and specifically, \textit{Petri net model checking}~\cite{Mu89,Es96,Re12,EsNi24,DuLaSr25,HuScReAb17,AmBeDo14,PiHaRe20,Wo18}.
	Moreover, numerous studies (including~\cite{LiWaChSuZh02,Zu91,AkChDaJaSa17,AnPePe13,AnBeCh16,EtChRo16,AnBeCh14}, among others) have explored \textit{specific classes of Petri nets}, providing deeper insights into their structure, expressiveness, and verification challenges.
	%
	%The undecidability we prove 
	%for equivalence of interleavings stems from Hack’s seminal result~\cite{Ha76, 
		%HaThesis76} showing the undecidability of reachability set equivalence for 
	%Petri Nets. This undecidability originates in a series of reductions from 
	%Hilbert’s 10th problem, specifically the possibility of determining whether 
	%there exists an integer root for Diophantine equations, a problem that was 
	%later proven undecidable by Matijasēvič~\cite{Ma70}.
	%%
	%Jančar~\cite{Ja95} later provided an alternative proof to this undecidability 
	%result, by showing that Petri Nets can simulate universal (and thus 
	%undecidable) 2-counter Minsky machines~\cite{Mi67}. In addition, Jančar further 
	%strengthened the original result by proving that undecidability holds even for 
	%Petri nets with just five unbounded places.
	%
	%Furthermore, our approach also builds on 

	\medskip
	While deciding reachability in a bounded Petri net may be straightforward (through exhaustive enumeration), the \textit{unbounded} case is highly nontrivial and was first solved by 
	Mayr~\cite{Ma81}, with subsequent improvements by Kosaraju~\cite{Ko82} and 
	Lambert~\cite{La92}. Recent work~\cite{CzWo22,Le22} has also established that this 
	problem is \texttt{Ackermann}-complete.
	These theoretical advances in Petri net reachability have given rise to a 
	plethora of practical tools, including \texttt{KReach}~\cite{DiLa20}, 
	\texttt{DICER}~\cite{XiZhLi21}, \texttt{MARCIE}~\cite{HeRoSc13}, and others. 
	Our implementation leverages \texttt{SMPT} (\emph{Satisfiability Modulo Petri Nets})~\cite{AmDa23}, a state-of-the-art model checker that combines \texttt{SMT}-solving with structural invariants~\cite{AmBeDa21,amat2022polyhedral}.
%	 (see Appendix~\ref{appendix:smpt}). 
	%However,
	%other PN model checkers can be used as well.

\subsection{SMPT}

\texttt{SMPT} incorporates a portfolio of symbolic model checking techniques --- including bounded model checking (BMC)~\cite{BiCiClZh99}, state equation reasoning~\cite{Mu77}, $k$-induction~\cite{BeDaWe18,ShSiSt20}, property directed reachability (PDR)~\cite{Br11,AmDaHu22,ViGu14,BjGa15,BlLa23,CiGrMoTo14,CiGrMoTo16,DuRo17}, and random state space exploration. It acts as a front-end to an \texttt{SMT} solver (\texttt{Z3}~\cite{DeBj08}, although other solvers could also be used, e.g., \texttt{cvc5}~\cite{BaCoDeHaJoKiReTi11,BaBaBrKrLaMaMoMoNiNo22}, \texttt{MathSAT}~\cite{CiGrScSe13}, etc.), while also incorporating domain-specific knowledge from Petri net theory, such as invariants and structural properties. \texttt{SMPT} has also participated in the last five editions of the \textit{Model Checking Contest} (MCC), an international competition for model-checking tools. In its most recent participation, it achieved a bronze medal and a confidence level score of  $\texttt{100\%}$, indicating it never returned an incorrect verdict~\cite{mcc:2025}.

\medskip
\texttt{SMPT} distinguishes itself from other tools in two ways that are particularly relevant to our setting and motivate its adoption. First, to the best of our knowledge, it is the only model checker for Petri nets that provides a proof of its verdict, regardless of the underlying verification technique. This means it either produces a witness trace when the property is reachable, or, more interestingly, a certificate of non-reachability~\cite{AmDaHu22} when the property is found to be unreachable.
The second distinguishing feature relates to our ongoing work on polyhedral reductions~\cite{AmBeDa21,amat2022polyhedral}, as elaborated in~\Cref{sec:discussion}.

%\newpage

\end{document}